\renewcommand\paragraph{\@startsection{paragraph}{4}{\z@}%
                                      {\parskip}
                                      {-1em}%
                                      {\normalfont\normalsize\bfseries}}
\newtheorem{theorem}{Theorem}
\newtheorem{lemma}[theorem]{Lemma}
\newtheorem{cor}[theorem]{Corollary}
\newtheorem{definition}{Definition}
\newtheorem{remark}{Remark}
\newtheorem{fact}{Fact}
\newcommand{\norm}[1]{\ensuremath{\left\lVert #1 \right\rVert}}
\newcommand{\eps}{\varepsilon}
\newcommand{\matel}[3]{\langle #1 | #2 | #3\rangle}
\newcommand{\clA}{\mathcal{A}}
\newcommand{\clB}{\mathcal{B}}
\newcommand{\clE}{\mathcal{E}}
\newcommand{\clO}{\mathcal{O}}
\newcommand{\clP}{\mathcal{P}}
\newcommand{\clQ}{\mathcal{Q}}
\newcommand{\clR}{\mathcal{R}}
\newcommand{\clS}{\mathcal{S}}
\newcommand{\clT}{\mathcal{T}}
\newcommand{\clU}{\mathcal{U}}
\newcommand{\clV}{\mathcal{V}}
\newcommand{\clX}{\mathcal{X}}
\newcommand{\clY}{\mathcal{Y}}
\newcommand{\clZ}{\mathcal{Z}}
\newcommand{\sfB}{\mathsf{B}}
\newcommand{\sfD}{\mathsf{D}}
\newcommand{\sfF}{\mathsf{F}}
\newcommand{\sfH}{\mathsf{H}}
\newcommand{\sfI}{\mathsf{I}}
\newcommand{\sfP}{\mathsf{P}}
\newcommand{\sfV}{\mathsf{V}}
\newcommand{\Id}{\mathbbm{1}}
\newcommand{\bbF}{\mathbb{F}}
\newcommand{\bbR}{\mathbb{R}}
\newcommand{\bbE}{\mathop{\mathbb{E}}}
\newcommand{\bbP}{\mathbb{P}}
\newcommand{\A}{\mathrm{A}}
\newcommand{\B}{\mathrm{B}}
\newcommand{\Q}{\mathrm{Q}}
\newcommand{\V}{\mathrm{V}}
\newcommand{\tB}{\widetilde{B}}
\newcommand{\tE}{\widetilde{E}}
\newcommand{\tK}{\widetilde{K}}
\newcommand{\tX}{\widetilde{X}}
\newcommand{\tY}{\widetilde{Y}}
\newcommand{\vph}{\varphi}
\newcommand{\ketbra}[2]{\ket{#1}\!\!\bra{#2}}
\newcommand{\state}[1]{\ketbra{#1}{#1}}
\newcommand{\inprod}[2]{\langle #1 | #2\rangle}
\newcommand{\inner}[2]{\left\langle #1, #2\right\rangle}
\newcommand{\IP}{\mathrm{IP}}
\newcommand{\Tr}{\mathrm{Tr}}
\newcommand{\MS}{\mathrm{MS}}
\newcommand{\MSe}{\mathrm{MSE}}
\newcommand{\supp}{\mathrm{supp}}
\newcommand{\suc}{\mathrm{suc}}
\newcommand{\bad}{\mathrm{Bad}}
\newcommand{\err}{\mathrm{err}}
\newcommand{\eff}{\mathrm{eff}^*}
\newcommand{\polylog}{\mathrm{polylog}}
\title{A direct product theorem for quantum communication complexity with applications to device-independent cryptography}
\author{Rahul Jain \thanks{Centre for Quantum Technologies and Department of Computer Science, National University of Singapore and MajuLab, UMI 3654, Singapore. Email:~\tt{rahul@comp.nus.edu.sg}} \and
Srijita Kundu \thanks{Institute for Quantum Computing, University of Waterloo. Email:~\tt{srijita.kundu@uwaterloo.ca}}}
\date{}
\begin{document}
\maketitle
\vspace{-1.5cm}
\begin{abstract}
We give a direct product theorem for the entanglement-assisted interactive quantum communication complexity of an $l$-player predicate $\sfV$. In particular we show that for a distribution $p$ that is product across the input sets of the $l$ players, the success probability of any entanglement-assisted quantum communication protocol for computing $n$ copies of $\sfV$, whose communication is $o(\log(\eff(\sfV,p))\cdot n)$, goes down exponentially in $n$. Here $\eff(\sfV, p)$ is a distributional version of the quantum efficiency or partition bound introduced by Laplante, Lerays and Roland (2012), which is a lower bound on the distributional quantum communication complexity of computing a single copy of $\sfV$ with respect to $p$. For a two-input boolean function $f$, the best result for interactive quantum communication complexity known previously was due to Sherstov (2018), who showed a direct product theorem in terms of the generalized discrepancy, which is a lower bound on communication. Our lower bound on non-distributional communication complexity is in terms of $\max_{\text{product }p}\eff(\sfV,p)$, and there is no known relationship between this and the generalized discrepancy. But we define a distributional version of the generalized discrepancy bound and can show that for a given $p$, $\eff(\sfV,p)$ upper bounds it. Moreover, unlike Sherstov's result, our result works for two-input functions or relations whose outputs are non-boolean as well, and is a strong direct product theorem for functions or relations whose quantum communication complexity is characterized by $\eff(\sfV_f,p)$ for a product $p$.

Applying our direct product theorem for small communication and techniques related to $\eff$, we show that it is possible to do device-independent (DI) quantum cryptography without the assumption that devices do not leak any information. First, we analyze the parallel DI quantum key distribution protocol given by Jain, Miller and Shi (2020), and show that when the protocol is carried out with devices that are compatible with $n$ copies of the Magic Square game, it is possible to extract $\Omega(n)$ bits of key from it, even in the presence of $O(n)$ bits of leakage. Second, we show that it is possible to do sequential versions of the Jain, Miller and Shi protocol, which give a better key rate for QKD with leakage, and let us do sequential DI randomness expansion with leakage (it is not known how to do parallel DI randomness expansion even without leakage). Third, we show that proofs of quantumness with two entangled provers are resistant to leakage, i.e., classical players who communicate $O(n)$ bits with each other cannot convince the verifier that they share entanglement.
\end{abstract}

\section{Introduction}\label{sec:intro}
Communication complexity is an important model of computation with connections to many parts of theoretical computer science \cite{KN96}. In this paper, we consider the communication complexity of computing a predicate $\sfV$ on $(\clA^1\times\ldots\times\clA^l)\times(\clX^1\times\ldots\times\clX^l)$ by $l (\geq 2)$ players who receive inputs $x^1\ldots x^l \in \clX^1\times\ldots\times\clX^l$, and after communicating interactively, are required to produce outputs $a^1\ldots a^l$ such that $\sfV(a^1\ldots a^l, x^1\ldots x^l)$ is satisfied. The $l$ players cooperate and wish to minimize the total number of bits (in the classical model) or qubits (in the quantum model) communicated. The communication complexity of predicates generalizes the communication complexity of (total or partial) functions and relations that are most often considered in the literature.

In any model of computation, a fundamental question is: if we know how to do one copy of a task, what is the best way to do $n$ independent copies of it? One possible way is to simply do each copy independently; if we have an algorithm that successfully does a single copy of the task with probability $1-\eps$, the success probability of this product strategy is $(1-\eps)^n$ and its cost is $n$ times the cost of doing a single copy. For many tasks, this is the best one can do, and a direct product theorem for the task proves so. That is, a direct product theorem proves that any protocol for doing $n$ copies of the task that has cost at most $cn$, where $c$ is some lower bound on the cost of doing one copy with success probability less than 1, has success probability exponentially small in $n$. When $c$ is the exact cost of doing a single copy of the task, we call such a result a strong direct product theorem.

Direct product theorems are known in a number of computational models. In classical communication complexity, there is a long line of works showing direct product and weaker direct sum theorems (which show that the success probability of a protocol that uses $cn$ resources is at most constant, instead of exponentially small) in the two-party setting \cite{Raz92,CSWY01,BJKS02,JRS03,KSW07,VW08,LSS08,HJMR10,BR11,JY12,BBCR13,BRWY13a,BRWY13b,JPY16}.

For quantum communication, a direct sum theorem for one-way quantum communication for general functions was shown by \cite{JRS03}. \cite{BRW08} showed a direct product theorem for functions whose one-way quantum communication complexity is characterized by VC dimension, and \cite{JK20} showed a direct product theorem for one-way quantum communication complexity of general functions. In the interactive quantum setting however, direct product theorems are known only for special classes of functions, for example \cite{KSW07} showed a direct product theorem for symmetric functions. \cite{She12} showed a direct product theorem for the generalized discrepancy method, which is one of the strongest lower bound techniques on quantum communication complexity --- this gives a strong direct product theorem for functions whose quantum communication complexity is exactly characterized by the generalized discrepancy method.

Direct product theorems in communication are related to parallel repetition theorems for non-local games. A non-local game with $l$ players is defined by a predicate $\sfV$ and a distribution $p$. The players are given inputs $x^1\ldots x^l$ from distribution $p$ on $\clX^1\times\ldots\times\clX^l$, and they are required to produce outputs $a^1\ldots a^l$ in $\clA^1\times\ldots\times\clA^l$ so that $\sfV(a^1\ldots a^l,x^1\ldots x^l)$ is satisfied, without communicating. In the classical model, the players are allowed to share randomness, and in the quantum model they are allowed to share entanglement. The maximum winning probability of the game over all strategies is called the value of the game, which may be quantum or classical. A parallel repetition theorem shows that the value of $n$ independent instances of a non-local game is $(1-\eps)^{\Omega(n)}$, if the value a single instance is $(1-\eps)$.

A parallel repetition theorem for the classical value of general two-player non-local games was first shown by Raz \cite{Raz95}, and the proof was subsequently simplified by Holenstein \cite{Hol09}. A strong parallel repetition theorem for the quantum value of a general two-player non-local game is not known. Parallel repetition theorems were shown for special classes of two-player games such as XOR games \cite{CSUU08}, unique games \cite{KRT10} and projection games \cite{DSV15}. When the type of game is not restricted but the input distribution is, parallel repetition theorems have been shown under product distributions \cite{JPY14} and anchored distributions \cite{BVY17} --- both of these results can be extended to $l$ players. For general two-player games, the best current result is due to Yuen \cite{Yuen16}, which shows that the quantum value of $n$ parallel instances of a general game goes down polynomially in $n$, if the quantum value of the original game is strictly less than 1. The situation for more than two players is much less understood.

\paragraph{Device-independent cryptography.} Quantum cryptography lets us do a number of tasks with information theoretic security, i.e., security without any computational assumptions, that are not possible classically. Two basic examples are quantum key distribution (QKD) \cite{BB84} and randomness expansion (RE). In a key distribution scenario, two honest parties Alice and Bob want to share a key, i.e., a uniformly random string of a given length, which is secret from a third party eavesdropper Eve. If Alice and Bob have access to secure private randomness and an authenticated classical channel, it is possible to do the key distribution task quantumly with information theoretic security, but not classically. In randomness expansion, depending on setting, a single party holds a quantum device or quantum devices, as well as some private randomness, and wishes to get more randomness that is secure against Eve using these. In a conventional security proof for these tasks (or any other quantum cryptographic protocol), one needs to have a complete description of the quantum devices, i.e., the states and measurements used by Alice and Bob. However, in practice quantum devices are often not fully characterized, and protocols that rely on complete characterization of quantum devices often have loopholes.

A way around this problem is the framework of device-independent cryptography, which tries to give quantum protocols for cryptographic tasks that are secure even when the devices used by the honest parties are not fully characterized, and in fact can be arbitrarily manipulated by dishonest parties. All known device-independent protocols with information theoretic security use non-local games and rely on the property of self-testing or rigidity displayed by some non-local games. Suppose we play a non-local game with devices implementing some unknown state and measurements, and in fact even the dimension of the systems are unspecified. If these state and measurements regardless achieve a winning probability for the game that is close to its optimal winning probability, then self-testing tells us that the state and measurements are close to the ideal state and measurements for that game, up to trivial isometries. For device-independent QKD (DIQKD), this means in particular that the measurement outputs of the devices given the inputs are random, i.e., they cannot be predicted by a third party even if they have access to the inputs used. This lets us use the outputs of the devices to produce a secret key.

A number of protocols and security proofs for DIQKD and DIRE have been given over the years, in the sequential \cite{PAB+09, ADF+18, VV19} as well as parallel setting \cite{JMS17,Vid17}. Aside from assuming that Alice and Bob's devices are modelled by quantum mechanics however, all these proofs require the assumption that Alice and Bob's devices do not leak any information, i.e., do not communicate with each other or with Eve, unbeknownst to Alice and Bob. Although there have been some works studying non-local games in the presence of communication \cite{TZB+20,TZWP20}, and an argument showing device-independent QKD may be possible in the presence of a specific model of information leakage in \cite{SPM13}, none of these approaches have been developed into a full-fledged proof of security when there is leakage.

\paragraph{Proofs of quantumness.} A proof of quantumness is a protocol between a classical verifier and a prover or provers who claim to be able to do quantum operations, that they indeed can do quantum operations. Proofs of quantumness fall in the device-independent framework by default, since the verifier is entirely classical, and does not trust the operations the provers claim they are doing. It has long been folklore that two provers who cannot communicate with each other can prove that they are quantum, i.e., they share entanglement, to a verifier. This is simply because there are non-local games whose quantum values are higher than their classical values: the verifier simply takes on the role of the referee of the non-local game, and accepts the players' `proof' if their outputs satisfy the winning condition of the game with the inputs provided. The soundmess of this protocol can easily be increased by parallel repetition. This simple protocol has a number of advantages, namely that it is plausibly implementable with current quantum devices, does not require any computational assumptions on the provers, and the verification procedure is efficient. However, ensuring that the provers indeed do not communicate in such a setting is challenging; even if we spatially separate the two provers, the verifier has to be able to communicate back and forth with them, and in the time that this happens, the players may also be able to communicate with each other.

For this reason, more recently there has been more interest in proofs of quantumness with a single prover, with a number of methods for doing this proposed. The sampling method \cite{AA10, BJS10, AAB+19} simply requires the prover to sample from a distribution that was conjectured to be hard for a polynomial-time classical prover (this conjecture is non-standard, and has come under question \cite{LLL+21}); this method is feasible with current quantum devices, but the verifier needs to perform an exponential-time computation to check that the output is correct. Recently, a non-interactive proof of quantumness in the quantum random oracle model (QROM) was proposed \cite{YZ22}; this method has the advantage of being provably hard for polynomial-time classical provers, and also efficiently (and publicly) verifiable, but it works in the QROM rather than the real world, and also not with near-term quantum devices. A number of interactive protocols to prove quantumness based on more standard computational assumptions (such as the hardness of the learning with errors (LWE) problem), which also have efficient verification procedures, have been proposed \cite{BCM+18, BKVV20, KCVY21, KLVY22}; these protocols are not quite feasible with current quantum devices, but the devices required are more practical than those in \cite{YZ22}. Interestingly, some of these protocols \cite{KCVY21, KLVY22} are based on proofs of quantumness from non-local games. The protocol in \cite{KLVY22} can be thought of as a proof of quantumness with two provers adapted to work with one prover; the single prover is forced to act as the two provers would, by sending the inputs in two separate rounds, and using fully homomorphic encryption (which requires the LWE hardness assumption to construct).

\subsection{Our results}
\subsubsection{Direct product theorem}
Let $\sfV(a^1\ldots a^l,x^1\ldots, x^l)$ be a predicate on $(\clA^1\times\ldots\times\clA^l)\times(\clX^1\times\ldots\times\clX^l)$. We shall use $\sfV^n(a^1_1\ldots a^l_1\ldots a^1_n\ldots a^l_n,x^1_1\ldots x^l_1\ldots x^1_n\ldots x^l_n)$ to denote $n$ independent copies of $\sfV$, i.e., the predicate which is satisfied when all $n$ $(a^1_i\ldots a^l_i, x^1_i\ldots x^l_i)$-s satisfy $\sfV$.

For a probability distribution $p$ on $\clX^1\times\ldots\times\clX^l$, a (quantum) communication protocol $\clP$ between $l$ parties that takes inputs from $\clX^1\times\ldots\times\clX^l$ and produces outputs in $\clA^1\times\ldots\times\clA^l$, produces a conditional probability distribution on $\clA^1\times\ldots\times\clA^l$ conditioned on $\clX^1\times\ldots\times\clX^l$, and along with $p$ there is an induced distribution on $(\clA^1\times\ldots\times\clA^l)\times(\clX^1\times\ldots\clX^l)$. Let $\suc(p, \sfV, \clP)$ be the probability that the predicate $\sfV$ is satisfied according to this distribution.

Let $\eff_\eps(\sfV,p)$ denote the distributional quantum partition bound with error $\eps$ for $\sfV$ with respect to input distribution $p$, which we shall define formally in Section \ref{sec:part-bound}. $\eff_\eps(\sfV,p)$ is a lower bound on the quantum communication complexity of $\sfV$. Let $\omega^*(G(p,\sfV))$ denote the quantum value of the non-local game $G = (p,\clX^1\times\ldots\times\clX^l,\clA^1\times\ldots\times\clA^l,\sfV)$.

With this notation, our direct product theorem is stated below.
\begin{restatable}{theorem}{dpt}
\label{thm:dpt}
For any $\eps, \zeta > 0$, any predicate $\sfV$ on $(\clA^1\times\ldots\times\clA^l)\times(\clX^1\times\ldots\times\clX^l)$ and any product probability distribution $p$ on $\clX^1\times\ldots\times\clX^l$, if $\clP$ is an interactive entanglement-assisted quantum communication protocol between $l$ parties which has total communication $cn$. 
\begin{enumerate}[(i)]
\item If $c < 1$, then
\[ \suc(p^n, \sfV^n, \clP) \leq \left(1-\frac{\nu}{2} + \sqrt{2lc}\right)^{\Omega\left(\nu^2n/\left(l^2\cdot\log(|\clA^1|\cdot\ldots\cdot|\clA^l|)\right)\right)}\]
where $\nu=1-\omega^*(G(p,\sfV))$.
\item If $1 \leq c = \delta\cdot\frac{\zeta^2}{l^3}\eff_{\eps+\zeta}(\sfV,p)$ for small enough $\delta$, then
\[\suc(p^n, \sfV^n,\clP) \leq (1-\eps)^{\Omega\left(n/\left(\log(|\clA^1|\cdot\ldots\cdot|\clA^l|)\right)\right)}.\]
\end{enumerate}
\end{restatable}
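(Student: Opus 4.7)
The strategy follows the standard template for direct product theorems via an efficiency-type lower bound: reduce any $n$-copy protocol $\clP$ with higher-than-claimed success probability to a single-copy protocol whose parameters violate the one-shot bound $\eff_{\eps+\zeta}(\sfV,p)$. Let $T_i$ denote the event that the $i$-th copy is correct and suppose for contradiction that $\suc(p^n, \sfV^n, \clP)$ exceeds the target right-hand side. Since $p^n$ is product across the $n$ coordinates, I would expand $\Pr[T_1 \cap \ldots \cap T_n] = \prod_i \Pr[T_i \mid T_{<i}]$ and use an averaging / chain-rule argument to extract a subset $S \subseteq [n]$ of size $k = \Omega(n / \log(|\clA^1|\cdots|\clA^l|))$ and a coordinate $i \notin S$ such that, after conditioning on $T_S$ and on the other coordinates' inputs, three things hold simultaneously: the marginal input distribution on coordinate $i$ is close to $p$ in relative-entropy sense, the conditional success probability on coordinate $i$ is at least $1 - \eps - \zeta$, and the effective quantum communication about coordinate $i$ is $O(c)$. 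The product structure of $p^n$ is crucial here, so that conditioning on $T_S$ perturbs the input marginal on coordinate $i$ by only a controlled amount.

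Next I would convert this conditioned protocol into an actual single-copy protocol $\clP^*$. Using shared entanglement together with $O(c)$ qubits of extra communication, the $l$ players jointly prepare a state that approximates the post-selected state of $\clP$ on coordinate $i$, with a fresh single-copy input drawn from $p$ inserted in place of the coordinate-$i$ register. This is the quantum analogue of correlated sampling: one converts information-theoretic bounds on how close the conditioned multi-coordinate state is to a tensor product across coordinates into a preparation protocol of comparable communication cost, via a quantum substate / message-compression style argument. Running $\clP$ inside this simulation and reading off the coordinate-$i$ output yields a single-copy protocol of communication $O(c)$, success probability at least $1 - \eps - \zeta$, and inputs distributed close to $p$. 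Applied to case (ii), the definition of $\eff_{\eps+\zeta}(\sfV,p)$ as a lower bound on exactly such single-copy communication closes the contradiction for $c$ below the stated threshold. For case (i), where $c < 1$, the constructed $\clP^*$ is so cheap that it is essentially a non-communicating entanglement-only strategy for the game $G(p,\sfV)$, so its success is bounded by $\omega^*(G(p,\sfV)) = 1-\nu$, and the $4\sqrt{lc}$ slack in the stated bound absorbs the small amount of communication together with the state-preparation error.

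The main obstacle is the quantum embedding / state-preparation step in the second paragraph. Unlike classical direct product proofs, where rejection sampling realizes the conditional distribution with essentially no extra communication, here the $l$ players must jointly prepare a multipartite state close in fidelity to a post-selected state using only shared entanglement plus $O(c)$ qubits, and then the error incurred by substituting a fresh input on coordinate $i$ must still leave success close to $1 - \eps - \zeta$. Making the fidelity accounting tight requires choosing the right divergence --- almost certainly quantum relative entropy --- so that a chain-rule decomposition of the information that the transcript carries about the inputs on $T_S$ telescopes correctly against the product-distribution assumption. Each of the $l$ parties contributes separately to this information budget, and every averaging step loses a factor of $l$; this is the origin of the $l^2$ and $l^3$ factors in the two cases and of the $\sqrt{lc}$ term in case (i). Simultaneously securing properties (a)--(c) at the same coordinate $i$, with error parameters small enough to feed into $\eff_{\eps+\zeta}$, is the technical core of the argument.
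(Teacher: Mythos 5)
The high-level architecture you describe — condition on success in a small set of coordinates, find a good coordinate $i$ via a chain-rule / averaging argument, then convert the conditioned state into a cheap single-copy object — is indeed the Raz--Holenstein framework the paper uses, and your identification of the product structure of $p^n$ and of the per-party information budget as the key ingredients is on target. However, there are two substantive gaps.

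\textbf{First}, your single-copy object is the wrong type. You propose to build a single-copy protocol that uses $O(c)$ qubits of communication and then invoke $\eff_{\eps+\zeta}(\sfV,p)$ ``as a lower bound on exactly such single-copy communication.'' But $\eff_{\eps+\zeta}(\sfV,p)$ is, by definition, the inverse of the maximum not-abort probability of a \emph{zero-communication protocol with aborts}; it is $\log\eff$, via Lemma~\ref{lem:eff-lb}, that lower-bounds communication. The paper's reduction is calibrated exactly to the definition: it produces a zero-communication protocol with aborts in which the players share the conditioned state $\ket{\vph}$ and each party $j$ applies a local measurement $M^j_i$ that succeeds with probability $2^{-O(c^j l^2/\zeta^2)}$. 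The combined not-abort probability $2^{-O(c l^3/\zeta^2)}$ is then pitted directly against $1/\eff_{\eps+\zeta}(\sfV,p)$, which is the inequality that produces the final bound. Your route via a communicating simulation would have to be recast as a two-step argument ($O(c) \geq \Q_{\eps+\zeta}(\sfV,p) \geq \tfrac12\log\widetilde{\eff}_{\eps+\zeta}(\sfV,p)$), which is not what you wrote, and which in any case requires you to realize the state preparation by communication rather than by post-selection.

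\textbf{Second}, and more seriously, you do not identify the actual technical obstacle, which is why the paper proves a new Substate Perturbation Lemma (Lemma~\ref{Imax-qu-close}). In the parallel-repetition case ($c<1$) one can apply local Uhlmann \emph{unitaries} per party and these commute harmlessly, which is also why your description of case~(i) as ``a very cheap communicating protocol'' is slightly off: the constructed strategy is exactly zero-communication and the $4\sqrt{lc}$ is the fidelity loss from the Uhlmann step. But for $c\geq 1$, the state-preparation operators obtained from the Quantum Substate Theorem are measurements that succeed with probability $2^{-O(c^j)}$, and after party~$1$ applies its operator, the resulting state $\ket{\rho}$ is only $\zeta$-close to $\ket{\vph}_{x_i^1}$ — not equal. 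Party~$2$'s operator, obtained from a $\sfD_\infty^{\eps}$ bound that holds relative to $\vph$, need not behave correctly on $\rho$, because the closeness error $\zeta$ is comparable to or larger than the exponent $2^{-O(c^j)}$. The Substate Perturbation Lemma resolves this by showing that a $\sfD_\infty^{\eps}$ bound against $\vph_{X^j_i}\otimes\vph_{A}$ can be traded, at bounded cost, for one against $\vph_{X^j_i}\otimes\rho_A$ when $\rho_A$ is close to $\vph_A$. Your ``fidelity accounting'' and ``choosing the right divergence'' language gestures at a problem of this kind but does not articulate why the naive sequential application of per-party operators fails, and does not supply the perturbation lemma that makes it work. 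A correct proof along either your route or the paper's route needs this, and it is the central new tool the paper introduces.
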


The two cases in Theorem \ref{thm:dpt} should be interpreted as follows: $c < 1$ means there is less than one qubit of communication per copy of $\sfV$, and we are close to the non-local game situation where there is no communication. Therefore we get an upper bound on the success probability for computing $\sfV^n$ in terms of the winning probability of the corresponding game. The theorem in this case is essentially saying that parallel-repeated non-local games under product distributions are resistant to communication, i.e., if the winning probability of $n$ copies of the game goes does exponentially in $n$, then it also goes down exponentially in $n$ if there is a small amount of communication. We remark however that the result for case (i) as stated here is not necessarily optimal. It is possible to upper bound the winning probability with $cn$ communication by $2^{cn}$ times the winning probability without communication, which is simply parallel repetition value of $G(p,\sfV)$, $\omega^*(G^n(p,\sfV))$, with a somewhat different argument (which we describe in some of the cryptographic applications). For product $p$, $\omega^*(G^n(p,\sfV))$ is upper bounded by $(1-\nu/2)^{\Omega(\nu^2n/l^2\cdot\log(|\clA^1|\cdot\ldots\cdot|\clA^l|))}$ \cite{JPY14}, which results in a better bound. However, we have presented this result as is in order to have a unified framework for cases (i) and (ii).

The case $c \geq 1$ means on average at least one qubit is communicated per copy of $\sfV$. This corresponds to the true communication scenario, and thus if $c$ is less than a lower bound on the per copy communication complexity of $\sfV$, we get that the probability of success for computing $\sfV^n$ goes down exponentially in $n$.  By Yao's Lemma, case (ii) of Theorem \ref{thm:dpt} has the following corollary for communication complexity.

\begin{cor}\label{cor:yao-dpt}
For a predicate $\sfV$ on $(\clA^1\times\ldots\times\clA^l)\times(\clX^1\times\ldots\times\clX^l)$, let $\Q_\eps(\sfV)$ denote the interactive entanglement-assisted quantum communication complexity of computing it, and $\eff_{\eps+\zeta}(\sfV,p)$ be its distributional quantum partition bound for distribution $p$, and any $\eps,\zeta > 0$. Then,
\[ \Q_{1-(1-\eps)^{\Omega\left(n/\left(\log(|\clA^1|\cdot\ldots\cdot|\clA^l|)\right)\right)}}(\sfV^n) = \Omega\left(\frac{\zeta^2n}{l^3}\left(\max_{\mathrm{product } \, p}\log\eff_{\eps+\zeta}(\sfV,p)\right)\right).\]
\end{cor}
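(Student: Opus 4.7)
The plan is a direct application of Yao's minimax principle to case (ii) of Theorem~\ref{thm:dpt}. First I fix an arbitrary product distribution $p$ on $\clX^1\times\ldots\times\clX^l$. Because each of the $l$ players owns $n$ independent draws from their own marginal of $p$, the distribution $p^n$ on the input set of $\sfV^n$ remains a product distribution across the $l$ players, so Theorem~\ref{thm:dpt} (ii) applies with input distribution $p^n$ and predicate $\sfV^n$.

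Next, set $\delta := (1-\eps)^{\Omega(n/\log(|\clA^1|\cdots|\clA^l|))}$ and let $\clP$ be any entanglement-assisted interactive quantum protocol for $\sfV^n$ with worst-case error at most $1-\delta$, i.e.\ worst-case success probability at least $\delta$. Since worst-case success at least $\delta$ implies expected success at least $\delta$ under every input distribution, $\suc(p^n,\sfV^n,\clP) \geq \delta$. The contrapositive of Theorem~\ref{thm:dpt} (ii) then forces the total communication of $\clP$ to exceed
\[ \Omega\!\left(\frac{\zeta^2 n}{l^3}\log\eff_{\eps+\zeta}(\sfV, p)\right).\]
Taking the infimum over all such protocols yields the same lower bound on $\Q_{1-\delta}(\sfV^n)$, and since $p$ was an arbitrary product distribution, I can maximize the right-hand side over product distributions to obtain the claimed bound.

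There is no genuine obstacle in this step; the work sits entirely inside Theorem~\ref{thm:dpt}, and Yao's lemma handles the passage from distributional to worst-case complexity in the standard way. The only items that deserve a quick check are (a) that $p^n$ is product across the $l$ parties whenever $p$ is, as noted above, and (b) that the regime $c \geq 1$ required by case (ii) is automatic in the relevant parameter range, since otherwise the asserted lower bound is already at most $n$ and case (i) can be invoked instead to absorb the remaining regime into the constant hidden in $\Omega(\cdot)$.
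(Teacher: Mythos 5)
Your approach matches the paper's intended route (the paper itself offers only the one-line justification ``by Yao's Lemma''): fix a product $p$, observe that worst-case error $1-\delta$ gives distributional success at least $\delta$ under $p^n$, invoke the contrapositive of Theorem~\ref{thm:dpt}(ii), and finally maximize the resulting bound over product $p$. Two points deserve tightening, though. First, the phrasing ``Theorem~\ref{thm:dpt}(ii) applies with input distribution $p^n$ and predicate $\sfV^n$'' is off: the theorem is already stated for the single-copy $p$ and $\sfV$, and its conclusion is \emph{directly} about $\suc(p^n,\sfV^n,\clP)$; you do not re-instantiate it at the $n$-fold objects (doing so would produce a bound in terms of $\eff_{\eps+\zeta}(\sfV^n,p^n)$, not $\eff_{\eps+\zeta}(\sfV,p)$). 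Your final displayed bound uses the correct quantity, so this is a phrasing issue rather than an error. Second, the one-sentence treatment of the regime $c<1$ hides a genuine case split. If $\log\eff_{\eps+\zeta}(\sfV,p)>270 l^3/\zeta^2$, then padding any $c<1$ protocol up to $c=1$ keeps it inside the hypothesis $1\le c<\frac{\zeta^2}{270 l^3}\log\eff_{\eps+\zeta}(\sfV,p)$ of case (ii), so such a protocol cannot have success above the threshold, and the regime is vacuous. If instead $\log\eff_{\eps+\zeta}(\sfV,p)\le 270 l^3/\zeta^2$, the claimed bound is indeed at most a constant times $n$, but now one must rule out $c<1$ protocols using case (i) together with the observation that $\log\eff_{\eps+\zeta}(\sfV,p)>0$ forces $\nu=1-\omega^*(G(p,\sfV))>\eps+\zeta$ (a no-abort strategy achieving error $\le\eps+\zeta$ would make $\eff_{\eps+\zeta}=1$), and this pushes $\eps,\zeta,l$-dependence into both hidden $\Omega$-constants of the corollary. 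That is presumably what the paper means to sweep under ``$\Omega$'', so your proof is at the paper's level of rigor, but the sentence ``otherwise the asserted lower bound is already at most $n$ and case (i) can be invoked'' is not by itself a proof and does not correctly capture the first of the two subcases.
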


Corollary \ref{cor:yao-dpt} is a strong direct product theorem for predicates whose interactive entanglement-assisted communication complexity is characterized by $\max_{\text{product }p}\eff_\eps(\sfV,p)$. This result can be seen as a quantum version of the result of \cite{JY12}, who proved a direct product theorem for randomized communication complexity in terms of the smooth rectangle bound, which is a relaxation of the classical partition bound (although we note that the result of \cite{JY12} worked for all distributions rather than only product ones).

\subsubsection{Applications in two-party communication complexity of functions}
In the communication complexity setting for a two-input function or relation $f\subseteq \clX\times\clY\times\clZ$, we normally require that only one party gives an output. Nevertheless, we can define a predicate $\sfV_f$ for it in which one party has a singleton output set, say $\{\top\}$, and the other party's output set is $\clZ$.  We define
\[ \sfV_f(\top z, xy) = 1 \quad \iff \quad z\in f(x,y).\]
It is clear then that the two-party communication complexity of $f$ is equal to the communication complexity of $\sfV_f$.

In \cite{ABJO21}, it is shown that a large class of functions exists, whose quantum communication complexity is characterized by $\eff_\eps(\sfV_f,p)$ for a product $p$. In particular, they show that a class of functions known as two-wise independent functions, $\eff_\eps(\sfV_f,p)$ takes the maximum possible value of the uniform distribution, which is product.
\begin{fact}[\cite{ABJO21}]
Let $f:\clX\times\clY\to\clZ$ be a two-wise independent function with $|\clX|=|\clY|$, and let $p_U$ be the uniform distribution on $\clX\times\clY$. Then for any $\eps>0$,
\[ \eff_\eps(\sfV_f,p_U) \geq \frac{|\clX|}{|\clZ|}\left(1-\gamma-\frac{1}{|\clZ|}\right)^2.\]
\end{fact}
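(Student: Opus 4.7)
The plan is to establish the lower bound through the SDP dual characterization of the distributional quantum partition bound. Since $\eff_\eps(\sfV_f, p_U)$ is defined as the optimal value of a minimization program, a lower bound follows from exhibiting any feasible dual witness and evaluating its objective. For the predicate $\sfV_f$ induced by a function $f: \clX \times \clY \to \clZ$, this dual witness takes the form of a function $\phi(x,y,z)$ satisfying a PSD or quantum-$\gamma_2^\infty$-type norm constraint, with objective equal to $\mathbb{E}_{(x,y) \sim p_U}[\phi(x,y,f(x,y))]$ minus an $\eps$-slackness term (this is where the $\gamma$ parameter appearing in the statement comes from).

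The natural candidate witness is $\phi(x,y,z) = c \cdot \bigl(\mathbbm{1}[z = f(x,y)] - 1/|\clZ|\bigr)$ for a normalizing constant $c$; the subtracted $1/|\clZ|$ kills the trivial uniform-output direction and thereby explains the $1/|\clZ|$ term in the bound. Two-wise independence of $f$ enters in exactly the right place to control the norm constraint: for any two distinct input pairs $(x,y) \neq (x',y')$, two-wise independence implies that $(f(x,y), f(x',y'))$ is uniformly distributed on $\clZ \times \clZ$, so the off-diagonal entries $\sum_z \phi(x,y,z)\phi(x',y',z)$ of the natural Gram matrix vanish in expectation. This near-diagonal structure allows the quantum $\gamma_2^\infty$-style norm of $\phi$ to be bounded by essentially its $\ell_2$-norm, which under the uniform distribution on $\clX \times \clY$ (and using $|\clX| = |\clY|$) scales like $\sqrt{|\clZ|/|\clX|}$. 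Choosing $c$ to saturate this norm constraint and then squaring the resulting objective value yields the claimed ratio $|\clX|/|\clZ|$ multiplied by $(1 - \gamma - 1/|\clZ|)^2$.

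The main obstacle, as I see it, is the interface between the two-wise-independence estimate and the quantum (as opposed to classical) partition bound. Classical partition-bound witnesses live on rectangles, and the off-diagonal cancellation is a simple counting argument; quantumly, one must construct PSD-valued witnesses for the stronger $\gamma_2^\infty$-type constraint used in the Laplante--Lerays--Roland definition of $\eff$, and verify that the diagonal-plus-low-rank object built from the indicator of the graph of $f$ really is feasible in the quantum dual. The second delicate point is tracking the error parameter: starting from the exact ($\eps = 0$) witness and performing a smoothing step to obtain the $\eps$-approximate constraint produces the additive $\gamma$ in the bound, and one must check that this smoothing does not damage the leading $|\clX|/|\clZ|$ factor.
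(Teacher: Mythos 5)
The paper does not prove this Fact: it is cited from \cite{ABJO21} without an in-paper argument, so there is no proof of record to compare your attempt against. I can therefore only assess the sketch on its own terms.

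The high-level shape — a dual witness $\phi(x,y,z)=c\bigl(\mathbbm{1}[z=f(x,y)]-1/|\clZ|\bigr)$, with two-wise independence used to kill the off-diagonal inner products $\sum_z\phi(x,y,z)\phi(x',y',z)=c^2\bigl(\mathbbm{1}[f(x,y)=f(x',y')]-1/|\clZ|\bigr)$ — is a sensible candidate and does produce the right $|\clX|/|\clZ|$ scaling, so the intuition is sound. But two of the steps you flag as "delicate" are actually gaps. First, there is no "SDP dual characterization" of $\eff_\eps$: the feasible region in the definition of $\eff_\eps(\sfV_f,p)$ is the set $\clQ$ of quantum correlations, which is not SDP-representable, so the weak Lagrangian dual of the program is not something whose feasibility you can certify via a $\gamma_2$-type norm without first committing to a concrete SDP relaxation of $\clQ$. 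The paper's own $\gamma_2^\alpha$ route (Theorem \ref{thm:gamma_2-eff}) is the natural candidate, but that theorem is proved only for Boolean output, whereas the Fact is for general $\clZ$, so it cannot be invoked off the shelf; which relaxation you use is not cosmetic and must be stated. Second, the sentence "the near-diagonal structure allows the quantum $\gamma_2^\infty$-style norm of $\phi$ to be bounded by essentially its $\ell_2$-norm" is exactly where the entire content of the bound lives, and expectation-zero off-diagonal correlations do not by themselves bound an operator-type norm; you need a worst-case estimate (a second-moment/Hanson--Wright-style argument, an explicit factorization, or a rank bound), and the sketch does not supply one. Finally, $\gamma$ is never introduced in the paper's statement of the Fact and is almost certainly just $\eps$ renamed (an artifact of transcription from \cite{ABJO21}); your reading of it as a smoothing slack produced by your own construction is a guess and should be flagged as such rather than presented as an explanation.
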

An example of a two-wise independent function is the generalized inner product $\IP_q^n: \bbF_q^n\times\bbF_q^n \to \bbF_q$ defined by:
\[ \IP_q^n(x,y) = \sum_{i=1}^nx_iy_i \mod q.\]
This makes our result the first strong direct product theorem for generalized inner product that we are aware of. The direct product theorem in terms of the generalized discrepancy method by Sherstov \cite{She12} works only for boolean-output functions, and gives a strong direct product theorem for quantum communication of $\IP_2^n$.

For further comparison between our direct product theorem and Sherstov's, we prove Theorem \ref{thm:gamma_2-eff}. For a total function $f : \clX\times\clY \to \{-1,+1\}$, let $F$ denote the $|\clX|\times|\clY|$ matrix whose $[x,y]$-th entry is given by $f(x,y)$. The generalized discrepancy method lower bounds communication in terms of $\log\gamma^\alpha_2(F)$, where $\gamma^\alpha_2(M)$ is the $\alpha$-approximate factorization norm of a matrix $M$. For a function $f$, $\gamma^\alpha_2(F)$ can be expressed as $\max_p\gamma^\alpha_2(F,p)$ where $\gamma^\alpha_2(F,p)$ is a distributional version of $\gamma^\alpha_2(F)$ with respect to $p$ over $\clX\times\clY$.
\begin{restatable}{theorem}{gamm}
\label{thm:gamma_2-eff}
For a total function $f:\clX\times\clY\to\{-1,+1\}$, let $\sfV_f$ denote the predicate on $(\{-1,+1\})^2\times (\clX\times\clY)$ given by
\[ \sfV(ab,xy) = 1 \quad \iff \quad a\cdot b = f(x,y).\]
Then for any distribution $p$ on $\clX\times\clY$,
\[ \eff_\eps(\sfV_f,p) \geq (1-2\eps)\gamma_2^{\alpha}(F,p)\]
with $\alpha = \frac{1+2\eps}{1-2\eps}$.
\end{restatable}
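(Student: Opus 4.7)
The plan is to extract an approximate $\gamma_2$-factorization of $F$ directly from an optimal solution of the SDP defining $\eff_\eps(\sfV_f,p)$. Let $\mu := \eff_\eps(\sfV_f,p)$ and let $\{q(ab|xy)\}_{a,b\in\{\pm1\}}$ be the sub-probabilities witnessing this value. By the primal form of the distributional quantum efficiency bound they will satisfy two things: a $p$-averaged success condition of the form $\sum_{xy}p(xy)\sum_{ab:\,ab=f(x,y)}q(ab|xy)\geq (1-\eps)/\mu$, and a ``quantum behaviour'' constraint expressing each $q(ab|xy)$ as (a scalar multiple of) $\matel{\varphi}{M_a^x\otimes N_b^y}{\varphi}$ for some shared state and PSD operators, with normalisation that caps $\sum_{ab}q(ab|xy)$ by $1/\mu$.

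The first step is to define the candidate bias matrix
\[
X[x,y] \;:=\; \frac{\mu}{1-2\eps}\sum_{a,b\in\{\pm1\}} ab\cdot q(ab|xy).
\]
Because $a,b\in\{\pm1\}$, the pointwise product satisfies $ab\cdot f(x,y) = 2\cdot\mathbbm{1}[ab=f(x,y)]-1$, so $X[x,y]f(x,y)$ is (up to the normalising factor) the ``winning mass minus losing mass'' of $q(\cdot|xy)$. Averaging against $p$ and applying the success condition yields $\bbE_p[X[x,y]f(x,y)]\geq 1$, which is the ``approximation on average'' constraint of $\gamma_2^\alpha(F,p)$. Pointwise, the cap $\sum_{ab}q(ab|xy)\leq 1/\mu$ gives $|\sum_{ab}ab\,q(ab|xy)|\leq 1/\mu$, and hence $X[x,y]f(x,y)\leq \frac{1}{1-2\eps}\leq \frac{1+2\eps}{1-2\eps}=\alpha$, establishing feasibility for the $\alpha$-approximate problem.

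The second step is to bound $\gamma_2(X)\leq \mu/(1-2\eps)$, which will complete the proof. Here one exploits the fact that the quantum-behaviour constraint inside $\eff_\eps$ produces Hilbert-space vectors $\ket{\phi_a^x},\ket{\psi_b^y}$ (obtained by absorbing the state $\ket{\varphi}$ and the square roots of $M_a^x,N_b^y$ into vectors on each side) such that $q(ab|xy)=\langle \phi_a^x|\psi_b^y\rangle$ and $\sum_a\|\phi_a^x\|^2,\sum_b\|\psi_b^y\|^2\leq 1/\mu$. Summing against the scalars $ab\in\{\pm1\}$ then expresses $\sum_{ab}ab\,q(ab|xy)$ as a single inner product $\langle u_x, v_y\rangle$ with $\|u_x\|,\|v_y\|\leq 1/\sqrt{\mu}$ (up to an absolute constant), giving a $\gamma_2$-factorization of $X$ of norm at most $\mu/(1-2\eps)$. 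This is essentially the observation of Laplante--Lerays--Roland that the quantum partition bound is a factorisation-norm-type SDP, specialised to the two-output predicate $\sfV_f$ and to the distributional setting.

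The main obstacle will be pinning down the primal form of $\eff_\eps(\sfV_f,p)$ precisely enough that the above normalisations yield exactly the factor $(1-2\eps)$ and $\alpha=(1+2\eps)/(1-2\eps)$ rather than looser constants. In particular one has to verify that the SDP constraint is strong enough to give a genuine inner-product factorization (not just a PSD lift), that the $\{\pm1\}$ output structure of $\sfV_f$ really allows the two sign-valued sums $\sum_a a\sqrt{M_a^x}$ and $\sum_b b\sqrt{N_b^y}$ to be treated as single Alice/Bob-side vectors, and that the distributional notion of $\gamma_2^\alpha(F,p)$ being used matches the one obtained here. Once these normalisations are aligned, everything else is a bookkeeping exercise in averaging the efficiency-bound success condition against $p$.
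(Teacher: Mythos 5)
Your plan is a genuinely different route from the paper's, and the difference is not cosmetic: you try to extract a \emph{primal} $\gamma_2$-factorization directly from the SDP behind $\eff_\eps$, whereas the paper never touches the primal at all. The paper \emph{defines} $\gamma_2^\alpha(F,p)$ by the dual expression
\[
\gamma_2^\alpha(F,p) = \max_{F'}\frac{(\alpha+1)\inner{F}{F'\circ p} - (\alpha-1)}{2\gamma_2^*(F'\circ p)},
\]
and proves the inequality by: using the optimal zero-communication protocol $\clP$ for $\sfV_f$ to build a strategy for the XOR game $G(p,\sfV_{f'})$ for an \emph{arbitrary} sign function $f'$ (output randomly on abort, otherwise play $\clP$'s output); invoking Tsirelson's identity $\omega^*(G(p,\sfV_{f'})) = \tfrac12(1+\gamma_2^*(F'\circ p))$ (Fact \ref{fc:gamma-xor}) to get $\gamma_2^*(F'\circ p)\ge \eta\sum_{xy}p(xy)f'(xy)O(xy)$; and then using the elementary pointwise inequality $\beta\theta\ge \alpha\beta+\alpha\theta-1$ for $\alpha,\beta\in\{\pm1\}$, $\theta\in[-1,1]$ to convert the bias for $f'$ into the bias for $f$ minus the $\eps$-error. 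Since $f'$ is arbitrary, maximizing gives exactly $(1-2\eps)\gamma_2^\alpha(F,p)$.

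The gap in your proposal is exactly the point you flagged as ``the main obstacle'' and then deferred: you construct a bias matrix $X$ with $\bbE_p[Xf]\ge 1$, $|X|\le \alpha$ pointwise, and $\gamma_2(X)\le \mu/(1-2\eps)$, but it has not been shown anywhere (in the paper or in your proposal) that such a primal object certifies an upper bound on the \emph{dual-defined} quantity $\gamma_2^\alpha(F,p)$. For the non-distributional $\gamma_2^\alpha$, the primal/dual equivalence is standard, and the primal constraint is the \emph{pointwise} sandwich $1\le X[x,y]f(x,y)\le\alpha$; your argument delivers only the $p$-average lower bound $\bbE_p[Xf]\ge 1$, so you would need to identify the correct primal of the restricted dual and verify it really is this average form. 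Absent that, the two sides of the inequality you're proving are not talking about the same quantity. A second, smaller issue: the $\eff_\eps$ program constrains only the $p$-average non-abort probability, so the per-input marginals $\sum_a\|\phi^x_a\|^2$ and $\sum_b\|\psi^y_b\|^2$ (Alice's and Bob's local non-abort probabilities) are \emph{not} individually capped by $1/\mu$; the normalisation you use to bound $\gamma_2(X)$ by $\mu/(1-2\eps)$ ``up to an absolute constant'' therefore does not go through as stated, and the statement demands the exact constant. The paper's route sidesteps both issues: the XOR-game value is by construction a $p$-average, Tsirelson's theorem absorbs all the factorization bookkeeping, and the scalar inequality in the last step produces exactly the $(1-2\eps)$ factor and $\alpha=\frac{1+2\eps}{1-2\eps}$ without ever needing a pointwise approximation of $F$.
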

This shows that $\eff_\eps(\sfV_f,p)$ is a stronger lower bound technique than $\gamma_2(F,p)$ for boolean $f$. However, since our direct product theorem is in terms of $\max_{\text{product } p}\log\eff_\eps(\sfV_f,p)$, and Sherstov's in terms of $\max_p\log\gamma_2^{}(F,p)$, the two results cannot be directly compared.

\subsubsection{DIQKD and DIRE secure against leakage}\label{sec:QKDRE-results}
In DI protocols that involve playing multiple copies of a non-local game, two settings can be considered: sequential and parallel. We prove that it is possible to do DIQKD in the parallel setting and DIQKD and DIRE in the sequential setting in the presence of leakage. Our result in the parallel setting is proved by applying case (i) of Theorem \ref{thm:dpt} to the parallel DIQKD protocol of \cite{JMS17} --- the original security proof of \cite{JMS17} had been done by using parallel repetition of the corresponding game without leakage. It is often possible to get a DIRE protocol by making some minor changes to a DIQKD protocol, and very similar security proofs work for both. However, we are unable to get a security proof for DIRE in the parallel setting with leakage. In fact, there is no known security proof for DIRE in the parallel setting even without leakage. The \cite{JMS17} protocol uses a lot of private randomness, and the key rate obtained from it is quite low with or without leakage, and therefore it cannot be modified to give a DIRE protocol.

To mitigate this situation, we study DIQKD and DIRE with leakage in the sequential setting. The sequential setting is somewhat easier to analyze than the parallel setting, and a number of approaches have been developed to prove security for DIQKD and DIRE in this setting \cite{VV19, DFR17, ADF+18}. However, none of these approaches seem to be possible to generalize to the sequential setting with leakage. Our result in this case is obtained by using a new (though quite straightforward) approach to do sequential security proofs, which has some similarities with the proof approach for DIRE in \cite{PM13} (although the proof in that work only works for adversaries that have classical side information --- our approach can handle quantum side information as well). Leakage can be incorporated in this approach by using an argument inspired by the quantum efficiency lower bound for quantum communication (given in Lemma \ref{lem:eff-lb}) to handle communication. This is the same argument that can potentially be used to improve the upper bound in case (i) of Theorem \ref{thm:dpt}, although in this case we need to make sure that the sequential structure of the protocol is preserved. The key rates we get in the sequential setting are much better than that in the parallel setting. To demonstrate the usefulness of our sequential approach, we later qualitatively compare the key rates obtained by it in the setting \emph{without} leakage, to that obtained by the Entropy Accummulation Theorem (EAT) \cite{DFR17, ADF+18}, which is the most general and widely used approach in the sequential setting.

\paragraph{Leakage in parallel DIQKD.} In the parallel device-independent setting, each honest party's device is modelled as a single black box, into which the party provides inputs and from which they get outputs to play several copies of a non-local game. The parties may only enter inputs for all $n$ copies of the game at once in this setting, which means there may be arbitrary correlations between the inputs to the $i$-th game and the outputs of the $j$-th game, for any $i, j$. Ideally the boxes play $n$ independent copies of the non-local game, although they may do so noisily, i.e., each game is won with probability $\delta$-close to its optimal quantum value. For DIQKD, the honest parties are Alice and Bob and we assume their boxes are supplied by the eavesdropper Eve. The states and measurements implemented by these boxes may be very far from those corresponding to the two-player non-local game that each of Alice and Bob's boxes ideally play. In fact, instead of Alice and Bob sharing an entangled state that is uncorrelated with anything else, Eve may hold a purification of Alice and Bob's state, which we also model as a box.

As mentioned before, known DIQKD protocols rely on the assumption that Alice and Bob and Eve's boxes do not communicate with each other. We relax the assumption in a strong way: we assume Alice, Bob and Eve's boxes can all send classical messages to each other (since they share entanglement, this means they can also effectively exchange quantum states via teleportation) after Alice and Bob have entered their inputs into their boxes and before they receive their outputs. The communication between Alice, Bob and Eve's boxes may be arbitrarily interactive: we do not put any bound on the number of rounds of communication, only on the total number of bits communicated. Since all the inputs are entered at once, all messages communicated in the parallel leakage model can depend on all the inputs of the device that is leaking the message.

\begin{remark}
In practice Alice and Bob’s boxes can also continue sending messages to Eve after their outputs are produced (there can also be communication to Alice and Bob’s boxes, but the key rate depends on Eve’s probability of guessing Alice and Bob’s outputs, which cannot change due to communication to Alice and Bob’s boxes after they have produced their outputs, so we ignore these at this time). But as far as security analysis is concerned, this communication is equivalent to Eve gaining some information about Alice and Bob’s outputs after they have been produced, e.g. from communication between Alice and Bob over a public channel, which is a standard part of QKD protocols and can be handled by standard DIQKD proof techniques. Using standard techniques, the amount of communication after the outputs are produced would just be subtracted from the key rate, and after a certain threshold of communication, key rate would just be zero. Communication before Alice and Bob’s outputs are produced cannot be handled by standard techniques, however, and hence we focus on the leakage model described above in our work.
\end{remark}

For the sake of concreteness, we analyze the parallel DIQKD protocol given by \cite{JMS17}, based on the Magic Square non-local game, under this leakage model, but in principle the same analysis could be applied to any DIQKD protocol that is based on a non-local game that has: (i) a product input distribution, and (ii) a common bit that Alice and Bob can ideally both know given their outputs $a$ and $b$, and both parties' inputs $x$ and $y$ (and this bit is their shared key). Using case (i) of Theorem \ref{thm:dpt}, we prove the following theorem.
\begin{restatable}{theorem}{qkd}
\label{thm:leaky-qkd}
If the \cite{JMS17} DIQKD protocol (given in Protocol \ref{prot:QKD}) is carried out with boxes that are compatible with $n$ parallel copies of the Magic Square game, and have at most $\delta$ noise in the honest case, for $n = \Omega\left(\frac{1}{\delta^2\alpha\gamma}\log(1/\lambda)\right)$, it is possible to extract $r^{\mathrm{QKD}}_{\mathrm{par}}(\delta,c)n - \log(1/\lambda) - O(1)$ bits of secret key that are $\lambda$-secure, in the interactive leakage model, with the total communication between Alice, Bob and Eve's boxes being $cn$ bits, for
\[r^{\mathrm{QKD}}_{\mathrm{par}}(\delta,c) = \alpha\left(\nu - \beta(\sqrt{c}+ \sqrt{\alpha}) -4\delta - 2h(4\delta) - \gamma\right).\]
Here $\alpha$ and $\gamma$ are protocol parameters that can be optimized\footnote{Note that $\alpha$ and $\gamma$ appear in both the lower bound on $n$, and the key rate.}, and $\nu$ and $\beta$ are constants in $(0,1)$ (given by Fact \ref{fc:MSE-w} and Corollary \ref{cor:MSE-dpt}), and $h$ is the binary entropy function.
\end{restatable}
In order for $r^{\mathrm{QKD}}_{\mathrm{par}}(\delta, c)$ to be positive, one has to pick $\alpha = O(\nu^2)$, which gives $r^{\mathrm{QKD}}_{\mathrm{par}}(\delta, c) = \Omega(\nu^3) - O(\nu^2(\sqrt{c} + h(4\delta) + \gamma))$. For comparison, in the sequential protocol, we get the dependence on $\nu$ to be $\log(1/(1-\nu)) = \Omega(\nu)$ instead. As mentioned earlier, it is possible the bound on $\suc(p^n, \sfV^n, \clP)$ in case (i) of Theorem \ref{thm:dpt} can be slightly improved, and using this, $r^{\mathrm{QKD}}_{\mathrm{par}}$ could also be improved. In particular, it is possible to replace the factor of $-\alpha\beta\sqrt{c}$ with $-c$, which may be better for some parameter ranges.

\paragraph{Leakage in sequential DIRE and DIQKD.} DIRE has only has one honest party, but this party has two quantum devices, which are both supplied by Eve; for convenience we shall refer to the two devices as Alice and Bob.
In the sequential setting, the honest parties have separate black boxes corresponding to each copy of the non-local game and they are able to enter inputs into these one by one. In particular, the parties will enter inputs into the $i$-th boxes and receive outputs from them before entering inputs into the $(i+1)$-th boxes. This restricts the kinds of correlations between inputs and outputs that are possible: the shared state in the boxes after the $i$-th game does not depend on the $(i+1)$-th inputs, so it is possible to analyze the $(i+1)$-th game as a fresh single copy of the game.

In the sequential setting, the leakage can again be interactive, but it respects the sequential structure of the protocol. We shall assume that the $i$-th block of leakage happens after inputs are entered and before the outputs are produced for the $i$-th game, and clearly the bits leaked here cannot depend on the inputs for the $(i+1)$-th game. Note that for sequential RE the leakage is between the two devices of the single honest party and Eve's device, and in sequential QKD, the leakage is between the three parties Alice, Bob and Eve as before.

We give protocols for DIQKD and DIRE in the sequential setting, which can be seen as a sequential versions of the parallel DIQKD protocol in \cite{JMS17} (with the DIRE version having further modifications because it is RE and not QKD). We prove the following theorem about their security.
\begin{restatable}{theorem}{re}
\label{thm:leaky-re}
There is a sequential DIQKD protocol (given in Protocol \ref{prot:seq-QKD}) which if carried out with boxes that are compatible with $n$ sequential copies of the Magic Square game, and having $\delta$ noise in the honest case, for $n =\Omega\left(\frac{1}{\delta^2\gamma}\log(1/\lambda)\right)$, results in $r^{\mathrm{QKD}}_{\mathrm{seq}}(\delta, c)n - \log(1/\lambda) - O(1)$ bits of secret key that are $\lambda$-secure, in the presence of $cn$ bits of interactive sequential leakage, for
\[ r^{\mathrm{QKD}}_{\mathrm{seq}}(\delta, c) = \log\left(\frac{1}{1-\nu}\right) - 4\delta - h(4\delta) - \gamma - c, \]
where $h$ is the binary entropy function, $\nu$ is a constant given by Fact \ref{fc:MSE-w}, and $\gamma$ is a free parameter that can be optimized.

There is a sequential DIRE protocol (given in Protocol \ref{prot:RE}), which if carried out with boxes that are compatible with $n$ sequential copies of the Magic Square game, and having $\delta$ noise in the honest case, for $n =\Omega\left(\frac{1}{\delta^2}\log(1/\lambda)\right)$, is an $n\log 9 + \polylog(n) \to r^{\mathrm{RE}}_{\mathrm{seq}}(\delta,c)n - 2\log(1/\lambda) - O(1)$ RE protocol that is $\lambda$-secure, in the presence of $cn$ bits of interactive sequential leakage, for
\[ r^{\mathrm{RE}}_{\mathrm{seq}}(\delta, c) = \log\left(\frac{1}{1-\nu}\right) - 2\delta - c. \]
\end{restatable}
The $\nu$ in the above theorem is the same $\nu$ that appears in Theorem \ref{thm:leaky-qkd}. However, since the proof approach we use here is slightly different, it may be possible to slightly improve this constant. We have not attempted to do this in this work, but we shall comment more later on what the improvement could be.

\paragraph{Comparison to standard sequential approaches without leakage.} The key rates for our sequential protocols \emph{without} leakage can simply be obtained by setting $c=0$.\footnote{Here we are informally using `key rate' to refer to both the number of bits of secret key obtained in QKD, and the number of \emph{new} random bits obtained in RE, as a function of $n$ and $\lambda$, as the key rate. Conventionally, the key rate is defined to be this number divided by $n$, in the large-$n$, small-$\lambda$ limit.} We shall use $r^{\mathrm{QKD}}_{\mathrm{seq}}(\delta)$ and $r^{\mathrm{RE}}_{\mathrm{seq}}(\delta)$ to refer to the corresponding values of $r^{\mathrm{QKD}}_{\mathrm{seq}}(\delta, c)$ and $r^{\mathrm{RE}}_{\mathrm{seq}}(\delta, c)$. We now compare these key rates to those obtained by the Entropy Accummulation approach without leakage. To do this, we first give some intuition for what the quantity $\nu$ that appears in our key rates is. Our security proof is done by considering a 3-player game between Alice, Bob and Eve in which Alice and Bob get the same inputs as in the standard Magic Square game, and Eve gets both of their inputs and has to guess the bit that is always equal for the both of them. For technical reasons, the game we actually analyze is more complicated than what we just described, but if the winning probability of the game just described is $1-\nu'$, then $\nu$ in Theorem \ref{thm:leaky-re} is $\frac{\nu'}{2}$. In Protocol \ref{prot:seq-QKD}, the raw secret key for each sequential round is set to be the bit that is supposed to be equal for Alice and Bob in the Magic Square Game. Therefore, $1-2\nu$ is the maximum probability that Alice and Bob win a single copy of the Magic Square game with their boxes, and Eve guesses the secret key bit. As stated earlier, it may be possible to replace $\nu$ with a slightly larger constant. Specifically, we could instead consider the winning probability of the 3-player game with the following constrained strategy: Eve has the canonical purification of Alice and Bob's state, and does the same measurement as them to guess the common bit. This constrained winning probability is a natural interpretation of the Renyi-2 entropy of the common bit given Eve's system, which is the quantity we work with in our proof. If this constrained winning probability is $1-2\kappa$, then we could potentially replace $\log(1/(1-\nu))$ in our key rate with $\log(1/(1-\kappa))$.

Entropy Accumulation, as far as we know, has not been used to analyze the specific sequential protocols that we consider in this work. In order to compare our approach to the EAT approach, we shall compare the key rates obtained from our approach and the EAT approach for Protocols \ref{prot:seq-QKD} and \ref{prot:RE}. However, there are some caveats to this. Protocols \ref{prot:seq-QKD} and \ref{prot:RE} are different from protocols usually analyzed by the EAT in an important way: in the latter protocols, the non-local game is played with its actual input distribution in only a small fraction $\gamma$ of the boxes, on which the testing happens, and deterministic inputs are supplied in the rest of the boxes. Our approach does not seem to be able to handle such protocols --- we require that the non-local game is played with its correct input distribution on all boxes. This is not a problem for QKD, since private randomness is essentially free in QKD, but in randomness expansion, our approach requires many more bits of seed randomness to be put into the protocol per new bit of randomness gained. Therefore, although the key rates (as a function of $n$ and $\lambda$) obtained from our approach and the EAT approach may be comparable for the RE protocol, it is fair to say that Entropy Accumulation has an advantage, since it can also handle protocols where fewer bits of seed randomness are needed to produce the same key rate. Of course, as stated before, our approach does have the advantage that it can be generalized to work in the setting with leakage, which Entropy Accumulation does not seem to be able to handle.

The number of bits of secret key obtained from a protocol using the EAT approach can be expressed in a generic way regardless of the specific game used in a protocol. We clarify here that we are going to be using the original EAT from \cite{DFR17}. There is a recent more generalized version of the EAT \cite{MFSR22} which works for general types of protocols, but it provides no advantages over the original EAT in analyzing our protocol. For Protocol \ref{prot:seq-QKD}, it gives the number of bits of secret key extracted to be
\[ r_{\mathrm{EAT}}(\delta)n - O\left(\sqrt{n\cdot\log(1/\lambda)}\right) - O(\log(1/\lambda)).\]
The constants in the two $O$-s in the above expression are large (whereas the constant in the $O(1)$ for our protocol is $\sim 10$), so it is easy to see that the behaviour w.r.t. $\lambda$ is better for our proof approach. We now compare the behaviour w.r.t. $n$. The function $r_{\mathrm{EAT}}$ is given by
\[ r_{\mathrm{EAT}}(\delta) = \min_{\substack{\sigma: \text{ Alice and Bob win} \\ \MS \text{ with probability } 1-2\delta}}\sfH(K|\tE)_\sigma - \gamma - h(4\delta).\]
The first term in the above expression is the minimization of a conditional von Neumann entropy  over tripartite states $\sigma$ that are shared between Alice, Bob and Eve, with $K$ being Alice's raw key bit, and $\tE$ being the quantum register held by Eve. In particular, we minimize over all such states that allow Alice and Bob to win a single copy of the Magic Square game with probability at least $1-2\delta$, and the register $K$ holds the output bit of Alice in the game which is supposed to be equal to one of Bob's output bits. Note that the $-(\gamma + h(4\delta))$ is identical in $r^{\mathrm{QKD}}_{\mathrm{seq}}(\delta)$ and $r_{\mathrm{EAT}}(\delta)$, so we need to compare $(\log(1/(1-\nu)) - 4\delta)n$ with $(\min_\sigma\sfH(K|\tE)_\sigma) n - O(\sqrt{n})$.

The $(\min_\sigma\sfH(K|\tE)_\sigma)$ term is somewhat hard to interpret, but $\sfH(K|\tE)_\sigma$ is lower bounded by the corresponding conditional min-entropy $\sfH_{\min}(K|\tE)_\sigma$. Min-entropy has the operational interpretation of being the log of the inverse of a guessing probability, so $(\min_\sigma\sfH_{\min}(K|\tE)_\sigma)$ here is the log of the inverse of Eve's maximum guessing probability for a single bit of the key, on a state with which Alice and Bob win a single copy of Magic Square with probability $1-2\delta$. Recalling that $1-2\nu$ is the maximum overall probability of Alice and Bob winning a single copy of Magic Square and Eve guessing the key bit, we think $\log(1/(1-\nu)) - 4\delta$ is comparable in value to $\min_\sigma\sfH_{\min}(K|\tE)_\sigma$. Since $\min_\sigma\sfH(K|\tE)_\sigma$ is bigger than $\min_\sigma\sfH_{\min}(K|\tE)_\sigma$, but the EAT bound also has the $-O(\sqrt{n})$ term (with a large constant in the $O$), we think our bound is better for smaller values of $n$, but EAT outperforms it for large $n$. A more exact comparison can be made by numerically computing the values of $\min_\sigma\sfH(K|\tE)_\sigma$ (this value is usually computed numerically in analyses done with the EAT) and $\nu$ (although there is an analytic estimate for $\nu$ given in \cite{JMS17}, it is quite small, and numerical estimates perform much better for these purposes).

The key rate given by the EAT for RE is similar, except without the $-\gamma - h(4\delta)$ terms, which our bound also does not have. Therefore, a similar comparison holds for RE. However, as noted before, EAT-based approaches can perform much better in terms of number of bits of seed randomness used, compared to our approach.

\subsection{Proofs of quantumness with two players secure against leakage}
The only disadvantage of proofs of quantumness with two provers compared to those with one prover is the fact that we have to assume there is no communication between two provers; as we discussed earlier, there are in fact several advantages to the approach with two provers. Our final application is helping relax the assumption in proofs of quantumness with two provers. Since the provers in this setting are potentially dishonest, their devices need not be modelled as black boxes here, and "leakage" here is actually intentional communication between two classical provers who are trying to convince the verifier that they are quantum. The communication between the two provers may be arbitrarily interactive as before. Moreover, since the protocol is 2-round, the verifier gives the inputs for all copies of the game to the provers at once, and all the messages leaked between the provers can depend on all inputs. But we need not worry about leakage to or from the verifier in this case, since the verifier is honest and does not have any quantum devices.

If the provers are allowed to communicate an arbitrary amount, it is of course possible for classical players to always win the non-local game and hence convince the verifier that they are quantum. However, in the case where the interactive leakage is bounded, we prove the following theorem.
\begin{restatable}{theorem}{quantum}
\label{thm:leaky-quantum}
There is an interactive (2-round) proof of quantumness protocol (given in Protocol \ref{prot:quantum}) between a verifier and two provers which is $\lambda$-secure against classical provers who communicate that most $C(\lambda) = O(\log(1/\lambda))$ bits.
\end{restatable}
Theorem \ref{thm:leaky-quantum} is proved using a parallel repetition theorem for the classical value of 2-player non-local games, and a classical version of the argument used in proving Theorem \ref{thm:leaky-re} (which was used to show that the classical partition bound lower bounds classical communication). Since we use the Magic Square non-local game for our other applications, we also use that game in Protocol \ref{prot:quantum} for convenience. But no properties of the Magic Square game are actually used in the protocol, and we only need parallel repetition to hold for the classical value of the game. Since parallel repetition holds for the classical value of all 2-player games, a proof of quantumness could be done with any such game.

Aside from the specific results for DIQKD, DIRE, and proofs of quantumness, our proof technique can be seen as a general framework for making device-independent protocols that prove security using $n$ copies of a non-local game, secure against leakage. For example, this technique can also be applied to the device-independent protocol for encryption with certified deletion given by \cite{KT20}. The security proof for that protocol uses a parallel repetition theorem for an anchored two-round game (where players receive two rounds of inputs and give two rounds of outputs). A version of Theorem \ref{thm:dpt} in case (i) also applies to anchored distributions instead of product distributions for one-round games, and it is not difficult to generalize to two-round games by considering an appropriate round-by-round leakage model.

\subsection{Organization of the paper}
In Section \ref{sec:overview} we give an overview of our proofs. In Section \ref{sec:prelim} we provide definitions and known results about the quantities used in our proofs. In Section \ref{sec:part-bound}, we introduce variants of the quantum partition bound, prove that they lower bound communication and also Theorem \ref{thm:gamma_2-eff}. In Section \ref{sec:perturb-D}, we prove a lemma called the Substate Perturbation Lemma, which is a main tool for our direct product theorem. In Section \ref{sec:dpt-proof}, we give the proof of our main direct product theorem. Finally, in Section \ref{sec:QKD} we show the applications of our direct product theorem to prove security of DIQKD, DIRE and proofs of quantumness with leakage.

\section{Proof overview}\label{sec:overview}
\subsection{Direct product theorem}
We follow the information-theoretic framework for parallel repetition and direct product theorems introduced by \cite{Raz95} and \cite{Hol09}. The idea is this: take a protocol $\clP$ for $\sfV^n$ that is ``too good''. We condition on the success in some $t$ coordinates in this protocol, and show that either the probability of success in these coordinates is already small, or there is an $i$ in the other $n-t$ coordinates such that the probability of success of $i$ conditioned on success event $\clE$ is bounded away from 1. This is done by showing that if the probability of $\clE$ and the probability of success in $i$ conditioned on $\clE$ are both large, we can give a protocol $\clP'$ for $\sfV$ that is ``too efficient''. Now our lower bound in the $c \geq 1$ case is in terms of $\eff_\eps(\sfV,p)$, which intuitively speaking, corresponds to the inverse of the maximum probability of not aborting in a zero-communication protocol in which the $l$ parties either abort, or produce outputs that satisfy $\sfV$ with probability at least $1-\eps$ (conditioned on not aborting). Therefore, $\clP'$ for us will be a zero-communication protocol with aborts that computes $\sfV$ with high probability conditioned on not aborting, whose probability of not aborting is too high.

For simplicity, we shall give an overview of the proof with only two parties Alice and Bob; the proof for $l$ parties follows similarly. When Alice and Bob's inputs are $x_i$ and $y_i$ respectively at the $i$-th coordinates in $\clP$, we define a state $\ket{\vph}_{x_iy_i}$ that represents the state at the end of $\clP$ conditioned on $\clE$. Considering the state at the end instead of round by round is the same approach as that taken in \cite{JRS03}, who use it to show a direct sum theorem. On input $(x_i,y_i)$ in $\clP'$, Alice and Bob will try to either abort, or get a shared state close to $\ket{\vph}_{x_iy_i}$. Once they have this state, they can perform measurements on the $i$-th output registers to give their outputs $(a_i,b_i)$. Their output distribution will be close to the output distribution in the $i$-th coordinate of $\clP$ conditioned on $\clE$; hence if the probability of success on $i$ conditioned on $\clE$ is too large, the probability of Alice and Bob correctly computing $\sfV$ in $\clP'$ conditioned on not aborting is also large. Hence our proof mainly consists of showing how Alice and Bob can get the shared state close to $\ket{\vph}_{x_iy_i}$ with probability of aborting $2^{-O(c)}$, where $cn$ is the communication in $\clP$. Since the probability of aborting in $\clP'$ cannot be smaller than $\eff$, this gives the desired lower bound on the communication of $\clP$ in terms of $\eff$.

In the $c < 1$ case, our proof is very similar to the proof of a parallel repetition theorem for non-local games with product distributions due to \cite{JPY14}. The main difference between that $c \geq 1$ case and the parallel repetition of $c < 1$ case is that in the latter, we need to show that Alice and Bob can get the shared state $\ket{\vph}$ by local unitaries (without aborting).  We briefly describe their proof below.

\paragraph{Parallel repetition for games under product distribution.} Let $\ket{\vph}_{x_i}$ be the superposition of $\ket{\vph}_{x_iy_i}$ over the distribution of $Y_i$, $\ket{\vph}_{y_i}$ be the superposition over the distribution of $X_i$,  and $\ket{\vph}$ be the superposition over both. If the probability of $\clE$ is large, then conditioning on it, the following can be shown:
\begin{enumerate}
\item By chain rule of mutual information, there is an $X_i$ whose mutual information with Bob's registers in $\ket{\vph}$ is small. Hence by Uhlmann's theorem, there exist unitaries $U_{x_i}$ acting on Alice's registers that take $\ket{\vph}$ close to $\ket{\vph}_{x_i}$.
\item Similarly, the mutual information between $Y_i$ and Alice's registers in $\ket{\vph}$ is small, and hence there exist unitaries $V_{y_i}$ acting on Bob's registers that take $\ket{\vph}$ close to $\ket{\vph}_{y_i}$.
\item By applying the quantum operation that measures the $X_i$ register and records the outcome, it can be shown that $V_{y_i}$ also takes $\ket{\vph}_{x_i}$ to $\ket{\vph}_{x_iy_i}$.
\item Since $U_{x_i}$ and $V_{y_i}$ act on disjoint registers, $U_{x_i}\otimes V_{y_i}$ then takes $\ket{\vph}$ close to $\ket{\vph}_{x_iy_i}$.
\end{enumerate}
Alice and Bob can thus share $\ket{\vph}$ as entanglement, and get close to $\ket{\vph}_{x_iy_i}$ by local unitariess $U_{x_i}$ and $V_{y_i}$. In case (i) of our proof, everything is similar to this, except that the distance between $\ket{\vph}$ and $\ket{\vph}_{x_i}$ also accounts for $c^\A$, $c^\A n$ being Alice's total communication to Bob, and the distance between $\ket{\vph}$ and $\ket{\vph}_{y_i}$ also accounts for $c^\B$, $c^\B n$ being Bob's communication.

If we wish a give a proof for case (i) with anchored distributions instead of product distributions, we would need to follow the equivalent steps in the proof of the parallel repetition theorem for anchored games given in \cite{BVY17} or the alternative proof given in \cite{JK20} instead, and account for communication there.

\paragraph{Direct product for communication under product distribution.} In case $c \geq 1$, we cannot use Uhlmann unitaries to go from $\ket{\vph}$ to $\ket{\vph}_{x_i}$ and $\ket{\vph}_{y_i}$, as there is a lot of dependence between Alice's registers and Bob's registers due to communication. But we can use a compression scheme due to \cite{JRS02,JRS03} which says that if the mutual information between $X_i$ and Bob's registers is $c$, then there exist measurement operators $M_{x_i}$ acting on Alice's registers which succed on $\ket{\vph}$ with probability $2^{-c}$, and on success take it close to $\ket{\vph}_{x_i}$. Following parallel repetition proof we can show:
\begin{enumerate}
\item If the total communication from Alice to Bob in $\clP$ is $c^\A n$, then the mutual information between $X_1\ldots X_n$ and Bob's registers in $\ket{\vph}$ is $O(c^\A n)$. By chain rule of mutual information, there exists an $i$ such that the mutual information between $X_i$ and Bob's registers is $O(c^\A)$, and hence there exist measurement operators $M_{x_i}$ acting on Alice's registers which succeed with probability $2^{-O(c^\A)}$ on $\ket{\vph}$ and on success take $\ket{\vph}$ close to $\ket{\vph}_{x_i}$.
\item Similarly, if the total communication from Bob to Alice in $\clP$ is $c^\B n$, then there exist measurement operators $N_{y_i}$ acting on Bob's registers which succeed with probability $2^{-O(c^\B)}$ on $\ket{\vph}$ and on success take $\ket{\vph}$ close to $\ket{\vph}_{y_i}$.
\item By applying the same argument with the operation measuring the $X_i$ register and recording the outcome, it can be shown that $N_{y_i}$ succeeds on $\ket{\vph}_{x_i}$ with probability $2^{-O(c^\B)}$ and on success takes it close to $\ket{\vph}_{x_iy_i}$.
\end{enumerate}
However, unlike in the case of unitaries, even though $M_{x_i}$ and $N_{y_i}$ commute, there is a problem in combining items 2 and 3 above to say that $M_{x_i}\otimes N_{y_i}$ succeed on $\ket{\vph}$ with probability $2^{-O(c^\A + c^\B)}$ and on success take it close to $\ket{\vph}_{x_iy_i}$. Since $\sqrt{\frac{1}{2^{-O(c^\A)}}}M_{x_i}\ket{\vph}$ (i.e., the normalized state on success of $M_{x_i}$ on $\ket{\vph}$) is only close to $\ket{\vph}_{x_i}$ rather than exactly equal to it, acting $N_{y_i}$ on this state cannot take it close to $\ket{\vph}_{x_iy_i}$, unless the distance between $\sqrt{\frac{1}{2^{-O(c^\A)}}}M_{x_i}\ket{\vph}$ and $\ket{\vph}_{x_i}$ is of the same order as the success probability of $M_{y_i}$ on $\ket{\vph}_{x_i}$. This distance figures in the exponent in the success probability $2^{-O(c^\A)}$, so we cannot afford to make it that small.

Instead we shall directly try to get projectors $N'_{y_i}$ that succeed with high probability on $\ket{\rho}$, which is we what we call the superposition over $X_i$ of $\sqrt{\frac{1}{2^{-O(c^\A)}}}M_{x_i}\ket{\vph}$, and on success take it close to $\ket{\vph}_{y_i}$ (these will also take $\ket{\rho}_{x_i}$ close to $\ket{\vph}_{x_iy_i}$). Since we do not have a bound on the mutual information between $Y_i$ and Alice's registers in $\rho$, we prove what we call the Substate Perturbation Lemma in order to do this. The quantity that is actually of relevance in the \cite{JRS03} compression scheme is the smoothed relative min-entropy $\sfD^\eps_\infty$ between $\vph_{Y_iA}$ and $\vph_{Y_i}\otimes\vph_A$ ($A$ being Alice's registers), which is $O(c^\B/\eps^2)$ if the mutual information between $Y_i$ and $A$ is $O(c^\B)$, due to the Quantum Substate Theorem \cite{JRS02, JRS09, JN12}. In the Substate Perturbation Lemma, which is one of our main technical contributions, we show that if $D^\eps_\infty(\vph_{Y_iA}\Vert\vph_{Y_i}\otimes\vph_A)$ is $c'$ and $\rho_A$ and $\vph_A$ are $\delta$-close, then $\sfD^{3\eps+\delta}_\infty(\vph_{Y_iA}\Vert\vph_{Y_i}\otimes\rho_A)$ is $O(c')$. Using the \cite{JRS03} compression scheme, this lets us get projectors $N'_{y_i}$ on Bob's registers that succeed with probability $2^{-O(c^\B)}$ on $\ket{\rho}$ and on success take it close to $\ket{\vph}_{y_i}$.

The protocol $\clP'$ will thus involve the following: Alice and Bob share $\ket{\vph}$ as entanglement and on inputs $(x_i,y_i)$, apply the measurements $\{M_{x_i}, \Id - M_{x_i}\}$ and $\{N'_{y_i},\Id - N'_{y_i}\}$ on it. They abort if the $M_{x_i}$ or $N'_{y_i}$ projector does not succeed. Since $M_{x_i}\otimes N'_{y_i}$ succeeds on $\ket{\vph}$ with probability $2^{-O(c^\A+c^\B)} = 2^{-O(c)}$, $\clP'$ does not abort with probability $2^{-O(c)}$ and on not aborting, gets a state close to $\ket{\vph}_{x_iy_i}$.

\subsection{Security of DIQKD and DIRE with leakage}
The \cite{JMS17} protocol is based on the Magic Square non-local game. In a single copy of the Magic Square game, henceforth denoted by $\MS$, Alice and Bob receive trits $x$ and $y$ and are required to output 3-bit strings $a$ and $b$ which respectively have even and odd parity; they win the game if their outputs satisfy the condition $a[y] = b[x]$. In the \cite{JMS17} protocol, Alice and Bob have boxes which are compatible with $n$ copies of $\MS$. Using trusted private randomness, Alice and Bob generate i.i.d. inputs $x_i,y_i$ for each game and generate outputs $a_i,b_i$. The inputs $x_i, y_i$ are then publicly communicated. Alice and Bob select a small random subset of $[n]$ to test the $\MS$ winning condition on, i.e., they check if $a_i[y_i]=b_i[x_i]$ for $i$ in that subset (up to error tolerance). If the test passes, they select $K^\A = (a_i[y_i])_i$ and $K^\B = (b_i[x_i])_i$ as their raw secret keys ; otherwise the protocol aborts. Due to error correction and privacy amplification, we can get a linear amount of secret key from this scheme if we can show \cite{Ren-th}
\[ \sfH^\eps_2(K^\A|\tE')_\rho - \sfH^\eps_0(K^\A|K^\B)_\rho = \Omega(n), \]
where $\sfH^\eps_2$ is the $\eps$-smoothed conditional Renyi-2 entropy and $\sfH^\eps_0$ is the $\eps$-smoothed conditional Hartley entropy, $\rho$ is the shared state of Alice, Bob and Eve conditioned on not aborting, and $\tE'$ is everything Eve holds at the end of the protocol, including a quantum purification of Alice and Bob's systems and also the classical information $X_iY_i$ that Alice and Bob have communicated publicly. $\sfH^\eps_2(K^\A|\tE')_\rho$ is lower bounded by the conditional min-entropy $\sfH^\eps_\infty(K^\A|K^\B)_\rho$, which is what we shall actually be working with for the parallel security proof. 

\paragraph{Challenges in a standard sequential security proof approaches.}

As stated before, the most widely-used tool in sequential security proofs is the Entropy Accumulation Theorem \cite{DFR17,AFRV19}. Suppose the information released to Eve when Alice and Bob play the $i$-th game is $T_i$, and Eve's quantum register is $\tE$. Then in order to apply the Entropy Accumulation Theorem to bound $\sfH_\infty^\eps(K^\A|T_1\ldots T_n\tE)_\rho$, we require the Markov condition $(A_1\ldots A_{i-1}) - (T_1\ldots T_{i-1}\tE)-T_i$ for all $i$, i.e., the information leaked in the $i$-th round is independent of the Alice's outputs of the rounds before $i$, given Eve's side information before the $i$-th round. In the setting without leakage, $T_i$ is just Alice and Bob's inputs $X_iY_i$ for the $i$-th round, which are picked with trusted private randomness, and thus can be made independent of everything else. In the setting with leakage however, $T_i$ would include the information leaked by Alice and Bob's boxes in the $i$-th round as well. Once we allow the boxes to leak information, there is nothing stopping them from leaking information about the outputs of the $(i-1)$-th round in the $i$-th round. Thus imposing the Markov condition here feels fairly unnatural, and closes off the possibility of using Entropy Accumulation in the model with leakage.

\paragraph{Parallel security proof.} Instead we closely follow the approach of \cite{JMS17} in giving a parallel security proof for their protocol, where no Markov condition is required.
The security proof of \cite{JMS17} is based on the parallel repetition theorem for non-local games under product distributions \cite{JPY14}. Since we are working in the setting with leakage, instead of a parallel repetition theorem for games, we use our direct product theorem for communication. The communication setting with 3 players exactly corresponds to the leakage model between the parties Alice, Bob and Eve in QKD. Case (i) of our direct product theorem says that if communication is $cn$ for for sufficiently small $c<1$, then the probability of computing $n$ copies of a non-local game's predicate correctly goes down exponentially in $n$.

The game we consider is a three-player version of $\MS$, which is a hybrid of the games considered by \cite{JMS17} and \cite{Vid17}, and this gives a simplified version of the \cite{JMS17} proof. In this game, which we call $\MSe$, Alice and Bob play $\MS$ between them, and in addition Eve, who has no input, has to guess both their inputs $x, y$, and Alice's output bit $a[y]$ (note that this makes the input ditribution product). Due to technical reasons, we also need to include the following feature in the game: Alice and Eve get additional independent input bits $z$ and $z'$, and Alice and Bob's winning condition $a[y]=b[x]$ not being satisfied is forgiven if $z=z'$, but we shall ignore the effects of introducing this condition for now. The winning probability of this game is strictly smaller than $\frac{1}{9}$ (which is Eve's probability of correctly guessing $x,y$). Due to our direct product result, in the presence of a bounded amount of communication before the outputs are produced, the winning probability of $n$ copies of this game is $\left(\frac{1}{9}(1-\nu)\right)^{\Omega(n)}$ for some $\nu > 0$.

Since Alice and Bob have performed the test to see that $a_i[y_i]=b_i[x_i]$ on a random subset, this condition is satisfied in most locations with high probability conditioned on not aborting. Therefore, $\MSe$ is won if Eve can correctly guess $x_i,y_i,a_i[y_i]$. Now, suppose $\vph_{K^\A K^\B X_1\ldots X_nY_1\ldots Y_n\tE}$ is the shared quantum state before $x_1\ldots x_n,y_1\ldots y_n$ are communicated, conditioned on not aborting\footnote{Alice and Bob cannot actually check the abort condition before $x_1\ldots x_n,y_1\ldots y_n$ are communicated, but the aborting condition is a well-defined event on $K^\A K^\B XY$ and thus can be conditioned on before this.}, with $\tE$ being Eve's quantum register. Operationally $\sfH_\infty(X_1\ldots X_nY_1\ldots Y_nK^\A|\tE)_\vph$ is the negative logarithm of Eve's probability of guessing $x_1\ldots x_ny_1\ldots y_nk^\A$, which is the probability of winning $n$ instances of $\MSe$, since Alice and Bob's winning condition is satisfied with high probability (adding smoothing to this definition allows us to consider the guessing probability in states close to $\vph$ instead). Hence by the direct product theorem, in the presence of a bounded amount of communication, $\sfH^\eps_\infty(X_1\ldots X_nY_1\ldots Y_nK^\A|\tE)_\vph$ is $\Omega(n(\log 9 + \log(1/(1-\nu)))$. By the chain rule of conditional min-entropy, this means that $\sfH^\eps_\infty(K^\A|XY\tE)_\vph$ is $\Omega(n\log(1/(1-\nu)))$. We remark that since our direct product theorem is not ``perfect'', i.e., the exponent we have is $\Omega(n)$ instead of $n$, we can only have Alice and Bob communicate a subset of $x_1\ldots x_ny_1\ldots y_n$ here instead of all of them (and $XY$ in the notation refers to the subset), and use those for key generation, so as not to make $\sfH^\eps_\infty(K^\A|XY\tE)_\vph$ negative.

In the actual state $\rho$ after $xy$ is released, Eve can do some local operations on $XY\tE$, but these do not change $\sfH^\eps_\infty(K^\A|XY\tE)_\vph$, and hence we have the same lower bound for $\sfH^\eps_\infty(K^\A|XY\tE)_\rho$. In order to upper bound $\sfH^\eps_0(K^\B|K^\A)_\rho$, we use the operational interpretation of $\sfH^\eps_0(K^\B|K^\A)_\rho$ as the maximum number of possible values of $K^\B$ given $K^\A$. As mentioned before, conditioned on not aborting, $K^\A$ and $K^\B$ differ in very few locations with high probability, and hence we can bound this quantity.

\begin{remark}
An alternate security proof of the \cite{JMS17} protocol was given in \cite{Vid17} by using the parallel repetition of anchored games instead of product games. A version of case (i) of Theorem \ref{thm:dpt} with anchored games could also be used to follow this proof instead, to prove security against leakage.
\end{remark}

In our sequential security proofs for DIQKD and DIRE, we shall actually need to work with $\sfH_2^\eps(K^\A|XY\tE)_\vph$ instead. $\sfH_2(K^\A|XY\tE)_\vph$ has the operational interpretation that it is Eve's guessing probability for $K^\A$, when she is constrained to holding a canonical purification of Alice and Bob's state and doing the same measurements as them. If Eve is doing the same measurements as Alice and Bob, that in particular means that she is acting sequentially --- if we were working with $\sfH_\infty$ instead, even though Alice and Bob were acting sequentially, we could not force Eve to. With this interpretation in mind, we need to upper bound the probability of winning $n$ sequential copies of $\MSe$ in the presence of leakage. Without leakage, this probability can be upper bounded by a much simpler argument with exactly $\left(\frac{1}{9}(1-\nu)\right)^n$ --- the improvement in key rate comes from the exponent being $n$ instead of $\Omega(n)$. Once we have this upper bound, it is easy to incorporate leakage in it as well (by the same argument which could improve case (i) of Theorem \ref{thm:dpt}. This is because we can get a protocol without leakage from a protocol with leakage by making each party guess the leaked transcript from shared randomness, and their guesses are correct with probability $2^{-cn}$. This factor in the guessing probability subtracts $cn$ from the key rate due to leakage.

\section{Preliminaries}\label{sec:prelim}
\subsection{Probability theory}
We shall denote the probability distribution of a random variable $X$ on some set $\clX$ by $\sfP_X$. For any event $\clE$ on $\clX$, the distribution of $X$ conditioned on $\clE$ will be denoted by $\sfP_{X|\clE}$. For joint random variables $XY$, $\sfP_{X|Y=y}(x)$ is the conditional distribution of $X$ given $Y=y$; when it is clear from context which variable's value is being conditioned on, we shall often shorten this to $\sfP_{X|y}$. We shall use $\sfP_{XY}\sfP_{Z|X}$ to refer to the distribution
\[ (\sfP_{XY}\sfP_{Z|X})(x,y,z) = \sfP_{XY}(x,y)\cdot\sfP_{Z|X=x}(z).\]
For two distributions $\sfP_X$ and $\sfP_{X'}$ on the same set $\clX$, the $\ell_1$ distance between them is defined as
\[ \Vert\sfP_X - \sfP_{X'}\Vert_1 = \sum_{x\in\clX}|\sfP_X(x) - \sfP_{X'}(x)|.\]

\begin{fact}
For joint distributions $\sfP_{XY}$ and $\sfP_{X'Y'}$ on the same sets,
\[ \Vert\sfP_X -  \sfP_{X'}\Vert_1 \leq \Vert\sfP_{XY} - \sfP_{X'Y'}\Vert_1.\]
\end{fact}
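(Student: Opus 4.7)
The plan is to reduce the inequality to the triangle inequality applied pointwise, after expressing each marginal as a sum over the other coordinate. Concretely, I would start from the definition $\Vert\sfP_X - \sfP_{X'}\Vert_1 = \sum_{x} |\sfP_X(x) - \sfP_{X'}(x)|$, and then rewrite each marginal probability as $\sfP_X(x) = \sum_y \sfP_{XY}(x,y)$ and likewise for $\sfP_{X'}$.

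Next, I would pull the difference inside the sum over $y$, obtaining $|\sfP_X(x) - \sfP_{X'}(x)| = \bigl|\sum_y (\sfP_{XY}(x,y) - \sfP_{X'Y'}(x,y))\bigr|$, and apply the triangle inequality to get $|\sfP_X(x) - \sfP_{X'}(x)| \leq \sum_y |\sfP_{XY}(x,y) - \sfP_{X'Y'}(x,y)|$. Summing over $x$ then yields the right-hand side $\Vert\sfP_{XY} - \sfP_{X'Y'}\Vert_1$ exactly.

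There is no real obstacle here; the entire argument is a one-line application of the triangle inequality, and no auxiliary results from the paper are required. The only thing to be careful about is formatting the displayed computation as a single chain of equalities and one inequality, so the proof reads cleanly as a data-processing (monotonicity under marginalization) statement for total variation distance.
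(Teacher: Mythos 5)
Your proof is correct and is the standard argument; the paper states this as a Fact without proof, so there is nothing to compare against, but your one-line application of the triangle inequality after expanding the marginals is exactly the expected derivation.
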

\begin{fact}\label{l1-dist}
For two distributions $\sfP_X$ and $\sfP_{X'}$ on the same set and an event $\clE$ on the set,
\[ |\sfP_X(\clE) - \sfP_{X'}(\clE)| \leq \frac{1}{2}\Vert\sfP_X - \sfP_{X'}\Vert_1.\]
\end{fact}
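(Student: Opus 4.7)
The plan is to verify this standard total-variation inequality by splitting the support into the two regions where one distribution dominates the other. First I would expand
\[ \sfP_X(\clE) - \sfP_{X'}(\clE) = \sum_{x \in \clE}\bigl(\sfP_X(x) - \sfP_{X'}(x)\bigr), \]
then partition $\clX = \clX^+ \sqcup \clX^-$ with $\clX^+ = \{x : \sfP_X(x) \geq \sfP_{X'}(x)\}$ and $\clX^- = \clX \setminus \clX^+$. Because $\sfP_X$ and $\sfP_{X'}$ are both probability distributions, $\sum_x\bigl(\sfP_X(x) - \sfP_{X'}(x)\bigr) = 0$, which forces the positive and negative parts of this signed sum to be equal in magnitude; concretely,
\[ \sum_{x \in \clX^+}\bigl(\sfP_X(x) - \sfP_{X'}(x)\bigr) \;=\; \sum_{x \in \clX^-}\bigl(\sfP_{X'}(x) - \sfP_X(x)\bigr) \;=\; \tfrac{1}{2}\Vert\sfP_X - \sfP_{X'}\Vert_1. \]

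Next I would restrict to $\clE$: the terms indexed by $\clE \cap \clX^+$ are nonnegative and sum to at most $\tfrac{1}{2}\Vert\sfP_X - \sfP_{X'}\Vert_1$, while the terms indexed by $\clE \cap \clX^-$ are nonpositive. Discarding the nonpositive contribution can only increase the sum, so $\sfP_X(\clE) - \sfP_{X'}(\clE) \leq \tfrac{1}{2}\Vert\sfP_X - \sfP_{X'}\Vert_1$. Swapping the roles of $X$ and $X'$ gives the matching lower bound, and combining the two yields the stated absolute-value bound.

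There is no genuine obstacle in this argument; the only subtle point worth flagging is that the factor of $\tfrac{1}{2}$ comes precisely from the observation that the positive and negative pointwise discrepancies contribute equal mass to $\Vert\sfP_X - \sfP_{X'}\Vert_1$, a consequence of both distributions summing to $1$. A slicker variant, which I might use instead for brevity, is to write $\mathbf{1}_\clE(x) = \tfrac{1}{2} + \bigl(\mathbf{1}_\clE(x) - \tfrac{1}{2}\bigr)$, observe that the constant term contributes zero because $\sum_x\bigl(\sfP_X(x) - \sfP_{X'}(x)\bigr) = 0$, and bound the remaining sum using $|\mathbf{1}_\clE(x) - \tfrac{1}{2}| \leq \tfrac{1}{2}$ pointwise together with the triangle inequality.
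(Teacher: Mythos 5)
Your proof is correct; both the partition argument and the slicker $\mathbf{1}_\clE - \tfrac{1}{2}$ variant are valid and standard. The paper itself states Fact~\ref{l1-dist} without proof (it is the classical identity relating total variation distance to the $\ell_1$ norm), so there is no paper proof to compare against; either of your two arguments would serve as a complete justification.
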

The following result is a consequence of the well-known Serfling bound.
\begin{fact}[\cite{TL17}]\label{fc:serfling}
Let $Z=Z_1\ldots Z_n$ be $n$ binary random variables with an arbitrary joint distribution, and let $T$ be a random subset of size $\gamma n$ for $0 \leq \gamma \leq 1$, picked uniformly among all such subsets of $[n]$ and independently of $Z$. Then,
\[ \Pr\left[\left(\sum_{i\in T}Z_i \geq (1-\eps)\gamma n\right) \land \left(\sum_{i\in[n]}Z_i < (1-2\eps)n\right)\right] \leq 2^{-2\eps^2\gamma n}.\]
\end{fact}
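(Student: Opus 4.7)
The plan is to reduce the statement, by conditioning on $Z$, to a standard tail bound for sampling without replacement. Since $T$ is chosen uniformly among subsets of $[n]$ of size $\gamma n$ and is independent of $Z$, once we fix $Z = z \in \{0,1\}^n$, the random variable $\sum_{i \in T} z_i$ is the sum of a uniform size-$\gamma n$ sample drawn without replacement from the deterministic population $z_1,\ldots,z_n$, whose population mean I will denote $\mu(z) = \tfrac{1}{n}\sum_i z_i$.

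Next I would split on the value of $\mu(z)$. If $\mu(z) \geq 1 - 2\eps$, i.e.\ $\sum_i z_i \geq (1-2\eps)n$, then the second conjunct in the event already fails for this realization, so it contributes zero to the joint probability. Hence only realizations with $\mu(z) < 1 - 2\eps$ matter, and for these the event $\sum_{i\in T} z_i \geq (1-\eps)\gamma n$ forces the sample mean $\tfrac{1}{\gamma n}\sum_{i\in T}z_i$ to exceed the population mean $\mu(z)$ by at least $(1-\eps) - \mu(z) > \eps$.

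The remaining step is to invoke the Hoeffding--Serfling inequality for sampling without replacement from a $[0,1]$-valued population, which yields
\[
\Pr\!\left[\frac{1}{\gamma n}\sum_{i\in T} z_i - \mu(z) \geq \eps \,\Big|\, Z = z\right] \;\leq\; e^{-2\eps^2 \gamma n} \;\leq\; 2^{-2\eps^2 \gamma n},
\]
where the last inequality uses $\log_2 e > 1$. Taking expectation over $z$ (equivalently, using the law of total probability together with the independence of $T$ and $Z$) then gives the claimed bound. The only real choice to be made is which form of Serfling to cite: the plain $e^{-2\eps^2 \gamma n}$ bound suffices here, but the sharper version carrying a $\bigl(1-(\gamma n - 1)/n\bigr)$ factor in the denominator of the exponent would give a strictly stronger estimate. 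I would not expect any genuine obstacle; the only bookkeeping subtlety is to verify that independence of $T$ from $Z$ is preserved under conditioning so that Serfling applies to $T$ with the fixed deterministic values $z_i$.
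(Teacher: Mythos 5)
Your proof is correct. Since the paper states Fact~\ref{fc:serfling} as a cited result from \cite{TL17} and does not give its own proof, there is nothing in the paper itself to compare against; the argument you give — condition on $Z=z$ using independence, observe the second conjunct is vacuous unless the population mean is below $1-2\eps$, note the first conjunct then forces a deviation of the sample mean from the population mean exceeding $\eps$, apply the Hoeffding/Serfling bound for sampling without replacement, and convert $e^{-2\eps^2\gamma n}\leq 2^{-2\eps^2\gamma n}$ via $\log_2 e>1$ — is indeed the standard derivation used in \cite{TL17} and elsewhere in the QKD literature.
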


\subsection{Quantum information}
The $\ell_1$ distance between two quantum states $\rho$ and $\sigma$ is given by
\[ \Vert\rho-\sigma\Vert_1 = \Tr\sqrt{(\rho-\sigma)^\dagger(\rho-\sigma)} = \Tr|\rho-\sigma|.\]
The fidelity between two quantum states is given by
\[ \sfF(\rho,\sigma) = \Vert\sqrt{\rho}\sqrt{\sigma}\Vert_1 = \max_U\Tr(U\sqrt{\rho}\sqrt{\sigma}).\]
The purified distance based on fidelity is given by
\[ \Delta(\rho,\sigma) = \sqrt{1-\sfF(\rho,\sigma)^2}.\]
The Bures distance which is also based on fidelity is given by
\[ \sfB(\rho,\sigma) = \sqrt{1-\sfF(\rho,\sigma)}.\]

$\ell_1$ distance, $\Delta$ and $\sfB$ are all metrics that satisfy the triangle inequality.
\begin{fact}[Uhlmann's theorem]\label{fc:uhlmann}
Suppose $\rho$ and $\sigma$ are states on register $X$ which are purified to $\ket{\rho}_{XY}$ and $\ket{\sigma}_{XY'}$ with $Y$ ad $Y'$ not necessarily being of the same dimension, then it holds that
\[ \sfF(\rho, \sigma) = \max_U|\matel{\rho}{\Id_X\otimes U}{\sigma}|\]
where the maximization is over isometries taking $Y'$ to $Y$.
\end{fact}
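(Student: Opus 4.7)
The plan is to reduce the general statement to a computation involving canonical purifications. First I would fix a reference purifying space $Y_0$ of dimension equal to $\dim(X)$ and define the canonical purifications $\ket{\sqrt{\rho}}_{XY_0}=\sum_i\ket{i}_X\otimes\sqrt{\rho}\ket{i}_{Y_0}$ and $\ket{\sqrt{\sigma}}_{XY_0}=\sum_i\ket{i}_X\otimes\sqrt{\sigma}\ket{i}_{Y_0}$, identifying $Y_0$ with $X$ via a fixed orthonormal basis. Using the standard fact that any two purifications of the same reduced state are related by an isometry on the purifying system, I can write $\ket{\rho}_{XY}=(\Id_X\otimes V)\ket{\sqrt{\rho}}$ and $\ket{\sigma}_{XY'}=(\Id_X\otimes W)\ket{\sqrt{\sigma}}$ for isometries $V:Y_0\to Y$ and $W:Y_0\to Y'$. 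Then for any isometry $U:Y'\to Y$ we have $\matel{\rho}{\Id_X\otimes U}{\sigma}=\matel{\sqrt{\rho}}{\Id_X\otimes V^\dagger UW}{\sqrt{\sigma}}$, reducing the problem to analyzing overlaps of the form $\matel{\sqrt{\rho}}{\Id_X\otimes M}{\sqrt{\sigma}}$ with $M$ ranging over operators on $Y_0$.

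Next I would carry out the direct computation with these canonical states. Expanding in the chosen basis, for any operator $M$ on $Y_0$ one obtains
\[ \matel{\sqrt{\rho}}{\Id_X\otimes M}{\sqrt{\sigma}}=\sum_{i,j}\brakett{i}{j}\matel{i}{\sqrt{\rho}\,M\,\sqrt{\sigma}}{j}=\Tr(\sqrt{\rho}\,M\,\sqrt{\sigma})=\Tr(M\sqrt{\sigma}\sqrt{\rho}), \]
using that $\sqrt{\rho}$ is self-adjoint and cyclicity of the trace. So the task becomes understanding the maximum of $|\Tr(M\sqrt{\sigma}\sqrt{\rho})|$ as $M=V^\dagger UW$ varies over operators on $Y_0$ arising from an isometry $U:Y'\to Y$.

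Finally I would invoke the variational characterization of the trace norm, $\|N\|_1=\max_R|\Tr(RN)|$ where $R$ ranges over unitaries of matching dimension; this follows from the polar decomposition $N=|N|\Theta$ together with the bound $|\Tr(R|N|\Theta)|\le\Tr|N|$ attained at $R=\Theta^\dagger$. Applied to $N=\sqrt{\sigma}\sqrt{\rho}$ this gives exactly $\sfF(\rho,\sigma)=\|\sqrt{\sigma}\sqrt{\rho}\|_1$. The one fiddly point, which I expect to be the main obstacle, is the dimension-matching: I must verify that as $U$ ranges over isometries $Y'\to Y$, the operator $M=V^\dagger UW$ is always a contraction on $Y_0$ (yielding the upper bound $|\matel{\rho}{\Id_X\otimes U}{\sigma}|\le\sfF(\rho,\sigma)$ for every such $U$), and that the optimizer $M=\Theta^\dagger$ can actually be realized by some isometry $U$. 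For the latter one defines $U$ on $\mathrm{im}(W)\subset Y'$ by $UW\ket{v}=V\Theta^\dagger\ket{v}$ and extends isometrically to the orthogonal complement of $\mathrm{im}(W)$ in $Y'$, which is possible provided $\dim Y\ge\dim Y'$, the same condition under which an isometry $Y'\to Y$ exists at all. Putting the three steps together yields both the upper bound and the attaining isometry asserted in the statement.
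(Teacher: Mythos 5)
The paper states Uhlmann's theorem as a standard fact without proof, so there is no paper argument to compare against; your proposal is the textbook proof via vectorization, the trace formula for overlaps of canonical purifications, and the variational characterization of the trace norm, and its overall structure is the right one. Your treatment of the dimension-matching subtlety at the end (the bound $M^\dagger M = W^\dagger U^\dagger V V^\dagger U W \le W^\dagger W = \Id$ showing every $M = V^\dagger U W$ is a contraction, and the isometric extension of $UW\ket{v}\mapsto V\Theta^\dagger\ket{v}$ under the hypothesis $\dim Y \ge \dim Y'$, which is also needed just for an isometry $Y'\to Y$ to exist) is correct and is the part most sources skip.

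There is, however, one step that fails as written. The state $\ket{\sqrt{\rho}}_{XY_0}=\sum_i\ket{i}_X\otimes\sqrt{\rho}\ket{i}_{Y_0}$ is not a purification of $\rho$ on $X$: tracing out $Y_0$ gives $\rho^T$ in the fixed basis, not $\rho$ (it is $\rho$ on $Y_0$ that one recovers). Since $\ket{\rho}_{XY}$ does have reduced state $\rho$ on $X$, the two states are not purifications of the same reduced state on $X$, and the isometry $V:Y_0\to Y$ with $\ket{\rho}_{XY}=(\Id_X\otimes V)\ket{\sqrt{\rho}}$ that your first step invokes does not exist (except when $\rho=\rho^T$). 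The fix is to place $\sqrt{\rho}$ on the $X$ side, $\ket{\sqrt{\rho}}_{XY_0}=\sum_i\sqrt{\rho}\ket{i}_X\otimes\ket{i}_{Y_0}$; the overlap computation then gives $\matel{\sqrt{\rho}}{\Id_X\otimes M}{\sqrt{\sigma}}=\sum_{i,j}(\sqrt{\rho}\sqrt{\sigma})_{ij}M_{ij}=\Tr\left(\sqrt{\rho}\sqrt{\sigma}\,M^T\right)$ rather than $\Tr(M\sqrt{\sigma}\sqrt{\rho})$, but since $M\mapsto M^T$ maps contractions to contractions and unitaries to unitaries, the maximization is unaffected and you still land on $\Vert\sqrt{\rho}\sqrt{\sigma}\Vert_1=\sfF(\rho,\sigma)$. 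With this correction, the proof is complete.
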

\begin{fact}[Fuchs-van de Graaf inequality]\label{fc:fvdg}
For any pair of quantum states $\rho$ and $\sigma$,
\[ 2(1-\sfF(\rho,\sigma)) \leq \Vert\rho-\sigma\Vert_1\leq 2\sqrt{1-\sfF(\rho,\sigma)^2}.\]
Consequently,
\[ 2\sfB(\rho,\sigma)^2 \leq \norm{\rho-\sigma}_1 \leq 2\sqrt{2}\cdot\sfB(\rho,\sigma).\]
For two pure states $\ket{\psi}$ and $\ket{\phi}$, we have
\[ \Vert\state{\psi} - \state{\phi}\Vert_1 = \sqrt{1 - \sfF\left(\state{\psi},\state{\phi}\right)^2} = \sqrt{1-|\inprod{\psi}{\phi}|^2}.\]
\end{fact}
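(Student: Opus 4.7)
The plan is to prove both sides of the inequality by a reduction strategy: handle the upper bound by reducing to the pure-state case via Uhlmann's theorem, and handle the lower bound by reducing to the classical case via a variational characterization of fidelity.

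First I would establish the pure-state formula by restricting to the two-dimensional subspace $\mathrm{span}\{\ket{\psi}, \ket{\phi}\}$. Choosing an orthonormal basis for this subspace via Gram-Schmidt (taking $\ket{e_1} = \ket{\psi}$ and $\ket{e_2}$ the normalized component of $\ket{\phi}$ orthogonal to $\ket{\psi}$), the operator $\state{\psi} - \state{\phi}$ restricts to a $2\times 2$ traceless Hermitian matrix whose entries are explicit polynomials in $|\inprod{\psi}{\phi}|$. Computing its characteristic polynomial and summing the absolute values of its eigenvalues gives the stated closed form in terms of $1 - |\inprod{\psi}{\phi}|^2$.

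For the upper bound $\Vert\rho - \sigma\Vert_1 \leq 2\sqrt{1 - \sfF(\rho,\sigma)^2}$, the plan is to invoke Uhlmann's theorem (Fact~\ref{fc:uhlmann}) to obtain purifications $\ket{\Psi}, \ket{\Phi}$ on a common extending system satisfying $|\inprod{\Psi}{\Phi}| = \sfF(\rho, \sigma)$. Since the partial trace is CPTP and the trace distance is contractive under CPTP maps, one gets $\Vert\rho - \sigma\Vert_1 \leq \Vert\state{\Psi} - \state{\Phi}\Vert_1$, and the pure-state formula closes the argument.

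For the lower bound $2(1 - \sfF(\rho,\sigma)) \leq \Vert\rho - \sigma\Vert_1$, the plan is to use the Fuchs–Caves variational characterization: there is a POVM $M$ whose induced classical distributions $p, q$ on $\rho, \sigma$ satisfy $\sum_i \sqrt{p_i q_i} = \sfF(\rho, \sigma)$. Combining the identity $\sum_i (\sqrt{p_i} - \sqrt{q_i})^2 = 2(1 - \sum_i \sqrt{p_i q_i})$ with the elementary bound $(\sqrt{p_i} - \sqrt{q_i})^2 \leq (\sqrt{p_i} + \sqrt{q_i}) |\sqrt{p_i} - \sqrt{q_i}| = |p_i - q_i|$, and then using the fact that an arbitrary POVM's outcome statistics satisfy $\sum_i|p_i - q_i| \leq \Vert\rho - \sigma\Vert_1$, yields the claimed inequality.

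The main obstacle is the Fuchs–Caves variational identity used in the lower bound: proving it from scratch requires either a convex-duality argument on the fidelity optimization, or an Uhlmann plus singular-value manipulation exhibiting a measurement that achieves the minimum. Once this is granted (or cited), the rest of the argument reduces to the two-dimensional diagonalization above, monotonicity of trace distance under CPTP maps, and the elementary $\sqrt{p_i}$-manipulation, all of which are routine.
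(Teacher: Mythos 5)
The paper lists this as a \emph{Fact} with a citation and gives no proof, so there is no paper proof to compare against; your proposal is the standard textbook argument and is essentially correct. The structure — diagonalize $\state{\psi}-\state{\phi}$ on the two-dimensional span for the pure-state identity, lift to purifications via Uhlmann's theorem and use monotonicity of the trace norm under partial trace for the upper bound, and use the Fuchs–Caves measurement plus the pointwise bound $(\sqrt{p_i}-\sqrt{q_i})^2\le|p_i-q_i|$ for the lower bound — is exactly how this is proved in standard references (e.g., Fuchs–van de Graaf's original paper, Nielsen–Chuang). You correctly flag the Fuchs–Caves variational characterization $\sfF(\rho,\sigma)=\min_{\text{POVM}}\sum_i\sqrt{p_iq_i}$ as the one ingredient that needs its own citation or proof.

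One thing you should be alert to, since it would create a mismatch if you carried out your plan literally: the paper's displayed pure-state identity is off by a factor of two. The two nonzero eigenvalues of $\state{\psi}-\state{\phi}$ are $\pm\sqrt{1-|\inprod{\psi}{\phi}|^2}$, so your two-dimensional computation will (correctly) give
\[
\Vert\state{\psi}-\state{\phi}\Vert_1 \;=\; 2\sqrt{1-|\inprod{\psi}{\phi}|^2},
\]
whereas the paper writes $\sqrt{1-|\inprod{\psi}{\phi}|^2}$ without the leading $2$ (despite defining $\Vert\cdot\Vert_1 = \Tr|\cdot|$ with no $\tfrac12$ normalization). Your version is the one that is actually needed to close the upper-bound chain: monotonicity under partial trace gives $\Vert\rho-\sigma\Vert_1\le\Vert\state{\Psi}-\state{\Phi}\Vert_1 = 2\sqrt{1-|\inprod{\Psi}{\Phi}|^2}=2\sqrt{1-\sfF(\rho,\sigma)^2}$, matching the stated inequality. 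So your derivation is right; the paper's pure-state display has a typo.
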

\begin{fact}[\cite{Tom16}]\label{F-concave}
The square of the fidelity is jointly concave in both arguments, i.e.,
\[ \sfF(\eps\rho + (1-\eps)\rho', \eps\sigma + (1-\eps)\sigma')^2 \geq \eps\sfF(\rho,\sigma)^2 + (1-\eps)\sfF(\rho',\sigma')^2.\]
\end{fact}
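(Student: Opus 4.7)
The plan is to try the classical purification-plus-Uhlmann approach, made joint by a two-level classical coin register. First, by Uhlmann's theorem (Fact~\ref{fc:uhlmann}) applied to each pair $(\rho,\sigma)$ and $(\rho',\sigma')$, I would choose purifications $|\psi_\rho\rangle_{AB}, |\psi_\sigma\rangle_{AB}$ of $\rho,\sigma$ on a shared ancilla $B$ satisfying $\langle\psi_\rho|\psi_\sigma\rangle = \sfF(\rho,\sigma)$ (real and non-negative after absorbing global phases), and analogously $|\psi_{\rho'}\rangle_{AB}, |\psi_{\sigma'}\rangle_{AB}$ with $\langle\psi_{\rho'}|\psi_{\sigma'}\rangle = \sfF(\rho',\sigma')$. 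Introducing a classical register $C$ on two orthonormal levels, I would then form the joint purifications
\[ |\Psi\rangle_{ABC} = \sqrt{\eps}\,|\psi_\rho\rangle_{AB}|0\rangle_C + \sqrt{1-\eps}\,|\psi_{\rho'}\rangle_{AB}|1\rangle_C, \]
\[ |\Phi\rangle_{ABC} = \sqrt{\eps}\,|\psi_\sigma\rangle_{AB}|0\rangle_C + \sqrt{1-\eps}\,|\psi_{\sigma'}\rangle_{AB}|1\rangle_C, \]
which, because the $|0\rangle_C,|1\rangle_C$ contributions are orthogonal, purify $\bar\rho := \eps\rho + (1-\eps)\rho'$ and $\bar\sigma := \eps\sigma + (1-\eps)\sigma'$ on $A$ upon tracing $BC$. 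Uhlmann's theorem then gives $\sfF(\bar\rho,\bar\sigma) \geq |\langle\Psi|\Phi\rangle| = \eps\sfF(\rho,\sigma) + (1-\eps)\sfF(\rho',\sigma')$.

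The step I expect to be the main obstacle is the passage from this inequality to the squared one in the statement. Squaring the bound above yields only $\sfF(\bar\rho,\bar\sigma)^2 \geq (\eps\sfF(\rho,\sigma) + (1-\eps)\sfF(\rho',\sigma'))^2$, whereas the claim requires the strictly larger right-hand side $\eps\sfF(\rho,\sigma)^2 + (1-\eps)\sfF(\rho',\sigma')^2$; the difference is the Jensen gap for the convex function $x \mapsto x^2$, which goes in the wrong direction. Upgrading the argument by keeping the full Uhlmann optimization $\sfF(\bar\rho,\bar\sigma) = \max_{U_{BC}} |\langle\Psi|I_A \otimes U_{BC}|\Phi\rangle|$ does not help: block-diagonal $U_{BC} = U_0 \oplus U_1$ reproduces the same convex combination of $\sfF(\rho,\sigma)$ and $\sfF(\rho',\sigma')$, and off-diagonal blocks only introduce cross terms controlled by $\sfF(\rho,\sigma')$ and $\sfF(\rho',\sigma)$, which cannot be counted on to supply the required gain. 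Likewise, the Fuchs--Caves variational formula would present $\sfF^2$ as an infimum over POVMs of $\bigl(\sum_i \sqrt{\Tr(M_i\rho)\Tr(M_i\sigma)}\bigr)^2$, but each summand is the squared Bhattacharyya coefficient, which itself is not jointly concave on probability distributions, so that route does not close either.

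My strong suspicion is that the stated Fact is a misprint: the inequality as written is false in general. A concrete commuting counterexample is $\rho = |0\rangle\langle 0|$, $\rho' = |2\rangle\langle 2|$, $\sigma = |0\rangle\langle 0|$, $\sigma' = |1\rangle\langle 1|$ in $\mathbb{C}^3$ with $\eps = 1/2$, for which $\sfF(\rho,\sigma)^2 = 1$ and $\sfF(\rho',\sigma')^2 = 0$ giving a claimed right-hand side of $1/2$, while $\sqrt{\bar\rho}\sqrt{\bar\sigma} = \tfrac{1}{2}|0\rangle\langle 0|$ gives $\sfF(\bar\rho,\bar\sigma) = 1/2$ and hence $\sfF(\bar\rho,\bar\sigma)^2 = 1/4 < 1/2$. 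The correct and standard statement (see Watrous, Wilde, or Tomamichel) is joint concavity of $\sfF$ itself without the square; that statement is proved in full by the three-step purification argument of the first paragraph, and it is in fact the only form of concavity needed wherever Fact~\ref{F-concave} is invoked later in the paper.
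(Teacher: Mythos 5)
You are right that Fact \ref{F-concave} as stated is false, and your counterexample is valid: for commuting states the fidelity reduces to the classical Bhattacharyya coefficient, and with $\rho=\sigma=\state{0}$, $\rho'=\state{2}$, $\sigma'=\state{1}$, $\eps=1/2$ one gets $\sfF(\bar\rho,\bar\sigma)^2=1/4<1/2$. The true statements in the literature are joint concavity of $\sfF$ itself (which your purification-plus-Uhlmann argument correctly proves) and concavity of $\sfF^2$ in each argument \emph{separately}; the Fact as transcribed, with all four states varying and the square kept, is a misstatement of one of these.

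Where your proposal goes astray is the final claim that the unsquared joint concavity is all that is needed downstream. Fact \ref{F-concave} is invoked exactly once, in the proof of the Substate Perturbation Lemma (Lemma \ref{Imax-qu-close}), and there the second argument is the \emph{same} state $\sigma''_{XB}$ in every term: what is actually used is the weighted one-argument form
\[ \sfF\big(\tilde{\vph}_{XB}+\tilde{\psi}_{XB},\sigma''_{XB}\big)^2 \;\geq\; \Tr(\tilde{\vph}_{XB})\,\sfF\Big(\tfrac{\tilde{\vph}_{XB}}{\Tr(\tilde{\vph}_{XB})},\sigma''_{XB}\Big)^2 + \Tr(\tilde{\psi}_{XB})\,\sfF\Big(\tfrac{\tilde{\psi}_{XB}}{\Tr(\tilde{\psi}_{XB})},\sigma''_{XB}\Big)^2 .\]
The weight $\Tr(\tilde{\vph}_{XB})$ is essential: later in the chain it cancels the $\Tr(\tilde{\vph}_{XB})^{-1/2}$ normalization of the purification $\ket{\vph_1}_{B\tB}$, so that the bound closes with exactly $\sfF(\rho_B,\sigma_B)^2$. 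If you substitute the unsquared inequality you only obtain $\sfF(\rho'_{XB},\sigma''_{XB}) \geq \Tr(\tilde{\vph}_{XB})^{1/2}\,\sfF(\rho_B,\sigma_B)$, and since $\Tr(\tilde{\vph}_{XB})=\Tr(\rho_B U\Pi U^\dagger)$ can be strictly below $1$, the conclusion $\Delta(\rho'_{XB},\sigma''_{XB})\leq\delta_1$ no longer follows. The correct repair is therefore to restate the Fact with $\sigma'=\sigma$, i.e.\ concavity of $\rho\mapsto\sfF(\rho,\sigma)^2$, which is true: it follows from Alberti's formula $\sfF(\rho,\sigma)^2=\min_{X>0}\Tr(X\rho)\Tr(X^{-1}\sigma)$ (a minimum of functions linear in $\rho$), or from strong concavity $\sfF\big(\sum_ip_i\rho_i,\sum_iq_i\sigma_i\big)\geq\sum_i\sqrt{p_iq_i}\,\sfF(\rho_i,\sigma_i)$ by optimizing over $q$. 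Indeed your own two-branch construction delivers it once $\sigma'=\sigma$: give the branches of the second purification weights $\sqrt{q_i}$ chosen independently of $\sqrt{p_i}$ (allowed because both branches purify the same $\sigma$), and pick $q_i\propto p_i\sfF(\rho_i,\sigma)^2$; Cauchy--Schwarz then yields $\sfF\big(\sum_ip_i\rho_i,\sigma\big)\geq\big(\sum_ip_i\sfF(\rho_i,\sigma)^2\big)^{1/2}$, which is exactly the form the paper needs.
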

\begin{fact}[Data-processing inequality]\label{fc:chan-l1}
For a quantum channel $\clO$ and states $\rho$ and $\sigma$,
\[ \Vert\clO(\rho) - \clO(\sigma)\Vert_1 \leq \Vert\rho-\sigma\Vert_1 \quad \quad \text{and} \quad \quad \sfF(\clO(\rho),\clO(\sigma)) \geq \sfF(\rho,\sigma).\]
\end{fact}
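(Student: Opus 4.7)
The plan is to derive both statements from the Stinespring dilation of $\clO$: every quantum channel can be written as $\clO(\tau) = \Tr_E(V\tau V^\dagger)$ for some ancilla register $E$ and some isometry $V$ acting on the input register, where $V^\dagger V = \Id$. Each of the two inequalities then reduces to (i) isometric invariance of the relevant quantity and (ii) its monotonicity under discarding the subsystem $E$. Since $V^\dagger V = \Id$, step (i) is immediate in both cases, so the real content is step (ii).

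For the trace-distance part, I would chain
\[\Vert\clO(\rho)-\clO(\sigma)\Vert_1 = \Vert\Tr_E\bigl(V(\rho-\sigma)V^\dagger\bigr)\Vert_1 \le \Vert V(\rho-\sigma)V^\dagger\Vert_1 = \Vert\rho-\sigma\Vert_1.\]
The last equality is isometric invariance of $\Vert\cdot\Vert_1$; the middle inequality is monotonicity under partial trace, which I would establish from the variational characterization $\Vert\tau\Vert_1 = \sup\{|\Tr(B\tau)| : \Vert B\Vert_\infty \le 1\}$. Any operator $B$ achieving the supremum on the smaller system lifts to $B\otimes\Id_E$ on the larger one with the same operator norm and the same pairing value, so the smaller supremum is bounded by the larger one.

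For the fidelity part, I would invoke Uhlmann's theorem (Fact \ref{fc:uhlmann}). Fix a reference register $R$ together with purifications $\ket{\rho}_{AR}$, $\ket{\sigma}_{AR}$ that attain $\sfF(\rho,\sigma) = |\inprod{\rho}{\sigma}|$. The vectors
\[\ket{\tilde\rho} := (V\otimes\Id_R)\ket{\rho}_{AR}, \qquad \ket{\tilde\sigma} := (V\otimes\Id_R)\ket{\sigma}_{AR}\]
purify $\clO(\rho)$ and $\clO(\sigma)$ respectively, since tracing out the $ER$ part of $\state{\tilde\rho}$ yields $\Tr_E(V\rho V^\dagger) = \clO(\rho)$, and analogously for $\sigma$. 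Because $V^\dagger V = \Id$, their overlap satisfies
\[\inprod{\tilde\rho}{\tilde\sigma} = \bra{\rho}(V^\dagger V\otimes\Id_R)\ket{\sigma} = \inprod{\rho}{\sigma},\]
and Uhlmann's theorem then gives $\sfF(\clO(\rho),\clO(\sigma)) \ge |\inprod{\tilde\rho}{\tilde\sigma}| = \sfF(\rho,\sigma)$. There is no substantive obstacle here: the only point worth flagging is that Uhlmann's reference system can be held fixed while lifting purifications through $V$, which works because $V$ acts only on the input side of the channel and leaves the reference register untouched.
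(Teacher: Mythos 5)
The paper states this as a \textbf{Fact} in the preliminaries and gives no proof, so there is nothing to compare against; it is treated as a standard textbook result. Your Stinespring-dilation argument is correct and is the canonical way to prove it: both directions reduce to isometric invariance (immediate from $V^\dagger V = \Id$) plus monotonicity under partial trace, and you handle each of the two monotonicity steps cleanly --- the trace-distance part via the variational characterization $\Vert\tau\Vert_1 = \sup\{|\Tr(B\tau)| : \Vert B\Vert_\infty \leq 1\}$ with the lift $B \mapsto B\otimes\Id_E$, and the fidelity part by lifting Uhlmann-optimal purifications through $V\otimes\Id_R$ and observing the overlap is preserved. One small point worth being explicit about in the fidelity argument: the inequality $\sfF(\clO(\rho),\clO(\sigma)) \geq |\inprod{\tilde\rho}{\tilde\sigma}|$ uses the fact that Uhlmann's theorem expresses fidelity as a \emph{maximum} over purifications, so producing any single matched pair of purifications only lower-bounds the fidelity --- which is exactly the direction you want, and you did invoke this correctly.
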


The entropy of a quantum state $\rho$ on a register $Z$ is given by
\[ \sfH(\rho) = -\Tr(\rho\log \rho).\]
We shall also denote this by $\sfH(Z)_\rho$. For a state $\rho_{YZ}$ on registers $YZ$, the entropy of $Y$ conditioned on $Z$ is given by
\[ \sfH(Y|Z)_\rho = \sfH(YZ)_\rho - \sfH(Z)_\rho\]
where $\sfH(Z)_\rho$ is calculated w.r.t. the reduced state $\rho_Z$. The relative entropy between two states $\rho$ and $\sigma$ of the same dimensions is given by
\[ \sfD(\rho\Vert \sigma) = \Tr(\rho\log\rho) - \Tr(\rho\log\sigma).\]
The relative min-entropy between $\rho$ and $\sigma$ is defined as
\[ \sfD_\infty(\rho\Vert\sigma) = \min\{\lambda : \rho \leq 2^\lambda\sigma\}.\]
It is easy to see that for all $\rho$ and $\sigma$,
\[ 0 \leq \sfD(\rho\Vert\sigma) \leq \sfD_\infty(\rho\Vert\sigma).\]
\begin{fact}[Pinsker's inequality]\label{fc:pinsker}
For any two states $\rho$ and $\sigma$,
\[ \Vert\rho-\sigma\Vert_1^2 \leq 2\ln 2\cdot\sfD(\rho\Vert\sigma) \quad \text{ and } \quad \sfB(\rho,\sigma)^2 \leq \ln 2\cdot\sfD(\rho\Vert\sigma).\]
\end{fact}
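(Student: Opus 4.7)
Both halves of Pinsker's inequality follow by reducing to a classical statement via the measurement-monotonicity of relative entropy (a standard consequence of joint convexity of $\sfD$). For the first inequality, the plan is to use the two-outcome Helstrom-type POVM $\{\Pi, \Id - \Pi\}$, where $\Pi$ projects onto the positive part of the Hahn--Jordan decomposition of $\rho - \sigma$; this satisfies $\Tr(\Pi(\rho - \sigma)) = \tfrac{1}{2}\Vert\rho - \sigma\Vert_1$. The induced Bernoulli distributions $(p, 1-p)$ and $(q, 1-q)$ have $|p - q| = \tfrac{1}{2}\Vert\rho - \sigma\Vert_1$, and monotonicity gives $\sfD(\rho\Vert\sigma) \geq d(p\Vert q)$, where $d$ denotes the binary relative entropy. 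It then suffices to prove the classical binary Pinsker inequality $(p-q)^2 \leq \tfrac{\ln 2}{2}d(p\Vert q)$, which is a short calculus check: with $q$ fixed, the function $f(p) := d(p\Vert q) - \tfrac{2}{\ln 2}(p-q)^2$ satisfies $f(q) = f'(q) = 0$ and $f''(p) = \tfrac{1}{\ln 2}\bigl(\tfrac{1}{p(1-p)} - 4\bigr) \geq 0$ on $[0,1]$, so $f \geq 0$ there. Squaring $2|p-q|$ yields $\Vert\rho - \sigma\Vert_1^2 \leq 2\ln 2 \cdot \sfD(\rho\Vert\sigma)$.

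For the second inequality, I would first establish the sharper bound $\sfF(\rho,\sigma)^2 \geq 2^{-\sfD(\rho\Vert\sigma)}$; the stated inequality then follows from $1 - \sfF(\rho,\sigma) \leq 1 - \sfF(\rho,\sigma)^2 \leq 1 - 2^{-\sfD(\rho\Vert\sigma)} \leq \ln 2 \cdot \sfD(\rho\Vert\sigma)$ via the elementary inequality $1 - 2^{-x} \leq x\ln 2$ for $x \geq 0$. When $\rho$ and $\sigma$ commute with common-eigenbasis eigenvalues $p_i, q_i$, the bound is Jensen's inequality for the concave function $\log$: $\log\sum_i\sqrt{p_iq_i} = \log\sum_i p_i\sqrt{q_i/p_i} \geq \tfrac{1}{2}\sum_i p_i\log(q_i/p_i) = -\tfrac{1}{2}\sfD(\rho\Vert\sigma)$. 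For non-commuting $\rho,\sigma$, the plan is to pass to the Nussbaum--Szkola distributions $P(i,j) = a_i|\inprod{a_i}{b_j}|^2$ and $Q(i,j) = b_j|\inprod{a_i}{b_j}|^2$ built from spectral decompositions $\rho = \sum_i a_i\state{a_i}$ and $\sigma = \sum_j b_j\state{b_j}$; a direct expansion gives $\sum_{i,j}\sqrt{P(i,j)Q(i,j)} = \Tr(\sqrt{\rho}\sqrt{\sigma}) \leq \Vert\sqrt{\rho}\sqrt{\sigma}\Vert_1 = \sfF(\rho,\sigma)$, and a separate argument yields $\sfD(P\Vert Q) \leq \sfD(\rho\Vert\sigma)$, after which the classical Jensen step gives $\sfF(\rho,\sigma) \geq 2^{-\sfD(P\Vert Q)/2} \geq 2^{-\sfD(\rho\Vert\sigma)/2}$.

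The one genuinely non-commutative step is $\sfD(P\Vert Q) \leq \sfD(\rho\Vert\sigma)$ for the Nussbaum--Szkola distributions, whose standard proof uses the integral representation $-\log x = \tfrac{1}{\ln 2}\int_0^\infty\bigl(\tfrac{1}{x+t} - \tfrac{1}{1+t}\bigr)\,dt$ together with operator convexity of $x\mapsto -\log x$; this is the main obstacle if one wants a self-contained proof. All other steps --- Hahn--Jordan, the binary Pinsker calculus, Jensen for $-\log$ on the diagonal, and the arithmetic inequality $1 - 2^{-x}\leq x\ln 2$ --- are elementary. Since Pinsker is a standard textbook result (the Tomamichel monograph \cite{Tom16} cited in the excerpt is one reference), the non-commutative step could equivalently just be cited rather than proved in detail.
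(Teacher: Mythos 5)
The paper states Pinsker's inequality as a Fact and gives no proof; it is a standard textbook result (cf.\ \cite{Tom16}), so there is nothing internal to compare your argument against. Your proof of the first inequality is correct and is the usual one: Helstrom-measurement reduction to binary distributions via data processing, followed by the second-derivative check on $f(p)=d(p\Vert q)-\tfrac{2}{\ln 2}(p-q)^2$, with the constants matching.

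For the second inequality the plan also works, but the step you single out as ``the one genuinely non-commutative step'' does not exist. For the Nussbaum--Szkola pair $P(i,j)=a_i|\inprod{a_i}{b_j}|^2$, $Q(i,j)=b_j|\inprod{a_i}{b_j}|^2$, one has the exact identity $\sfD(P\Vert Q)=\sfD(\rho\Vert\sigma)$ --- equality, not just an inequality --- and it is a one-line computation:
\begin{align*}
\sfD(P\Vert Q) &= \sum_{i,j}a_i|\inprod{a_i}{b_j}|^2\log\tfrac{a_i}{b_j}
= \sum_i a_i\log a_i - \sum_{i,j}a_i|\inprod{a_i}{b_j}|^2\log b_j \\
&= \Tr(\rho\log\rho)-\Tr(\rho\log\sigma)=\sfD(\rho\Vert\sigma),
\end{align*}
using $\sum_j|\inprod{a_i}{b_j}|^2=1$ (completeness of the $\sigma$-eigenbasis; the case $\supp(\rho)\not\subseteq\supp(\sigma)$ gives $+\infty$ on both sides). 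No operator convexity or integral representation of $-\log$ is needed; that machinery belongs to genuinely non-commutative comparisons such as proving monotonicity of $\sfD$ itself, or bounding the \emph{measured} relative entropy. The Nussbaum--Szkola construction is designed precisely so that $\Tr(\rho^s\sigma^{1-s})$ and, in the limit, the relative entropy coincide with their classical counterparts for $P,Q$. With the identity in hand, $\sfF(\rho,\sigma)\geq\Tr(\sqrt{\rho}\sqrt{\sigma})=\sum_{i,j}\sqrt{P(i,j)Q(i,j)}\geq 2^{-\sfD(P\Vert Q)/2}=2^{-\sfD(\rho\Vert\sigma)/2}$, so $1-\sfF\leq 1-2^{-\sfD/2}\leq\tfrac{\ln 2}{2}\sfD$, which is in fact twice as tight as the stated Fact; your detour through $1-\sfF^2$ also closes the argument and merely loses that factor of two. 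In short, the proposal is correct, but the quantum step you flagged as the main obstacle is trivial.
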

\begin{fact}\label{fc:u-inv}
For any unitary $U$, and states $\rho, \sigma$, $\sfD(U\rho U^\dagger\Vert U\sigma U^\dagger) = \sfD(\rho\Vert\sigma)$, and $\sfD_\infty(U\rho U^\dagger\Vert U\sigma U^\dagger) = \sfD_\infty(\rho\Vert\sigma)$.
\end{fact}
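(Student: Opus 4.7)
The plan is to handle the two equalities separately, both following from the identity $f(U\rho U^\dagger) = U f(\rho) U^\dagger$ for any function $f$ defined via the functional calculus, together with cyclicity of the trace and the fact that unitary conjugation preserves the positive semidefinite order.

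For the relative entropy equality, I would first observe that since $U\rho U^\dagger$ has the same eigenvalues as $\rho$ but with eigenvectors rotated by $U$, we have $\log(U\rho U^\dagger) = U(\log \rho)U^\dagger$ and likewise $\log(U\sigma U^\dagger) = U(\log\sigma)U^\dagger$ (restricted to the appropriate supports). Substituting these into the definition $\sfD(\tau\Vert\mu) = \Tr(\tau\log\tau) - \Tr(\tau\log\mu)$, each trace term takes the form
\[
\Tr\bigl(U\rho U^\dagger \cdot U(\log\alpha)U^\dagger\bigr) = \Tr\bigl(U\rho(\log\alpha)U^\dagger\bigr) = \Tr(\rho\log\alpha),
\]
by cyclicity, where $\alpha\in\{\rho,\sigma\}$. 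The two contributions therefore reproduce exactly the two terms in $\sfD(\rho\Vert\sigma)$, giving the desired equality.

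For the relative min-entropy, I would use the operator inequality characterization directly. By definition $\sfD_\infty(\rho\Vert\sigma)$ is the infimum over $\lambda$ such that $\rho \leq 2^\lambda\sigma$. Since conjugation by a unitary preserves the positive semidefinite cone (for any Hermitian $H\geq 0$, $UHU^\dagger \geq 0$, because $\bra{\psi}UHU^\dagger\ket{\psi}=\bra{U^\dagger\psi}H\ket{U^\dagger\psi}\geq 0$ for every $\ket{\psi}$), we have $\rho \leq 2^\lambda\sigma$ if and only if $U\rho U^\dagger \leq 2^\lambda U\sigma U^\dagger$. Thus the admissible sets of $\lambda$ for the two quantities coincide, and their infima agree.

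There is no real obstacle here; both parts reduce to the basic principles that unitary conjugation commutes with the functional calculus and preserves order. The only point requiring minor care is that the definitions implicitly assume $\supp\rho\subseteq\supp\sigma$ (otherwise both sides are $+\infty$), but conjugation by $U$ maps $\supp\rho$ to $U\supp\rho$ and likewise for $\sigma$, so this support condition is preserved on passing from one side of the equality to the other.
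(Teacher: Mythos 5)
Your proof is correct. The paper states this as a Fact without giving a proof, so there is no argument to compare against; your two observations — that the functional calculus commutes with unitary conjugation (so $\log(U\rho U^\dagger)=U(\log\rho)U^\dagger$, combined with cyclicity of trace) and that conjugation by a unitary preserves the positive semidefinite order — are exactly the standard route, and your handling of the support condition is a sensible extra remark.
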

\begin{fact}\label{fc:event-prob}
If $\sigma = \eps\rho + (1-\eps)\rho'$, then $\sfD_\infty(\rho\Vert \sigma) \leq \log(1/\eps)$.
\end{fact}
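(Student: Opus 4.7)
The plan is to observe that this is essentially a one-line positivity argument following directly from the definition of $\sfD_\infty$. Since $\rho'$ is a quantum state, it is positive semi-definite, and so is $(1-\eps)\rho'$ for $\eps \in (0,1]$. Hence dropping this term from the convex combination defining $\sigma$ yields the operator inequality $\sigma \geq \eps \rho$, i.e., $\sigma - \eps\rho = (1-\eps)\rho' \succeq 0$.

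Rearranging gives $\rho \leq \eps^{-1}\sigma = 2^{\log(1/\eps)}\sigma$. Plugging this into the definition $\sfD_\infty(\rho\Vert\sigma) = \min\{\lambda : \rho \leq 2^\lambda\sigma\}$ immediately yields $\sfD_\infty(\rho\Vert\sigma) \leq \log(1/\eps)$, as required.

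There is no real obstacle here: the only implicit assumption being used is that $\rho'$ is positive semi-definite, which is part of being a valid density operator. The edge case $\eps = 0$ is vacuous since $\log(1/\eps)$ is then $+\infty$, and for $\eps = 1$ we recover $\sigma = \rho$ and $\sfD_\infty = 0 = \log 1$. No technical machinery beyond the definition of relative min-entropy is required.
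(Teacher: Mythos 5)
Your proof is correct, and since the paper presents this as an unproved Fact, your one-line positivity argument (drop the $(1-\eps)\rho'\succeq 0$ term to get $\sigma\geq\eps\rho$, then read off the bound from the definition of $\sfD_\infty$) is exactly the intended derivation.
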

\begin{fact}\label{fc:Sinfty-tri}
For any three quantum states $\rho, \sigma, \vph$ such that $\supp(\rho) \subseteq \supp(\vph) \subseteq \supp(\sigma)$,
\[ \sfD_\infty(\rho\Vert\sigma) \leq \sfD_\infty(\rho\Vert\vph) + \sfD_\infty(\vph\Vert\sigma). \]
\end{fact}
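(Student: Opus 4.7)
The plan is to unfold the definition $\sfD_\infty(\alpha\Vert\beta) = \min\{\lambda : \alpha \leq 2^\lambda \beta\}$ and chain two operator inequalities. First I would set $\lambda_1 := \sfD_\infty(\rho\Vert\vph)$ and $\lambda_2 := \sfD_\infty(\vph\Vert\sigma)$; the support hypotheses $\supp(\rho) \subseteq \supp(\vph) \subseteq \supp(\sigma)$ ensure that both quantities are finite, so the corresponding minimizers are attained and we have the operator inequalities $\rho \leq 2^{\lambda_1}\vph$ and $\vph \leq 2^{\lambda_2}\sigma$.

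Next I would multiply the second inequality by the nonnegative scalar $2^{\lambda_1}$, which preserves the positive-semidefinite ordering, to obtain $2^{\lambda_1}\vph \leq 2^{\lambda_1 + \lambda_2}\sigma$. Transitivity of the operator ordering on Hermitian matrices (if $A \leq B$ and $B \leq C$ then $A \leq C$) then gives
\[
\rho \;\leq\; 2^{\lambda_1}\vph \;\leq\; 2^{\lambda_1+\lambda_2}\sigma.
\]
By the definition of $\sfD_\infty$ as an infimum, this immediately implies $\sfD_\infty(\rho\Vert\sigma) \leq \lambda_1 + \lambda_2 = \sfD_\infty(\rho\Vert\vph) + \sfD_\infty(\vph\Vert\sigma)$, which is the desired inequality.

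There is essentially no technical obstacle here: the only subtlety is making sure the support conditions prevent $\sfD_\infty(\rho\Vert\vph)$ or $\sfD_\infty(\vph\Vert\sigma)$ from being $+\infty$, in which case the inequality would be vacuous anyway. Because the argument relies only on monotonicity of scalar multiplication and transitivity of $\leq$ on Hermitian operators, no spectral or variational machinery is needed, and the proof fits in a few lines.
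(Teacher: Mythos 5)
Your argument is correct, and it is the standard proof of this triangle inequality for $\sfD_\infty$. The paper states Fact~\ref{fc:Sinfty-tri} without proof (it is treated as known background in the preliminaries), so there is no paper proof to compare against; the chain $\rho \leq 2^{\lambda_1}\vph \leq 2^{\lambda_1+\lambda_2}\sigma$ obtained by unfolding the definition and using positivity of scalar multiplication together with transitivity of the L\"owner order is exactly the expected argument. Your remark about the support conditions guaranteeing finiteness (and hence attainment of the minimum) is the only subtlety, and you handle it correctly.
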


The conditional min-entropy of $Y$ given $Z$ is defined as
\[ \sfH_\infty(Y|Z)_\rho = \inf\{\lambda: \exists \sigma_Z \text{ s.t. } \rho_{YZ} \leq 2^{-\lambda}\Id_Y\otimes\sigma_Z\}.\]
The conditional Renyi-2 entropy of $Y$ given $Z$ is defined as
\[ \sfH_2(Y|Z)_\rho = -\log\Tr\left(\rho_{YZ}(\Id\otimes\rho_Z^{-1/2})\rho_{YZ}(\Id\otimes\rho_Z^{-1/2})\right). \]
The conditional Hartley entropy of $Y$ given $Z$ is defined as
\[ \sfH_0(Y|Z)_\rho = \log\left(\sup_{\sigma_Z}\Tr(\supp(\rho_{YZ})(\Id_Y\otimes\sigma_Z))\right)\]
where $\supp(\rho_{YZ})$ is the projector on to the support of $\rho_{YZ}$. For a classical distribution $\sfP_{YZ}$, this reduces to 
\[ \sfH_0(Y|Z)_{\sfP_{YZ}} = \log\left(\sup_z\left|\{y: \sfP_{YZ}(y,z) > 0\}\right|\right).\]
The conditional entropies satisfy
\[ \sfH_0(Y|Z)_\rho \geq \sfH(Y|Z)_\rho \geq \sfH_2(Y|Z)_\rho \geq \sfH_\infty(Y|Z)_\rho.\]
\begin{fact}\label{fc:local-Hmin}
All the conditional entropies ($\sfH, \sfH_\infty, \sfH_2, \sfH_0$) are invariant under isometries on one of the systems. That is, if $\rho_{YZ'} = (\Id_Y\otimes U)\sigma_{YZ}(\Id_Y\otimes U^\dagger)$, and $\vph_{Y'Z} = (U\otimes\Id_Z)\sigma_{YZ}(U^\dagger\otimes\Id)$ then
\[ \sfH_\infty(Y|Z)_\sigma = \sfH_\infty(Y|Z')_\rho = \sfH_\infty(Y'|Z)_\vph,\]
and similar statements hold for the other conditional entropies as well.
\end{fact}

For any distance measure (not necessarily a metric) $d$ between states, the $\eps$-smoothed relative min-entropy between $\rho$ and $\sigma$ w.r.t. $d$ is defined as
\[ \sfD^{\eps,d}_\infty(\rho\Vert \sigma) = \inf_{\rho': d(\rho,\rho') \leq\eps}\sfD_\infty(\rho'\Vert\sigma).\]
When $d$ is the $\ell_1$ distance, we often omit the superscript.
\begin{fact}[Quantum Substate Theorem, \cite{JRS02,JRS09,JN12}]\label{fc:substate}
For any two states $\rho$ and $\sigma$ such that the support of $\rho$ is contained in the support of $\sigma$, and any $\eps > 0$,\footnote{Since $1-\sfF$ is the distance measure rather than $\sfF$ itself, the closeness condition for $\sfD^{\eps,\sfF}_\infty(\rho\Vert\sigma)$ is $\sfF(\rho, \rho') \geq 1-\eps$.}
\[ \sfD^{\eps,\sfF}_\infty(\rho\Vert\sigma) \leq \frac{\sfD(\rho\Vert\sigma)+1}{\eps} + \log\left(\frac{1}{1-\eps}\right).\]
Consequently,
\[ \sfD^\eps_\infty(\rho\Vert\sigma) \leq \frac{4\sfD(\rho\Vert\sigma)+1}{\eps^2} + \log\left(\frac{1}{1-\eps^2/4}\right).\]
\end{fact}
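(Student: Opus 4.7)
The plan is to construct the approximating state $\rho'$ by a spectral truncation of the quantum Radon--Nikodym-type operator $L=\sigma^{-1/2}\rho\sigma^{-1/2}$, with the truncation threshold chosen via a Markov-style bound on a log-likelihood-type distribution. The second inequality then follows from the first via the fidelity-to-$\ell_1$ conversion of Fact~\ref{fc:fvdg}.

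\textbf{Classical warm-up.} For probability distributions $P,Q$ with $\supp(P)\subseteq\supp(Q)$ and $\ell(x)=\log(P(x)/Q(x))$, one has $\bbE_P[\ell]=\sfD(P\Vert Q)$. The elementary inequality $\log_2(1+u)\leq u/\ln 2$ gives $\bbE_P[\ell\cdot\mathbbm{1}_{\ell\geq 0}]\leq\sfD(P\Vert Q)+O(1)$, so by Markov $P\{\ell\geq\lambda\}\leq\eps$ for $\lambda=(\sfD(P\Vert Q)+O(1))/\eps$. The renormalized truncation $P'=P|_{\ell<\lambda}/P\{\ell<\lambda\}$ then satisfies $P'\leq\tfrac{2^\lambda}{1-\eps}Q$ and $\sfF(P,P')^2\geq 1-\eps$, matching the claimed bounds up to absolute constants.

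\textbf{Quantum construction.} Diagonalize $L=\sum_i\mu_i\state{v_i}$; then $\rho=\sigma^{1/2}L\sigma^{1/2}$ and $\Tr(\sigma L)=1$. Picking $\lambda=(\sfD(\rho\Vert\sigma)+O(1))/\eps$, let $\Pi=\sum_{i\,:\,\mu_i\leq 2^\lambda}\state{v_i}$ and set
\[ \rho' \;=\; \frac{1}{c}\,\sigma^{1/2}\Pi L\Pi\,\sigma^{1/2}, \qquad c=\Tr(\sigma L\Pi)=\sum_{i\,:\,\mu_i\leq 2^\lambda}\mu_i\langle v_i|\sigma|v_i\rangle. \]
Since $\Pi$ commutes with $L$ and $\Pi L\Pi\leq 2^\lambda\Pi$, one obtains $\rho'\leq(2^\lambda/c)\,\sigma^{1/2}\Pi\sigma^{1/2}\leq\tfrac{2^\lambda}{1-\eps}\sigma$ once $c\geq 1-\eps$ is in hand, yielding $\sfD_\infty(\rho'\Vert\sigma)\leq\lambda+\log(1/(1-\eps))$. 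For the fidelity, take the canonical purifications $\ket{\rho}_{AB}=(\sigma^{1/2}\sqrt{L}\otimes\Id)\sum_i\ket{i}\ket{i}$ and $\ket{\rho'}_{AB}=(1/\sqrt{c})(\sigma^{1/2}\sqrt{L}\,\Pi\otimes\Id)\sum_i\ket{i}\ket{i}$; since $\Pi$ commutes with $\sqrt{L}$, a short trace computation gives $\inprod{\rho}{\rho'}=\Tr(\sigma L\Pi)/\sqrt{c}=\sqrt{c}$, so by Uhlmann (Fact~\ref{fc:uhlmann}) $\sfF(\rho,\rho')\geq\sqrt{c}\geq\sqrt{1-\eps}\geq 1-\eps$.

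\textbf{The quantitative step and main obstacle.} The only non-elementary ingredient is the inequality $c\geq 1-\eps$, i.e.\ $\sum_{\mu_i>2^\lambda}\mu_i\langle v_i|\sigma|v_i\rangle\leq\eps$. By the same Markov reasoning as in the classical warm-up, this reduces to an upper bound of the form $\Tr\!\bigl(\sigma L(\log L)_+\bigr)\leq\sfD(\rho\Vert\sigma)+O(1)$, where $(\log L)_+$ is the positive part of $\log L$. In the commuting case $\Tr(\sigma L\log L)=\sfD(\rho\Vert\sigma)$ trivially, but in general these quantities differ (the former is closely related to the Belavkin--Staszewski relative entropy), and the required inequality demands a non-trivial matrix-analytic estimate. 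This is the heart of the proofs of Jain--Radhakrishnan--Sen and of the simplified version of Jain--Nayak, and is the main technical obstacle. Finally, the second displayed inequality is obtained by substituting $\eps^2/4$ for $\eps$ in the first and invoking Fact~\ref{fc:fvdg} to convert fidelity-based smoothing into $\ell_1$-smoothing, which yields the stated bounds up to slightly different $O(1)$ constants inside the logarithm.
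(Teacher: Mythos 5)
Note first that the paper itself does not prove this statement: it is cited as Fact~\ref{fc:substate} from \cite{JRS02,JRS09,JN12}, so there is no ``paper's proof'' to compare against, and your proposal must be judged on its own and against those references.

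Your construction and your classical warm-up are both correct, and you have correctly isolated the crucial step, namely $c=\Tr(\sigma L\Pi)\geq 1-\eps$. But the ``Markov reasoning'' you propose to close this step reduces it to the inequality $\Tr\!\bigl(\sigma L(\log L)_+\bigr)\leq \sfD(\rho\Vert\sigma)+O(1)$, and that inequality is simply false; it is not a hard matrix estimate waiting to be proved, it does not hold. The quantity $\Tr(\sigma L\log L)$ with $L=\sigma^{-1/2}\rho\sigma^{-1/2}$ is a geometric (Belavkin--Staszewski-type) relative entropy, which is known to \emph{dominate} the Umegaki relative entropy $\sfD(\rho\Vert\sigma)$ rather than be bounded by it, and the positive-part truncation only increases it further. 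Already the elementary qubit example $\sigma=\mathrm{diag}(p,1-p)$, $\rho=\state{+}$ gives $\Tr(\sigma L\log L)=\log\tfrac{1}{2}\bigl(p^{-1}+(1-p)^{-1}\bigr)\approx\log(1/p)$ while $\sfD(\rho\Vert\sigma)\approx\tfrac12\log(1/p)$, so the gap is not an additive $O(1)$ term. This is precisely why the literature does not argue by spectral truncation of $L$: the original Jain--Radhakrishnan--Sen proof goes through an intermediate ``observational divergence,'' and the Jain--Nayak simplification proves the statement by a minimax/duality argument that reduces to the classical case via the Nussbaum--Szko\l{}a distributions, neither of which passes through the operator $L$ or the estimate you need. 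As written, your argument proves the classical substate theorem and sets up the right semidefinite objects, but the quantum step is open and the specific reduction you point to cannot be made to work; closing the gap requires replacing the Markov-on-$\log L$ idea with one of these genuinely different mechanisms.
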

\begin{fact}[\cite{JRS02}]\label{fc:substate-proj}
For two states $\rho_X$ and $\sigma_X$, if $\sfD^{\eps,\Delta}_\infty(\rho_X\Vert\sigma_X) = c$, then for any purifications $\ket{\rho}_{XY}$ and $\ket{\sigma}_{XY'}$, there exists a measurement operator $M$ taking $Y'$ to $Y$, such that $\Id\otimes M$ succeeds on $\ket{\sigma}_{XY'}$ with probability $2^{-c}$, and
\[ \Delta\left(2^c(\Id\otimes M)\state{\sigma}_{XY'}(\Id\otimes M^\dagger), \state{\rho}_{XY}\right) \leq \eps.\]
\end{fact}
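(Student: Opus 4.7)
The plan is to unfold the definition of $\sfD^{\eps,\Delta}_\infty(\rho_X\Vert\sigma_X) = c$: it asserts the existence of a state $\rho'_X$ with $\Delta(\rho_X,\rho'_X) \leq \eps$ and $\rho'_X \leq 2^c\sigma_X$. Taking traces forces $c \geq 0$, so the operator inequality gives a convex decomposition $\sigma_X = 2^{-c}\rho'_X + (1-2^{-c})\omega_X$ with $\omega_X = (\sigma_X - 2^{-c}\rho'_X)/(1-2^{-c})$ itself a density matrix (the $c=0$ case is trivial). This decomposition is the hook on which the measurement $M$ will hang.

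Next I build a purification of $\sigma_X$ matched to $\ket{\rho}_{XY}$. By Fact \ref{fc:uhlmann}, choose a purification $\ket{\rho'}_{XY}$ of $\rho'_X$ on $Y$ with $|\inprod{\rho}{\rho'}| = \sfF(\rho_X,\rho'_X)$; the pure-state form of Fact \ref{fc:fvdg} then gives
\[\Delta(\state{\rho},\state{\rho'}) = \sqrt{1-|\inprod{\rho}{\rho'}|^2} = \sqrt{1-\sfF(\rho_X,\rho'_X)^2} = \Delta(\rho_X,\rho'_X) \leq \eps.\]
Let $\ket{\omega}_{XY}$ be any purification of $\omega_X$ (enlarging $Y$ as needed), adjoin a fresh ancilla qubit $A$, and set
\[\ket{\tilde\sigma}_{XYA} \;=\; \sqrt{2^{-c}}\,\ket{\rho'}_{XY}\ket{0}_A \;+\; \sqrt{1-2^{-c}}\,\ket{\omega}_{XY}\ket{1}_A,\]
which is a purification of $\sigma_X$ by direct computation of the partial trace over $YA$.

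Any two purifications of $\sigma_X$ are related by a (partial) isometry on the purifying register, so there exists $V:Y'\to YA$ (enlarging $YA$ by an extra ancilla if $\dim Y'$ is larger than $\dim YA$) with $(\Id_X \otimes V)\ket{\sigma}_{XY'} = \ket{\tilde\sigma}_{XYA}$. Define
\[M \;=\; (\Id_Y \otimes \bra{0}_A)\,V,\]
so that $M^\dagger M \leq V^\dagger V \leq \Id_{Y'}$ and $\{M^\dagger M,\,\Id-M^\dagger M\}$ is a valid two-outcome POVM on $Y'$. A direct calculation gives
\[(\Id_X \otimes M)\ket{\sigma}_{XY'} \;=\; (\Id_{XY} \otimes \bra{0}_A)\ket{\tilde\sigma}_{XYA} \;=\; \sqrt{2^{-c}}\,\ket{\rho'}_{XY},\]
so the outcome probability is $2^{-c}$ and the normalized post-measurement state is exactly $\state{\rho'}_{XY}$. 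The target bound then follows at once from the second step:
\[\Delta\!\left(2^c(\Id\otimes M)\state{\sigma}_{XY'}(\Id\otimes M^\dagger),\; \state{\rho}_{XY}\right) \;=\; \Delta(\state{\rho'},\state{\rho}) \;\leq\; \eps.\]
The only thing that genuinely needs care is the Uhlmann-matching of $\ket{\rho'}$ to $\ket{\rho}$ in the second step, which is what lifts the $\eps$ closeness of the reduced states on $X$ to an $\eps$ closeness of the full states on $XY$; everything else is bookkeeping with isometries and partial traces.
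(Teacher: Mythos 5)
This fact is cited to \cite{JRS02} and not proved in the paper, so there is no internal proof to compare against; your argument is the standard one from the cited reference and is correct. The decomposition $\sigma_X = 2^{-c}\rho'_X + (1-2^{-c})\omega_X$ forced by the operator inequality, the ancilla-controlled purification $\ket{\tilde\sigma}_{XYA}$, the Uhlmann-matched $\ket{\rho'}_{XY}$, and the measurement $M = (\Id_Y\otimes\bra{0}_A)V$ are all exactly right, and the final bound follows from the pure-state formula for purified distance. The one point to make explicit rather than wave off as bookkeeping: choosing $\ket{\rho'}_{XY}$ on the \emph{given} register $Y$ via Uhlmann requires $\dim Y \geq \mathrm{rank}(\rho'_X)$, and since the smoothing $\rho'_X$ can have strictly larger support than $\rho_X$, a minimal purification $\ket{\rho}_{XY}$ could leave no room. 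A self-contained statement would carry this dimension hypothesis; in all of the paper's applications the purifying registers are large enough that it holds automatically, so this is a hypothesis to record, not a flaw in your reasoning.
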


\begin{fact}\label{dim-ub}
For any quantum state $\rho_{YZ}$,
\[ \inf_{\sigma_Z}\sfD_\infty(\rho_{YZ} \Vert \rho_Y\otimes\sigma_Z) \leq 2\min\{\log|\clY|,\log|\clZ|\}.\]
\end{fact}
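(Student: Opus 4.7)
The statement asks for an upper bound on the optimised max-relative-entropy of $\rho_{YZ}$ against $\rho_Y\otimes\sigma_Z$. The plan is to reduce everything to operator inequalities for the Choi matrix of a canonical CPTP map.

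I will begin by introducing the operator
\[
M := (\rho_Y^{-1/2}\otimes I_Z)\,\rho_{YZ}\,(\rho_Y^{-1/2}\otimes I_Z),
\]
read on $\supp(\rho_Y)\otimes\clZ$ (the rank-deficient case reduces to the support at no cost, using the pseudo-inverse). A direct calculation gives $\Tr_Z M = \Pi_{\supp(\rho_Y)}$, so $M$ is the Choi matrix of some CPTP map $\clM:Y'\to Z$ where $Y'\cong\supp(\rho_Y)$ has dimension $d_{Y'}:=\mathrm{rank}(\rho_Y)\leq |\clY|$; that is, $M=(\mathrm{id}_{Y'}\otimes\clM)\state{\Omega}_{Y'Y'}$ with $\ket{\Omega}=\sum_a |a,a\rangle$. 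The point of this rewriting is that $\rho_{YZ}=(\rho_Y^{1/2}\otimes I_Z)M(\rho_Y^{1/2}\otimes I_Z)$, so any operator inequality $M\leq c\cdot I_{Y'}\otimes\sigma_Z$ translates to $\rho_{YZ}\leq c\cdot\rho_Y\otimes\sigma_Z$ and hence $\sfD_\infty(\rho_{YZ}\Vert\rho_Y\otimes\sigma_Z)\leq\log c$.

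For the bound $2\log|\clY|$, I would start from the trivial $\state{\Omega}_{Y'Y'}\leq d_{Y'}\cdot I_{Y'Y'}$, which holds because $\ket{\Omega}$ has squared norm $d_{Y'}$, so $\state{\Omega}$ is a rank-one projector of top eigenvalue $d_{Y'}$. Applying the CP map $\mathrm{id}\otimes\clM$ to both sides gives $M\leq d_{Y'}\cdot I_{Y'}\otimes\clM(I_{Y'})=d_{Y'}^{2}\cdot I_{Y'}\otimes\sigma_Z^{*}$, where $\sigma_Z^{*}:=\clM(I_{Y'})/d_{Y'}$ is a state because $\Tr\clM(I_{Y'})=d_{Y'}$ by trace-preservation. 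Sandwiching back yields $\rho_{YZ}\leq d_{Y'}^{2}\rho_Y\otimes\sigma_Z^{*}\leq|\clY|^2\rho_Y\otimes\sigma_Z^{*}$, so $\sfD_\infty(\rho_{YZ}\Vert\rho_Y\otimes\sigma_Z^{*})\leq 2\log|\clY|$.

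For the bound $2\log|\clZ|$, I would instead bound $\|M\|_\infty$ directly. Writing $\clM(\cdot)=\sum_i K_i(\cdot)K_i^\dagger$ with $\sum_i K_i^\dagger K_i=I_{Y'}$, the Choi matrix decomposes as $M=\sum_i\state{K_i}$ with $\ket{K_i}=(I\otimes K_i)\ket{\Omega}$, whose matrix form is $K_i^T$. Encoding a unit vector $\ket{v}\in Y'\otimes Z$ as a matrix $V$ with $\|V\|_F=1$, one has $\langle v|K_i\rangle=\Tr(V^\dagger K_i^T)$, a trace over $|\clZ|\times|\clZ|$ matrices, so Cauchy--Schwarz gives $|\langle v|K_i\rangle|^2\leq|\clZ|\cdot\Tr(K_i^{*}VV^\dagger K_i^{T})$. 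Summing over $i$ and using $\sum_i K_i^T K_i^{*}=(\sum_i K_i^\dagger K_i)^{*}=I_{Y'}$ collapses the sum to $|\clZ|\cdot\|V\|_F^{2}=|\clZ|$, so $\|M\|_\infty\leq|\clZ|$. With $\sigma_Z:=I_Z/|\clZ|$ this gives $\rho_{YZ}\leq|\clZ|\rho_Y\otimes I_Z=|\clZ|^{2}\rho_Y\otimes\sigma_Z$, hence $\sfD_\infty\leq 2\log|\clZ|$. Taking the smaller of the two bounds yields the claim. The only slightly delicate step is the transpose/conjugate bookkeeping in the Cauchy--Schwarz step for the $|\clZ|$ direction; everything else is routine Choi-matrix and CP-monotonicity manipulation.
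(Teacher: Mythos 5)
The paper states this as an unproved Fact (it is the standard dimension bound on the max-information $\sfI_{\max}(Y{:}Z)$), so there is no in-paper proof to compare against; I can only check your argument on its own terms, and it is correct. In particular I verified the conjugate/transpose bookkeeping you flagged as delicate: with $A=K_i^{*}V$ and $B=\Id_Z$, Cauchy--Schwarz gives $|\Tr(V^\dagger K_i^T)|^2\leq\Tr(\Id_Z)\Tr(K_i^{*}VV^\dagger K_i^T)$, and $\sum_i K_i^T K_i^{*}=(\sum_i K_i^\dagger K_i)^T=\Id_{Y'}$ collapses the sum to $|\clZ|\,\Vert V\Vert_F^2$, exactly as you claim. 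Two remarks on structure. First, the $2\log|\clZ|$ direction has a shorter route that avoids Kraus operators altogether: writing the completely depolarizing channel on $Z$ as an average over the $|\clZ|^2$ Heisenberg--Weyl unitaries $U_k$ (with $U_1=\Id_Z$) gives
\[
\rho_Y\otimes\frac{\Id_Z}{|\clZ|}\;=\;\frac{1}{|\clZ|^2}\sum_k(\Id_Y\otimes U_k)\,\rho_{YZ}\,(\Id_Y\otimes U_k^\dagger)\;\geq\;\frac{1}{|\clZ|^2}\,\rho_{YZ},
\]
since every term is PSD and the $k=1$ term equals $\rho_{YZ}/|\clZ|^2$, so $\rho_{YZ}\leq|\clZ|^2\,\rho_Y\otimes\Id_Z/|\clZ|$ immediately. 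This twirling shortcut does \emph{not} give the $2\log|\clY|$ direction (twirling $Y$ replaces $\rho_Y$ by $\Id_Y/|\clY|$, the wrong marginal), which is why your sandwich $M=(\rho_Y^{-1/2}\otimes\Id)\rho_{YZ}(\rho_Y^{-1/2}\otimes\Id)$ followed by $\state{\Omega}\leq d_{Y'}\Id$ and CP-monotonicity is the right tool there. Second, note that the $|\clZ|$-direction of your own argument never actually needs the CPTP interpretation of $M$: only $M\geq0$ and $\Tr_Z M=\Pi_{\supp(\rho_Y)}$ are used (from which $\sum_i K_i^\dagger K_i=\Id$ follows), whereas the $|\clY|$-direction genuinely uses CP-monotonicity of $\mathrm{id}\otimes\clM$, so that is where the Choi framing earns its keep. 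The factor $2$ is tight (consider a maximally entangled state on $Y\otimes Z$), so nothing is being left on the table.
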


The $\eps$-smoothed version of the conditional min-entropy w.r.t. some distance measure $d$ are defined as
\[ \sfH^{\eps, d}_\infty(Y|Z)_\rho = \sup_{\rho':d(\rho,\rho') \leq \eps}\sfH_\infty(Y|Z)_{\rho'}.\]
$\eps$-smoothed versions of the conditional Renyi-2 and Hartley entropies are defined similarly. In this case as well, we shall emit the superscript when the distance measure is the $\ell_1$ distance.
\begin{fact}\label{fc:Hinf-ch-rule}
For any state $\rho_{XYZ}$,
\[ \sfH^\eps_\infty(Y|Z)_\rho \geq \sfH^\eps_\infty(Y|XZ)_\rho \geq \sfH^\eps_\infty(YX|Z)_\rho - \log|\clX|.\]
The equivalent statements hold for $\sfH_2$ and $\sfH_0$ as well.
\end{fact}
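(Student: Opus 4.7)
The plan is to reduce the smoothed chain rule to its non-smoothed counterpart, which in turn follows from two elementary observations about the semidefinite inequality defining conditional min-entropy. Recall that $\sfH_\infty(Y|W)_\sigma$ is the largest $\lambda$ for which there exists a state $\tau_W$ with $\sigma_{YW}\le 2^{-\lambda}\Id_Y\otimes\tau_W$.

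At the non-smoothed level, for the first inequality I would take any witness $\tau_{XZ}$ certifying $\sfH_\infty(Y|XZ)_\sigma\ge\lambda$, i.e.\ $\sigma_{YXZ}\le 2^{-\lambda}\Id_Y\otimes\tau_{XZ}$, and apply $\Tr_X$ to both sides: partial trace is completely positive and thus preserves the operator inequality, so $\Tr_X\tau_{XZ}$ is a valid witness of $\sfH_\infty(Y|Z)_\sigma\ge\lambda$. For the second inequality I would use the identity $\Id_{YX}=|\clX|\cdot\Id_Y\otimes(\Id_X/|\clX|)$ to rewrite any witness $\tau_Z$ of $\sfH_\infty(YX|Z)_\sigma\ge\lambda$ as a witness $(\Id_X/|\clX|)\otimes\tau_Z$ of $\sfH_\infty(Y|XZ)_\sigma\ge\lambda-\log|\clX|$.

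To lift to the smoothed versions, I would simply transfer optimizers across the two smoothing balls. For the right inequality, pick $\rho'_{YXZ}$ with $\Vert\rho-\rho'\Vert_1\le\eps$ attaining $\sfH^\eps_\infty(YX|Z)_\rho$ (the supremum is a maximum by compactness of the smoothing ball in finite dimension); since this same $\rho'$ is a valid smoothing for $\sfH^\eps_\infty(Y|XZ)_\rho$, applying the non-smoothed chain rule to $\rho'$ yields $\sfH^\eps_\infty(Y|XZ)_\rho\ge\sfH_\infty(Y|XZ)_{\rho'}\ge\sfH_\infty(YX|Z)_{\rho'}-\log|\clX|=\sfH^\eps_\infty(YX|Z)_\rho-\log|\clX|$. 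For the left inequality, pick $\rho''_{YXZ}$ attaining $\sfH^\eps_\infty(Y|XZ)_\rho$; by the data-processing inequality for trace distance (Fact \ref{fc:chan-l1}) its marginal on $YZ$ satisfies $\Vert\rho''_{YZ}-\rho_{YZ}\Vert_1\le\eps$, so it is a valid smoothing for $\sfH^\eps_\infty(Y|Z)_\rho$, and the same reasoning closes the chain.

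There is no real obstacle here, since both inequalities are standard. The only things to be careful about are that partial trace preserves positive-semidefinite ordering and that the smoothing supremum is attained in finite dimensions; both hold, so the transfer of optimizers across the two smoothing balls is immediate.
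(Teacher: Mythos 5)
The paper states Fact~\ref{fc:Hinf-ch-rule} without proof, as a standard result about conditional min-entropy, so there is no internal proof to compare against. Your argument is correct and self-contained: the two non-smoothed inequalities follow from (a) applying $\Tr_X$, a completely positive map that preserves the semidefinite inequality $\sigma_{YXZ}\le 2^{-\lambda}\Id_Y\otimes\tau_{XZ}$ and sends the witness $\tau_{XZ}$ to $\tau_Z$, and (b) rewriting $\Id_{YX}=|\clX|\,\Id_Y\otimes(\Id_X/|\clX|)$ so that $(\Id_X/|\clX|)\otimes\tau_Z$ witnesses $\sfH_\infty(Y|XZ)\ge\lambda-\log|\clX|$; and the transfer of optimizers across smoothing balls is sound because the same $\eps$-ball on $YXZ$ serves both $\sfH^\eps_\infty(Y|XZ)$ and $\sfH^\eps_\infty(YX|Z)$, while for the leftmost quantity the marginal $\rho''_{YZ}$ of any smoothing $\rho''_{YXZ}$ is itself a valid smoothing of $\rho_{YZ}$ by the data-processing inequality for trace distance. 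Two cosmetic remarks: the attainment of the smoothing supremum, though true in finite dimension, is not actually needed (an $\eps$-approximate optimizer followed by a limit does the job); and the paper's displayed definition of $\sfH_\infty(Y|Z)_\rho$ writes $\inf$ where $\sup$ is clearly intended, and you correctly read it as the latter.
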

\begin{fact}[\cite{Tom16}]\label{fc:H2-Hmin}
For any state $\rho_{YZ}$,
\[ \sfH_\infty^{\eps, \Delta}(Y|Z)_\rho \geq \sfH_2(Y|Z)_\rho - \log(2/\eps^2).\]
By the Fuchs-van de Graaf inequality, this implies
\[ \sfH_\infty^{\eps}(Y|Z)_\rho \geq \sfH_2(Y|Z)_\rho - \log(2/\eps).\]
\end{fact}

The mutual information between $Y$ and $Z$ with respect to a state $\rho$ on $YZ$ can be defined in the following equivalent ways:
\[  \sfI(Y:Z)_\rho = \sfD(\rho_{YZ}\Vert\rho_Y\otimes\rho_Z) = \sfH(Y)_\rho - \sfH(Y|Z)_\rho = \sfH(Z)_\rho - \sfH(Z|Y)_\rho.\]
The conditional mutual information between $Y$ and $Z$ conditioned on $X$ is defined as
\[ \sfI(Y:Z|X)_\rho = \sfH(Y|X)_\rho - \sfH(Y|XZ)_\rho = \sfH(Z|X)_\rho - \sfH(Z|XY)_\rho.\]
Mutual information can be seen to satisfy the chain rule
\[  \sfI(XY:Z)_\rho =  \sfI(X:Z)_\rho +  \sfI(Y:Z|X)_\rho.\]

\begin{fact}[Quantum Gibbs' inequality, see e.g. - \cite{BVY17}]\label{fc:quantum-gibbs}
For any three states $\rho_{XY}, \sigma_X, \vph_Y$,
\[ \sfD(\rho_{XY}\Vert\sigma_X\otimes\vph_Y) \geq \sfD(\rho_{XY}\Vert\sigma_X\otimes\rho_Y) \geq \sfI(X:Y)_\rho.\]
\end{fact}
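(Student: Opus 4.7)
The plan is to reduce both inequalities to the basic non-negativity of relative entropy, $\sfD(\tau\Vert\omega) \geq 0$ (which I will take for granted, as it follows from Klein's inequality or equivalently from a doubly-stochastic plus Jensen argument applied to the spectral decompositions of the two states). The key algebraic identity is that for commuting positive operators, $\log$ distributes additively, so $\log(\sigma_X\otimes\vph_Y) = \log\sigma_X\otimes\Id + \Id\otimes\log\vph_Y$, and likewise for the other product states appearing in the chain.

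Using this identity, I would expand each of the three relative entropies as a sum of entropy-type terms. For instance,
\[ \sfD(\rho_{XY}\Vert\sigma_X\otimes\vph_Y) = -\sfH(XY)_\rho - \Tr(\rho_X\log\sigma_X) - \Tr(\rho_Y\log\vph_Y), \]
and similarly $\sfD(\rho_{XY}\Vert\sigma_X\otimes\rho_Y) = -\sfH(XY)_\rho - \Tr(\rho_X\log\sigma_X) + \sfH(Y)_\rho$. Subtracting, the $\sfH(XY)_\rho$ and $\Tr(\rho_X\log\sigma_X)$ contributions cancel, and what remains is exactly $\Tr(\rho_Y\log\rho_Y) - \Tr(\rho_Y\log\vph_Y) = \sfD(\rho_Y\Vert\vph_Y) \geq 0$, which gives the first inequality.

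For the second inequality, I would rewrite $\sfI(X:Y)_\rho = \sfD(\rho_{XY}\Vert\rho_X\otimes\rho_Y)$ by definition, so the claim becomes $\sfD(\rho_{XY}\Vert\sigma_X\otimes\rho_Y) \geq \sfD(\rho_{XY}\Vert\rho_X\otimes\rho_Y)$. This is structurally the same as the first step, but with the replacement now occurring in the $X$-factor; the same expansion and cancellation leave $\sfD(\rho_X\Vert\sigma_X) \geq 0$. The only subtlety is support compatibility: if $\supp(\rho_Y)\not\subseteq\supp(\vph_Y)$ or $\supp(\rho_X)\not\subseteq\supp(\sigma_X)$ the relevant left-hand side is $+\infty$ by convention and the inequality holds trivially, so we may assume all logarithms are finite. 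Beyond this bookkeeping, no real obstacle arises, since the argument bottoms out at the classical Gibbs fact and the tensor-product form of $\log$.
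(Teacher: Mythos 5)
Your proof is correct. The paper merely cites Quantum Gibbs' inequality as a fact (attributed to \cite{BVY17}) without supplying a proof, so there is no "paper proof" to compare against; your argument is the standard one. Both inequalities reduce, after the $\log(\sigma_X\otimes\vph_Y)=\log\sigma_X\otimes\Id+\Id\otimes\log\vph_Y$ expansion and cancellation, to $\sfD(\rho_Y\Vert\vph_Y)\geq 0$ and $\sfD(\rho_X\Vert\sigma_X)\geq 0$ respectively, and your handling of the support caveat is appropriate.
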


A state of the form
\[ \rho_{XY} = \sum_x \sfP_X(x)\state{x}_X\otimes\rho_{Y|x}\]
is called a CQ (classical-quantum) state, with $X$ being the classical register and $Y$ being quantum. We shall use $X$ to refer to both the classical register and the classical random variable with the associated distribution. As in the classical case, here we are using $\rho_{Y|x}$ to denote the state of the register $Y$ conditioned on $X=x$, or in other words the state of the register $Y$ when a measurement is done on the $X$ register and the outcome is $x$. Hence $\rho_{XY|x} = \state{x}_X\otimes \rho_{Y|x}$. When the registers are clear from context we shall often write simply $\rho_x$.

For CQ states where $X$ is the classical register, relative entropy has the chain rule
\[ \sfD(\rho_{XY}\Vert\sigma_{XY}) = \sfD(\rho_X\Vert\sigma_X) + \bbE_{\rho_X}\sfD(\rho_{Y|x}\Vert\sigma_{Y|x}).\]
Using this, the following fact follows by expanding out the relative entropies.
\begin{fact}\label{fc:cond-dec}
For CQ states $\rho_{XY}$ and $\sigma_{XY}$,
\[ \bbE_{\rho_X}\sfD(\rho_{Y|x}\Vert\sigma_Y) - \sfD(\rho_Y\Vert\sigma_Y) = \bbE_{\rho_X}\sfD(\rho_{Y|x}\Vert\rho_Y) \geq 0. \]
\end{fact}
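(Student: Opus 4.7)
The plan is to obtain the identity by expanding both relative entropies in the definition and exploiting that $X$ is classical, so $\bbE_{\rho_X}\rho_{Y|x} = \rho_Y$. First I would write
\[\bbE_{\rho_X}\sfD(\rho_{Y|x}\Vert\sigma_Y) \;=\; \bbE_{\rho_X}\Tr(\rho_{Y|x}\log\rho_{Y|x}) \;-\; \bbE_{\rho_X}\Tr(\rho_{Y|x}\log\sigma_Y),\]
and use linearity of the trace to simplify the second term to $\Tr(\rho_Y\log\sigma_Y)$. Subtracting $\sfD(\rho_Y\Vert\sigma_Y) = \Tr(\rho_Y\log\rho_Y) - \Tr(\rho_Y\log\sigma_Y)$ cancels the $\sigma_Y$ contributions and leaves
\[\bbE_{\rho_X}\Tr(\rho_{Y|x}\log\rho_{Y|x}) \;-\; \Tr(\rho_Y\log\rho_Y) \;=\; \bbE_{\rho_X}\sfD(\rho_{Y|x}\Vert\rho_Y),\]
which is precisely the claimed equality. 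Non-negativity of the right-hand side then follows termwise from the non-negativity of quantum relative entropy (Klein's inequality).

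Equivalently, the identity can be deduced from the chain rule for relative entropy of CQ states stated just above the fact: applying it to the pair $(\rho_{XY},\,\rho_X\otimes\sigma_Y)$ gives $\sfD(\rho_{XY}\Vert\rho_X\otimes\sigma_Y) = \bbE_{\rho_X}\sfD(\rho_{Y|x}\Vert\sigma_Y)$ (the $\sfD(\rho_X\Vert\rho_X)$ term vanishes), and applying it to $(\rho_{XY},\,\rho_X\otimes\rho_Y)$ gives $\sfD(\rho_{XY}\Vert\rho_X\otimes\rho_Y) = \bbE_{\rho_X}\sfD(\rho_{Y|x}\Vert\rho_Y)$. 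Subtracting these two expressions and simplifying $\sfD(\rho_{XY}\Vert\rho_X\otimes\sigma_Y) - \sfD(\rho_{XY}\Vert\rho_X\otimes\rho_Y) = \sfD(\rho_Y\Vert\sigma_Y)$ (again by expanding logs and using $\bbE_{\rho_X}\rho_{Y|x} = \rho_Y$) yields the desired equality.

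There is essentially no hard step here; the only care needed is the standard support caveat that $\supp(\rho_Y)\subseteq\supp(\sigma_Y)$, otherwise both sides are $+\infty$ and the identity still holds trivially.
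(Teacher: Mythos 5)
Your proof is correct and takes essentially the same route the paper sketches (the paper simply says the fact "follows by expanding out the relative entropies" using the chain rule stated just above it), namely expanding $\sfD$ by definition, using linearity of trace together with $\bbE_{\rho_X}\rho_{Y|x}=\rho_Y$ to cancel the $\log\sigma_Y$ terms, and invoking non-negativity of relative entropy. The alternative derivation via the CQ chain rule and the support caveat are both fine and consistent with what the paper intends.
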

\begin{fact}[\cite{KRS09}]\label{fc:guess-prob}
For a CQ state $\rho_{XY}$ where $X$ is the classical register, $\sfH_\infty(X|Y)_\rho$ is equal to the negative logarithm of the maximum probability of guessing $X$ from the quantum system $\rho_{Y|x}$, i.e.,
\[ \sfH_\infty(X|Y)_\rho = 	-\log\left(\sup_{\{M_x\}_x}\sum_x\sfP_X(x)\Tr(M_x\rho_{Y|x}M^\dagger_x)\right)\]
where the maximization is over the set of measurements with measurement operators indexed by $x$.
\end{fact}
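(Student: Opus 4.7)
The plan is to prove this standard identity via semidefinite programming duality. First I would rewrite the definition of $\sfH_\infty(X|Y)_\rho$ in a more convenient form: substituting $\sigma_Y \mapsto 2^{-\lambda}\sigma_Y$ (dropping the normalization $\Tr(\sigma_Y)=1$ and absorbing the scalar) turns the definition into the SDP
\[ 2^{-\sfH_\infty(X|Y)_\rho} \;=\; \inf\bigl\{\Tr(\sigma_Y) : \sigma_Y \geq 0,\ \Id_X\otimes\sigma_Y \geq \rho_{XY}\bigr\}. \]
This is the ``primal'' I would work with.

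Next I would compute the Lagrangian dual. Introducing a PSD dual variable $E \geq 0$ on $XY$ conjugate to the operator inequality $\Id_X\otimes\sigma_Y - \rho_{XY} \geq 0$, and rewriting $\Tr(E(\Id_X\otimes\sigma_Y)) = \Tr((\Tr_X E)\sigma_Y)$, the dual becomes
\[ \sup\bigl\{\Tr(\rho_{XY}\,E) : E \geq 0,\ \Tr_X E \leq \Id_Y\bigr\}. \]
I would then identify this dual with the guessing probability: any feasible $E$ decomposes as $E = \sum_x \state{x}_X\otimes M_x$ with $M_x \geq 0$ and $\sum_x M_x \leq \Id_Y$, i.e.\ a sub-POVM, and since $\rho_{XY}\geq 0$, the supremum over sub-POVMs equals the supremum over genuine POVMs. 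Using the CQ decomposition $\rho_{XY} = \sum_x \sfP_X(x)\state{x}\otimes \rho_{Y|x}$, we obtain
\[ \Tr(\rho_{XY}E) = \sum_x \sfP_X(x)\Tr(M_x\,\rho_{Y|x}), \]
which is exactly the expression on the right-hand side of the claimed identity.

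Weak duality immediately gives one direction: for any feasible $\sigma_Y$ in the primal and any POVM $\{M_x\}$, setting $E = \sum_x \state{x}\otimes M_x$ (which has $\Tr_X E = \sum_x M_x = \Id_Y$),
\[ \sum_x \sfP_X(x)\Tr(M_x\rho_{Y|x}) = \Tr(\rho_{XY}E) \leq \Tr((\Id_X\otimes\sigma_Y)E) = \Tr(\sigma_Y), \]
so the guessing probability is at most $2^{-\sfH_\infty(X|Y)_\rho}$. To close the gap and get equality I would invoke strong duality; the main thing to check is Slater's condition, but the primal is trivially strictly feasible by taking $\sigma_Y = \lambda\Id_Y$ for $\lambda$ larger than the operator norm of $\rho_{XY}$, so that $\Id_X\otimes\sigma_Y > \rho_{XY}$ strictly. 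Strong duality then equates the two optimal values, giving the claimed identity. The only subtle step is the sub-POVM vs.\ POVM equivalence in the dual, but this follows from the nonnegativity of $\rho_{XY}$ (any sub-POVM can be completed to a POVM by assigning the deficit to an arbitrary outcome without decreasing the objective).
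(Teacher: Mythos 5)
The paper does not prove this statement; it imports it verbatim as a Fact citing \cite{KRS09}, so there is no paper proof to compare against. Your SDP-duality argument is, in fact, exactly the route taken by K\"onig, Renner and Schaffner, and it is correct in its essentials: the reformulation of $2^{-\sfH_\infty(X|Y)_\rho}$ as $\inf\{\Tr(\sigma_Y):\sigma_Y\geq 0,\ \Id_X\otimes\sigma_Y\geq\rho_{XY}\}$ is right, the Lagrangian dual is computed correctly, Slater's condition is verified correctly with $\sigma_Y=\lambda\Id_Y$ for large $\lambda$, and the sub-POVM to POVM completion is fine.

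One small imprecision you should fix: you write that ``any feasible $E$ decomposes as $E=\sum_x\state{x}_X\otimes M_x$.'' That is not literally true; a feasible $E\geq 0$ with $\Tr_X E\leq\Id_Y$ need not be block-diagonal in the computational basis of $X$. What is true is that one may \emph{restrict} to such $E$ without loss: since $\rho_{XY}=\sum_x\sfP_X(x)\state{x}\otimes\rho_{Y|x}$ is block-diagonal, replacing $E$ by its pinching $E'=\sum_x(\state{x}\otimes\Id_Y)E(\state{x}\otimes\Id_Y)=\sum_x\state{x}\otimes M_x$ leaves $\Tr(E\rho_{XY})$ unchanged, preserves $E'\geq 0$, and preserves the marginal $\Tr_X E'=\Tr_X E\leq\Id_Y$. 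With that pinching step made explicit the proof is complete and matches the cited source's argument.
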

\begin{fact}[\cite{JMS17}]\label{fc:H2-prob}
Let $\sigma_{Z}$ be any state and let $\sigma_{YZ}$ be its canonical purification. Let $\rho_{XYZ}$ be the CQ state produced by applying the channel which does the measurement with measurement operators $\{M_x\}_x$ on the $Z$ register of $\sigma_{YZ}$, and records the outcome in the register $X$. Let $\vph_{XX'YZ}$ be the state obtained by further applying the channel which does the measurement with measurement operators $\{M^*_x\}_x$\footnote{Here $M^*_x$ denotes the measurement operator obtained by replacing each entry in $M_x$ with its complex conjugate. Note that this is a basis-dependent operation, and it appears because of a technicality regarding the canonical purification, which is also basis-dependent. However, the POVM elements are the same $\{M_xM^\dagger_x\}_x$ for both $\{M_x\}_x$ and $\{M^*_x\}_x$, so this technicality does not really matter.} on the $Y$ register of $\rho_{XYZ}$, and records the outcome in register $X'$. Then we have,
\[ \sfH_2(X|Y)_\rho = -\log\left(\Pr_\vph[X=X']\right). \]
\end{fact}
The above two facts give operational interpretaions of $\sfH_\infty$ and $\sfH_2$ for CQ states. The first fact simply says that $\sfH_\infty(X|Y)_\rho$ is the the log of the inverse of the best probability of guessing $X$ given $Y$ in $\rho$. The second fact is somewhat harder to interpret. Here we start with a CQ state $\rho_{XY}$, and we think of it as having been produced by starting with the state $\sigma_{YZ}$ which was a canonical purification, and then doing a measurement on $Z$ and recording the outcome (and tracing out $Y$ itself). Then $\sfH_2(X|Z)_\rho$ is the log of the inverse of the probability of guessing $X$ by doing the same measurement (up to complex conjugation) on $Z$. Note that this guessing probability is obviously smaller than the best guessing probability of $X$ given $Y$, which is consistent with $\sfH_2(X|Y)_\rho \geq \sfH_\infty(X|Y)_\rho$.

\subsection{Quantum communication \& non-local games}
An interactive entanglement-assisted quantum communication protocol $\clP$ between $l$ parties goes as follows: before the start of the protocol, the $l$ parties share a joint entangled state, and at the start parties 1 through $l$ receive inputs $x^1,\ldots, x^l$ respectively from $\clX^1\times\ldots\times\clX^l$. We assume that only the $j$-th party communicates in rounds $\{j,j+l, j+2l,\ldots\}$, and sends messages to all the other parties. For $i \in \{j, j+l, \ldots, \}$, in the $i$-th round the $j$-th party has a memory register $E_{i-l}$ from the previous round in which they communicated (when $i=j$, this is just the $j$-th party's part of the initial shared entangled state), as well as message registers $M^j_{i-l+1}, \ldots, M^j_{i-1}$ that they have received from all the other parties in the $(i-l+1)$-th to $(i-1)$-th rounds. The $j$-th party applies a unitary depending on their input $x^j$ on all these registers, to generate a register $E_i$ that they keep as memory, and a message $M_i = M^1_i\ldots M^{j-1}_iM^{j+1}_i\ldots M^l_i$, where $M^{j'}_i$ is sent to the $j'$-th party in this round. After all the communication rounds are done, the $j$-th party applies a final unitary on the memory and message registers they currently have, and then measures in the computational basis to produce their answer $a^j \in \clA^j$. We shall denote the outputs of $\clP$ on inputs $x^1\ldots x^l$ to $\clP$ by $\clP(x^1\ldots x^l)$ --- this is a random variable, as $\clP$'s outputs are not necessarily deterministic.

The following lemma about the final state of a quantum communication protocol is proved in Appendix \ref{ap:int-holevo}.
\begin{lemma}\label{int-holevo}
Let $\ket{\sigma}_{A^1\ldots A^l|x^1\ldots x^l}$ be the pure state shared by the $l$ parties at the end of a quantum communication protocol, on inputs $x^1, \ldots x^l$, with party $j$ holding register $A^j$. For any product input distribution $\sfP_{X^1\ldots X^l}$ on $\clX^1\times\ldots\times\clX^l$, define
\begin{align*}
\ket{\sigma}_{X^1\tX^1 \ldots X^l\tX^l A^1\ldots A^l} & = \sum_{x_1\ldots x_l}\sqrt{\sfP_{X^1\ldots X^l}(x_1\ldots x_l)}\ket{x^1x^1\ldots x^lx^l}_{X^1\tX^1 \ldots X^l\tX^l}\ket{\sigma}_{A^1\ldots A^l|x^1\ldots x^l}.
\end{align*}
If $c^j$ is the total communication from the $j$-th party in the protocol, then there for all $j \in [l]$, there exists a state $\rho^j_{X^{-j}\tX^{-j} A^{-j}}$ such that
\[ \sfD_\infty\left(\sigma_{X^jX^{-j}\tX^{-j} A^{-j}}\middle\Vert\sigma_{X^j}\otimes\rho^j_{X^{-j}\tX^{-j} A^{-j}}\right) \leq 2c^j \]
where $X^{-j}$ denotes $X^1\ldots X^{j-1}X^{j+1}\ldots X^l$, and $\tX^{-j}$ and $A^{-j}$ are defined analogously.
\end{lemma}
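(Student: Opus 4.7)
The plan is to proceed by induction on the rounds of the protocol. Partition the registers present after each round into party $j$'s side $P_j$ (containing $X^j$, $\tX^j$, party $j$'s quantum memory, and any message registers currently held by $j$) and the complementary side $P_{-j}$ (containing $X^{-j}$, $\tX^{-j}$, the other parties' memories, and any message registers currently held by them). My inductive statement after $r$ rounds will be that there is a state $\rho^{(r)}_{P_{-j}}$ such that
\[\sigma^{(r)}_{X^j P_{-j}}\leq 2^{2c^j(r)}\,\sigma_{X^j}\otimes\rho^{(r)}_{P_{-j}},\]
where $c^j(r)$ is the total number of qubits party $j$ has sent through round $r$. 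The base case $r=0$ is immediate: because the shared initial entangled state $\ket{\psi}_{A^1\ldots A^l}$ is input-independent and $\sfP_{X^1\ldots X^l}$ is product, the initial marginal factorizes exactly as $\sigma_{X^j}\otimes(\sigma_{X^{-j}\tX^{-j}}\otimes\psi_{A^{-j}})$, so $\sfD_\infty=0$.

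For the inductive step I would split on who speaks. When some party $j'\neq j$ communicates, their (possibly $X^{j'}$-controlled) unitary acts entirely on $P_{-j}$, so by Fact \ref{fc:u-inv} it can be absorbed into an updated $\rho^{(r)}$; message qubits transferred among parties $\neq j$ stay inside $P_{-j}$, and qubits transferred into $P_j$ simply shrink $P_{-j}$, which by the data-processing inequality only decreases $\sfD_\infty$. Hence neither $c^j$ nor the bound changes. When party $j$ communicates $q$ qubits in round $r+1$, they first apply a unitary of the form $U=\sum_{x^j}\state{x^j}_{X^j}\otimes U_{x^j}$ on $P_j$; because $U$ is controlled on $X^j$ (and so preserves its classical content) and acts only on $P_j\setminus X^j$, the marginal $\sigma_{X^j P_{-j}}$ is unchanged and the inductive bound still applies. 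Party $j$ then transfers a $q$-qubit message register $M$ to the other side. By Fact \ref{dim-ub} there is a state $\tau_M$ with $\sigma_{X^jP_{-j}M}\leq 2^{2q}\,\sigma_{X^jP_{-j}}\otimes\tau_M$, and chaining this with the inductive bound via Fact \ref{fc:Sinfty-tri} gives
\[\sigma_{X^jP_{-j}M}\leq 2^{2q+2c^j(r)}\,\sigma_{X^j}\otimes(\rho^{(r)}_{P_{-j}}\otimes\tau_M),\]
so setting $\rho^{(r+1)}=\rho^{(r)}\otimes\tau_M$ and noting $c^j(r+1)=c^j(r)+q$ closes the induction.

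After the last round this yields $\sfD_\infty(\sigma_{X^jX^{-j}\tX^{-j}A^{-j}}\|\sigma_{X^j}\otimes\rho^{(r)}_{P_{-j}})\leq 2c^j$; tracing out any auxiliary registers in $\rho^{(r)}_{P_{-j}}$ that are not in $X^{-j}\tX^{-j}A^{-j}$ (and using data processing) gives the $\rho^j_{X^{-j}\tX^{-j}A^{-j}}$ required by the statement. The main technical input is Fact \ref{dim-ub}, which is the source of the factor of $2$ in the bound: one qubit of $j$'s communication can leak at most two bits of $\sfD_\infty$-correlation about $X^j$ to the other side. I expect the main obstacle to be not any individual hard inequality but the careful register bookkeeping across rounds, and in particular the observation that party $j$'s local unitaries leave the marginal $\sigma_{X^jP_{-j}}$ untouched; without this, the induction could not be advanced cleanly through the unitary-then-send structure of each of $j$'s rounds.
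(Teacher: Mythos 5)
Your proof is correct and follows essentially the same approach as the paper: induction on the rounds of the protocol, invoking Fact \ref{dim-ub} to account for the $q$ message qubits whenever party $j$ speaks, observing that $j$'s $X^j$-controlled local unitary leaves the marginal $\sigma_{X^j P_{-j}}$ unchanged, using Fact \ref{fc:u-inv} (and tracing out) to absorb the other parties' actions into the product state, and chaining via Fact \ref{fc:Sinfty-tri}.
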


\begin{definition}
For a predicate $\sfV$ on $(\clA^1\times\ldots\times\clA^l)\times(\clX^1\times\ldots\times\clX^l)$, its entanglement-assisted $l$-party quantum communication complexity with error $0 < \eps < 1$, denoted by $\Q_\eps(\sfV)$, is the minimum total communication in an interactive entanglement-assisted quantum protocol such that for all $x^1\ldots x^l \in \clX^1\times\ldots\times\clX^l$,
\[ \Pr\left[\sfV\left(\clP(x^1\ldots x^l),x^1\ldots x^l\right) = 1\right] \geq 1 - \eps.\]
\end{definition}
\begin{definition}
For a predicate $\sfV$ on $(\clA^1\times\ldots\times\clA^l)\times(\clX^1\times\ldots\times\clX^l)$ and a distribution $p$ on $\clX^1\times\ldots\times\clX^l$, the distributional entanglement-assisted $l$-party quantum communication complexity of $\sfV$ with error $0 < \eps < 1$ w.r.t. distribution $p$, denoted by $\Q_\eps(\sfV,p)$, is the minimum total communication in an interactive entanglement-assisted quantum protocol such that,
\[ \Pr\left[\sfV\left(\clP(x^1\ldots x^l),x^1\ldots x^l\right) = 1\right] \geq 1 - \eps\]
where the probability is taken over the distribution $p$ for $x^1\ldots x^l$, as well as the internal randomness of $\clP$.
\end{definition}

\begin{fact}[Yao's Lemma, \cite{Yao79}]\label{yao}
For any $0< \eps < 1$, and any predicate $\sfV$, $\Q_{\eps}(V) = \max_p\Q_{\eps}(\sfV,p)$.\footnote{Note that this statement of Yao's lemma is not contradicted by the results in \cite{dGdW02}, which are about exact quantum communication protocols. For completeness, a proof of this lemma is provided in Appendix \ref{ap:yao}.}
\end{fact}

An $l$-player non-local game $G$ is described as $(p, \clX^1\times\ldots\times\clX^l,\clA^1\times\ldots\times\clA^l, \sfV)$ where $p$ is a distribution over the input set $\clX^1\times\ldots\times\clX^l$, $\clA^1\times\ldots\times\clA^l$ is the output set, and $\sfV$ is a predicate on the outputs and inputs. In an entangled strategy for a non-local game, the players are allowed to share an $l$-partite entangled state. Player $j$ gets input $x^j$ and performs a unitary and a measurement depending on their input on their part of the entangled state, to give their output $a^j$. The value achieved by a strategy on $G$ is the probability over $p$ and the internal randomness of the strategy that $\sfV(a^1\ldots a^l, x^1\ldots x^l)=1$.
\begin{definition}
The entangled value of a game $G=(p, \clX^1\times\ldots\times\clX^l,\clA^1\times\ldots\times\clA^l, \sfV)$, denoted by $\omega^*(G)$, is the maximum value achieved by any strategy for $G$.
\end{definition}


\section{Quantum partition bound}\label{sec:part-bound}
For sets $\clX^1,\ldots, \clX^l$ and $\clA^1, \ldots, \clA^l$, let $\clQ(\clA^1\times\ldots\times\clA^l,\clX^1\times\ldots\times\clX^l)$ denote the set of conditional probability distributions $q(a^1\ldots a^l|x^1\ldots x^l)$ that can be obtained by $l$ parties who share an $l$-partite entangled state, receive inputs $x^j\in\clX^j$ respectively, and perform measurements on their parts of the entangled state to obtain outputs $a^j$, without communicating. That is, $\clQ(\clA^1\times\ldots\times\clA^l,\clX^1\times\ldots\times\clX^l)$ is the following set:
\[ \left\{\left(\matel{\psi}{M^1_{a^1|x^1}\otimes \ldots \otimes M^l_{a^l|x^l}}{\psi}\right)_{a^1\ldots a^l,x^1\ldots x^l} \middle| \ket{\psi} \text{ is a state}, \forall a^j, x^j, j, \sum_{a^j\in\clA^j}M^j_{a^j|x^j} = \Id, M^j_{a^j|x^j} \geq 0\right\}.\]

We state definitions for three variants of the quantum partition bound, the first of which is non-distributional and was given by \cite{LLR12}. The second two are distributional modifications which we shall use.
\begin{definition}
For a predicate $\sfV$ on $(\clA^1\times\ldots \times \clA^l)\times(\clX^1\times\ldots\times\clX^l)$, and $0<\eps<1$, let $\bot$ be a special symbol not in any $\clA^j$. The quantum partition bound for $\sfV$ with $\eps$ error, denoted by $\eff_{\eps}(\sfV)$, is defined as the optimal value of the following optimization problem:
\begin{align*}
\min & \quad \frac{1}{\eta} \\
\text{s.t.} &  \sum_{a^1\ldots a^l: \sfV(a^1\ldots a^l,x^1\ldots x^l)=1} q(a^1\ldots a^l|x^1\ldots x^l) \geq (1-\eps)\eta \quad \forall x^1\ldots x^l \in \clX^1\times\ldots\times\clX^l\\
 & \quad \sum_{a^1\ldots a_l\in \clA^1\times\ldots\times\clA^l}q(a^1\ldots a^l|x^1\ldots x^l) = \eta \quad \forall\,x^1\ldots x^l\in\clX^1\times\ldots\times\clX^l\\
 & \quad q(a'^1\ldots a'^l|x^1\ldots x^l) \in \clQ\left((\clA^1\cup\{\bot\})\times\ldots\times(\clA^l\cup\{\bot\}), \clX^1\times\ldots\times\clX^l\right).
\end{align*}
\end{definition}

\begin{definition}
For a predicate $\sfV$ on $(\clA^1\times\ldots \times \clA^l)\times(\clX^1\times\ldots\times\clX^l)$, a distribution $p(x^1\ldots x^l)$ on $\clX^1\times\ldots\times\clX^l$, and $0<\eps<1$, let $\bot$ be a special symbol not in any $\clA^j$. The quantum partition bound for $\sfV$ with $\eps$ error with respect to $p$, denoted by $\widetilde{\eff}_{\eps}(\sfV,p)$, is defined as the optimal value of the following optimization problem:
\begin{align*}
\min & \quad \frac{1}{\eta} \\
\text{s.t.} &  \quad \sum_{x^1\ldots x^l\in \clX^1\times\ldots\times\clX^l}p(x^1\ldots x^l)\sum_{a^1\ldots a^l: \sfV(a^1\ldots a^l,x^1\ldots x^l)=1} q(a^1\ldots a^l|x^1\ldots x^l) \geq (1-\eps)\eta \\
 & \quad \sum_{a^1\ldots a_l\in \clA^1\times\ldots\times\clA^l}q(a^1\ldots a^l|x^1\ldots x^l) = \eta \quad \forall\,x^1\ldots x^l\in\clX^1\times\ldots\times\clX^l\\
 & \quad q(a'^1\ldots a'^l|x^1\ldots x^l) \in \clQ\left((\clA^1\cup\{\bot\})\times\ldots\times(\clA^l\cup\{\bot\}), \clX^1\times\ldots\times\clX^l\right).
\end{align*}
\end{definition}
\begin{definition}
For a predicate $\sfV$ on $(\clA^1\times\ldots \times \clA^l)\times(\clX^1\times\ldots\times\clX^l)$, a distribution $p(x^1\ldots x^l)$ on $\clX^1\times\ldots\times\clX^l$, and $0<\eps<1$, let $\bot$ be a special symbol not in any $\clA^j$. The average quantum partition bound for $\sfV$ with $\eps$ error with respect to $p$, denoted by $\eff_{\eps}(\sfV,p)$, is defined as the optimal value of the following optimization problem:
\begin{align*}
\min & \quad \frac{1}{\eta} \\
\text{s.t.} &  \quad \sum_{x^1\ldots x^l\in \clX^1\times\ldots\times\clX^l}p(x^1\ldots x^l)\sum_{a^1\ldots a^l: \sfV(a^1\ldots a^l,x^1\ldots x^l)=1} q(a^1\ldots a^l|x^1\ldots x^l) \geq (1-\eps)\eta \\
 & \sum_{x^1\ldots x^l \in \clX^1\times\ldots\times\clX^l}p(x^1\ldots x^l)\sum_{a^1\ldots a^l\in \clA^1\times\ldots\times\clA^l}q(a^1\ldots a^l|x^1\ldots x^l) = \eta \\
 & \quad q(a'^1\ldots a'^l|x^1\ldots x^l) \in \clQ\left((\clA^1\cup\{\bot\})\times\ldots\times(\clA^l\cup\{\bot\}), \clX^1\times\ldots\times\clX^l\right).
\end{align*}
\end{definition}

Operationally, $\eff_\eps(\sfV)$, $\widetilde{\eff}_\eps(\sfV,p)$ and $\eff_{\eps}(\sfV,p)$ are connected to zero-communication protocol (with aborts) to compute $\sfV$. A zero-communication protocol is one which any player is allowed to abort (indicated by them outputting the $\bot$ symbol), but if nobody aborts they need to compute $\sfV$ correctly. A zero-communication protocol for $\sfV$ is basically a strategy for a non-local game version of $\sfV$, with the output alphabet extended to $\clA^1\cup\{\bot\})\times\ldots\times(\clA^l\cup\{\bot\})$. Now we can have different conditions on the abort and success probability conditioned on not aborting for such protocols.
\begin{itemize}
\item Suppose the protocol is required to not abort on every input $x^1\ldots x^l$ with the same probability $\eta$, and conditioned on not aborting, every input is required to compute $\sfV$ correctly with probability $(1-\eps)$. $\eff_\eps(\sfV)$ corresponds to the efficiency, i.e., the inverse of the maximum probability of not aborting, in such a protocol.
\item Suppose the protocol is required to not abort with the same probability $\eta$ on every input $x^1\ldots x^l$, but conditioned on not aborting, the probability of computing $\sfV$ correctly, averaged over the inputs from $p$, is at least $(1-\eps)$. $\widetilde{\eff}_{\eps}(\sfV)$ is the inverse of the maximum  probability of not aborting in such a protocol.
\item Suppose the protocol aborts on input $x^1\ldots x^l$ with probability $\eta_{x^1\ldots x^l}$, and we require that the average over $x^1\ldots x^l$ from $p$ is $\eta$. Moreover, we require that the average probability of computing $\sfV$ correctly is at least $(1-\eps)\eta$, i.e., the average probability of correctness conditioned on not aborting is at least $(1-\eps)$.
$\eff_\eps(\sfV,p)$ is the inverse of the maximum probability of not aborting in such a protocol.
\end{itemize}
Because the requirements from the protocols are successively relaxed, it is easy to see that for any $p$,
\[ \eff_\eps(\sfV) \geq \widetilde{\eff}_\eps(\sfV,p) \geq \eff_\eps(\sfV,p).\]

The following lemma shows that $\widetilde{\eff}(\sfV,p)$, and hence $\eff_\eps(\sfV,p)$ lower bounds communication. The proof of this is a slight modification the proof in \cite{LLR12} which lower bounded $\Q_\eps(f)$ by $\eff_\eps(\sfV)$.

\begin{lemma}
\label{lem:eff-lb}
For any predicate $\sfV$ on $(\clA^1\times\ldots\times\clA^l)\times(\clX^1\times\ldots\times\clX^l)$, any distribution $p$ on $\clX^1\times\ldots\times\clX^l$ and error $\eps$,
\[ \Q_{\eps}(\sfV,p) \geq \frac{1}{2}\log\widetilde{\eff}_{\eps}(\sfV,p).\]
\end{lemma}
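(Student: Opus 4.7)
\textbf{Proof plan for Lemma \ref{lem:eff-lb}.} The plan is to adapt the argument of \cite{LLR12} that turns any entanglement-assisted quantum communication protocol into a zero-communication protocol with aborts via teleportation. Specifically, given a protocol $\clP$ for $\sfV$ with total communication $c = \Q_\eps(\sfV,p)$ and distributional error at most $\eps$ under $p$, I will construct a strategy feasible for the $\widetilde{\eff}_\eps(\sfV,p)$ program whose ``not abort'' probability $\eta$ equals $2^{-2c}$ on every input, and whose success probability conditioned on not aborting at least matches that of $\clP$. This immediately gives $\widetilde{\eff}_\eps(\sfV,p) \leq 2^{2c}$, hence $c \geq \tfrac{1}{2}\log\widetilde{\eff}_\eps(\sfV,p)$ as desired.

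First, I would replace each qubit of communication in $\clP$ by quantum teleportation: before the protocol begins, the players additionally share one EPR pair per qubit to be communicated (total $c$ EPR pairs). Whenever the protocol prescribes that party $j$ sends a qubit to party $j'$, party $j$ performs the Bell-measurement half of the teleportation protocol on the qubit to be sent and their half of the next unused EPR pair, producing a uniformly random classical outcome $s \in \{0,1\}^2$. In the true teleportation protocol this outcome would be sent to $j'$, who would apply the corresponding Pauli correction. In my zero-communication simulation, no classical bits are sent: party $j'$ simply applies the identity correction (i.e., does nothing), and party $j$ records $s$ locally. At the end of the simulation, party $j$ outputs $\bot$ if any of their recorded teleportation outcomes was not $(0,0)$; otherwise they perform the final measurement prescribed by $\clP$ and output $a^j \in \clA^j$. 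Each party's choice of output depends only on their input, their portion of the shared entanglement, and local randomness, so this is a valid zero-communication (LOCC-free, no-signaling) strategy in $\clQ$.

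Next I will verify the two key properties. Because each Bell measurement outcome is uniform over $\{0,1\}^2$ and independent of the input, the event ``all $c$ outcomes equal $(0,0)$'' has probability exactly $4^{-c} = 2^{-2c}$, regardless of $x^1\ldots x^l$. Crucially, conditioned on this event the state across all parties after each simulated round is precisely what it would be in $\clP$ if all teleportations had succeeded with the identity correction, so the post-selected joint output distribution matches $\clP(x^1\ldots x^l)$ exactly; this can be seen by induction on the rounds, using that the conditioned state after a teleportation with outcome $(0,0)$ is the teleported state unchanged (and that subsequent unitaries depend only on that state and the inputs). Therefore the construction yields a distribution $q(a^1\ldots a^l|x^1\ldots x^l)$ with $\sum_{a^1\ldots a^l} q = 2^{-2c}$ for every input, satisfying the first constraint of the $\widetilde{\eff}_\eps$ program, and
\[ \sum_{x^1\ldots x^l} p(x^1\ldots x^l) \sum_{a^1\ldots a^l: \sfV = 1} q(a^1\ldots a^l|x^1\ldots x^l) \geq (1-\eps)\cdot 2^{-2c} \]
by the distributional correctness of $\clP$. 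This is a feasible solution with $\eta = 2^{-2c}$, completing the proof.

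The only subtle point I see is in handling interactive protocols, where the qubit to be teleported in a later round depends on the state evolution in earlier rounds. This is not an obstacle: each party's local evolution proceeds deterministically given their input and their current local state (including un-corrected ancillas), and on the all-zeros branch the joint state tracks the honest execution exactly, so later teleportations operate on the correct registers. The main conceptual check is that the abort flag is computable locally by each party (it is, since it depends only on that party's own Bell-measurement outcomes), which is what makes the simulation land in $\clQ$ with output alphabet extended by $\bot$, as required by the definition of $\widetilde{\eff}_\eps(\sfV,p)$.
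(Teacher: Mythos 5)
Your proof is correct and follows essentially the same strategy as the paper's appendix proof: teleport all quantum communication, then turn the zero-communication strategy into one with aborts by checking the Bell-measurement outcomes. The one presentational difference is that the paper first passes through an intermediate protocol $\clP'$ with $2c$ bits of classical communication and then uses $2c$ shared random bits as a guess for the full transcript, with each party aborting if their locally-computed outgoing messages are inconsistent with the shared guess; you instead fix the guess to $(0,0)$ in every teleportation step and have the sender abort on any non-zero Bell outcome, which is the special case $r \equiv 0$ of the paper's construction and avoids the need for shared classical randomness entirely. Both give $\eta = 2^{-2c}$ for every input. One tiny imprecision you may want to tighten: you write that the outcomes are ``uniform and independent of the input,'' but what you actually use (and what is true) is that each Bell outcome is uniform conditioned on the previous outcomes, which yields joint uniformity of $(s_1,\ldots,s_c)$ and hence the exact $4^{-c}$; this follows from the standard fact that a Bell measurement on a qubit and half of a fresh EPR pair always gives a uniform outcome regardless of the (possibly entangled) state of the qubit, applied round by round.
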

\begin{proof}
We shall show that if there is a quantum interactive protocol $\clP$ for $\sfV$ with $c$ qubits of communication and error probability at most $\eps$, over input distribution $p$, then there is a zero-communication quantum protocol $\clP''$ which does not abort with probability $2^{-2c}$ worst case over all inputs, and when it does not abort it computes $\sfV$ with the same error probability over $p$.

Firstly, we can use entanglement and teleportation to get a protocol $\clP'$ from $\clP$, which only involves at most $2c$ bits of classical communication (with the players doing measurements according to the classical messages they receive and their inputs, on their parts of a shared entangled state). We assume that the number of bits communicated in $\clP'$ is of some fixed length every round for every input, with the total communication being $2c$ (this can be done by padding dummy bits if necessary).

Now in the zero-communication protocol $\clP''$, the players will share the same initial entangled state as in $\clP'$, and also $2c$ uniformly random classical bits. If player $j$ communicates in the $i$-th round, let $r_i=r^1_i\ldots r^{j-1}_ir^{j+1}_i\ldots r^l_i$ denote the portion of the shared randomness that corresponds to the bits in the $i$-th round of communication in $\clP'$, with $r^k_i$ corresponding to the message to the $k$-th player. On inputs $x^1\ldots x^l$, the players do the following in $\clP''$:
\begin{itemize}
\item For each round $i$, if player $j$ is the one communicating in that round, player $j$ assumes $r^j_{i-l+1}\ldots r^j_{i-1}$ are the classical messages they have received from the other $l-1$ players between the $(i-l)$-th and the $i$-th round. Player $j$ does a measurement on their part of the entangled state as they do in the $i$-th round of $\clP'$, depending on $x^j$, their previous measurement outcomes, and messages from the other players. If $r_i$ is not compatible with their input and these measurement outcomes and previous messages, then player $j$ outputs $\bot$.
\item At the end, if a player has not output $\bot$ yet, they output according to $\clP'$.
\end{itemize}
Once the outputs of the measurements are fixed, the protocol is deterministic. So a transcript that is separately compatible for all the players, is compatible for all of them, and there is exactly one such transcript. $\{r_i\}_i$ is equal to this transcript with probability $2^{-2c}$, and hence no player outputs $\bot$ with probability $2^{-2c}$. When they do not output $\bot$, the trancript is correct for input $x^1\ldots x^l$, and hence $\clP''$ is correct with probability at least $1-\eps$ over the distribution $p$ on $x^1\ldots x^l$, due to the correctness of $\clP'$.
\end{proof}

Yao's Lemma and Lemma \ref{lem:eff-lb} imply that for any $p$, $\log\widetilde{\eff}_\eps(\sfV,p)$ and therefore $\log\eff_\eps(\sfV,p)$ are lower bounds on $\Q_\eps(\sfV)$.

\subsection{Relationship between $\eff$ and the generalized discrepancy method}
In this section we shall prove Theorem \ref{thm:gamma_2-eff}, recalled below.
\gamm*
We shall not define $\gamma_2^\alpha$ and its dual norm $\gamma_2^*$ for general matrices. Instead, we shall use an exact characterization of $\gamma_2^*(F)$ for a boolean $f$ in terms of non-local games given by Tsirelson, and then use a duality relation to express $\gamma^\alpha_2$ in terms of $\gamma_2^*$.
\begin{fact}[\cite{Tsi87}]\label{fc:gamma-xor}
For total $f: \clX\times\clY\to\{-1,+1\}$, let $V_f$ denote its corresponding predicate as given in the statement of Theorem \ref{thm:gamma_2-eff}, and let $p$ be any distribution on $\clX\times\clY$. Then,
\[ \omega^*(G(p,\sfV_f)) = \frac{1}{2}(1+\gamma_2^*(F\circ p)),\]
where $F\circ p$ denotes the entry-wise product of $F$ and $p$.
\end{fact}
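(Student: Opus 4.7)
}
The plan is to recognize $G(p,\sfV_f)$ as an XOR game and then invoke Tsirelson's characterization of quantum XOR correlations, which is exactly how the dual factorization norm $\gamma_2^*$ arises. First I would rewrite the winning probability in ``bias'' form. Since $a,b\in\{-1,+1\}$, the winning condition $a\cdot b = f(x,y)$ is equivalent to $\tfrac{1}{2}(1+f(x,y)\cdot ab)=1$, and thus for any quantum strategy using state $\ket{\psi}$ and $\pm 1$-valued observables $A_x,B_y$ (the signed sums of Alice's and Bob's two-outcome POVMs) we have
\[ \Pr[a\cdot b = f(x,y)] \;=\; \tfrac{1}{2}\bigl(1 + f(x,y)\,\matel{\psi}{A_x\otimes B_y}{\psi}\bigr). \]
Averaging over $p$, the success probability becomes $\tfrac{1}{2}+\tfrac{1}{2}\sum_{x,y}(F\circ p)_{xy}\,\matel{\psi}{A_x\otimes B_y}{\psi}$, and taking the supremum over all quantum strategies gives
\[ \omega^*(G(p,\sfV_f)) = \tfrac{1}{2} + \tfrac{1}{2}\sup_{\ket{\psi},A_x,B_y}\sum_{x,y}(F\circ p)_{xy}\,\matel{\psi}{A_x\otimes B_y}{\psi}. \]

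Next I would invoke Tsirelson's theorem on the equivalence between quantum XOR correlations and inner products of unit vectors: a family of numbers $(c_{xy})$ is realizable as $c_{xy}=\matel{\psi}{A_x\otimes B_y}{\psi}$ for some state and $\pm 1$ observables if and only if there exist unit vectors $u_x,v_y$ in a real Hilbert space with $c_{xy}=\inner{u_x}{v_y}$. Substituting this equivalence into the supremum above rewrites the bias as
\[ \sup_{\{u_x\},\{v_y\}\text{ unit}}\sum_{x,y}(F\circ p)_{xy}\,\inner{u_x}{v_y}, \]
which is exactly the standard expression of the dual factorization norm $\gamma_2^*(F\circ p)$. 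Combining these two steps yields the claimed identity $\omega^*(G(p,\sfV_f)) = \tfrac{1}{2}(1+\gamma_2^*(F\circ p))$.

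The only real content here is Tsirelson's theorem; everything else is rewriting an XOR winning probability. The main subtlety to be careful about is the definition of $\gamma_2^*$ on a rectangular matrix $F\circ p$ (so that the quantity on the right of the claim is well-defined) and checking that the sup over unit vectors in the variational formula matches the dual-norm definition used elsewhere in the paper; both are standard and the identification is immediate once one writes $\gamma_2$ as a max over $\sum M_{xy}\inner{u_x}{v_y}$ with $\|u_x\|=\|v_y\|=1$. No extra obstacle arises from the distribution $p$: it is absorbed into $F\circ p$ so that the same Tsirelson argument applies verbatim, and since both sides of the identity are determined by the entries of $F\circ p$, we do not need any separate distributional argument.
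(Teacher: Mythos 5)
Your argument is correct and is exactly the standard derivation of the result the paper invokes as a black box: the paper does not prove Fact \ref{fc:gamma-xor} but simply cites Tsirelson \cite{Tsi87}, and your two steps (rewriting the XOR winning probability as $\tfrac12+\tfrac12\sum_{x,y}(F\circ p)_{xy}\matel{\psi}{A_x\otimes B_y}{\psi}$, then applying Tsirelson's vector characterization of quantum XOR correlations to identify the bias with $\gamma_2^*(F\circ p)$) are precisely the content of that citation. The only point worth stating explicitly is that general two-outcome POVMs give observables of norm at most one, hence vectors of norm at most one, and that the supremum of the bilinear form over norm-at-most-one vectors coincides with the supremum over unit vectors, so both conventions for $\gamma_2^*$ yield the same value.
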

\begin{fact}[see e.g. - Theorem 64 in \cite{LS08}]\label{fc:gamma-dual}
For any matrix $A$, $\alpha \geq 1$, $\gamma_2^\alpha(A)$ and $\gamma_2^*(A)$ are related as
\[ \gamma_2^\alpha(A) = \max_M\frac{(\alpha+1)\inner{A}{M} - (\alpha-1)\Vert M\Vert_1}{2\gamma_2^*(M)}.\]
When $A$ is the matrix corresponding to a boolean function $f$, this can also be expressed as
\[ \gamma_2^\alpha(F) = \max_{F',p}\frac{(\alpha+1)\inner{F}{F'\circ p} - (\alpha-1)}{2\gamma_2^*(F'\circ p)}\]
where the maximization is taken over matrices $F'$ with $\pm 1$ entries, and distributions $p$.
\end{fact}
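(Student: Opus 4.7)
My plan is to reduce the theorem, via Fact \ref{fc:gamma-dual}, to a family of per-$F'$ lower bounds on XOR-game biases, and then produce each such bound from an optimal zero-communication protocol witnessing $\eff_\eps(\sfV_f,p)$. Setting $\eta := 1/\eff_\eps(\sfV_f,p)$ and $\alpha=(1+2\eps)/(1-2\eps)$, a short calculation using Fact \ref{fc:gamma-dual} shows that $\eff_\eps(\sfV_f,p)\geq(1-2\eps)\gamma_2^\alpha(F,p)$ is equivalent to the claim: for every $\pm 1$ sign matrix $F'$,
\[
\gamma_2^*(F'\circ p)\;\geq\;\eta\bigl(\langle F, F'\circ p\rangle-2\eps\bigr).
\]
By Fact \ref{fc:gamma-xor}, $\gamma_2^*(F'\circ p)$ is exactly the maximum bias of a quantum strategy for the two-player XOR nonlocal game $G(p,\sfV_{f'})$, so it suffices to exhibit one such strategy achieving the right-hand side.

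I would build that strategy from an optimal quantum zero-communication protocol $q(ab|xy)$ realizing $\eta$ by having each player replace their $\bot$ outcome with an independent uniformly random bit in $\{-1,+1\}$. The resulting local observables are $O^A_x=\sum_{a\in\{\pm 1\}}a\,M^A_{ax}$ and $O^B_y=\sum_{b\in\{\pm 1\}}b\,M^B_{by}$ (the $\bot$ POVM element drops out because $(+1)+(-1)=0$), so $\langle\psi|O^A_x\otimes O^B_y|\psi\rangle=\sum_{(a,b)\in\{\pm 1\}^2}ab\,q(ab|x,y)=f(x,y)(c_{xy}-e_{xy})$, where $c_{xy}$ and $e_{xy}$ denote the probabilities of both players producing outcomes in $\{\pm 1\}$ with $ab=f(x,y)$ and $ab=-f(x,y)$ respectively. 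Writing $c_{xy}-e_{xy}=\eta_{xy}-2e_{xy}$ with $\eta_{xy}=c_{xy}+e_{xy}$, and using the $\eff_\eps$ constraints $\sum_{xy}p(xy)\eta_{xy}=\eta$ and $\sum_{xy}p(xy)e_{xy}\leq\eps\eta$, the bias of this strategy in $G(p,\sfV_{f'})$ is
\[
\sum_{xy}p(xy)f(xy)f'(xy)\eta_{xy}-2\sum_{xy}p(xy)f(xy)f'(xy)e_{xy}\;\geq\;\sum_{xy}p(xy)f(xy)f'(xy)\eta_{xy}-2\eps\eta,
\]
since $|ff'|=1$. In the stricter setting $\widetilde{\eff}_\eps$, where $\eta_{xy}=\eta$ is constant across inputs, the first sum is exactly $\eta\langle F, F'\circ p\rangle$ and the desired bias lower bound is immediate.

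The main obstacle will be handling the slack introduced by the variable per-input abort probabilities $\eta_{xy}$ allowed by the average definition of $\eff_\eps$: a priori $\sum_{xy}p(xy)f(xy)f'(xy)\eta_{xy}$ need not equal $\eta\langle F, F'\circ p\rangle$. I would deal with this by first noting that it suffices to treat $F'$ with $\langle F, F'\circ p\rangle>2\eps$ (for other $F'$, the numerator in Fact \ref{fc:gamma-dual} under $\alpha=(1+2\eps)/(1-2\eps)$ is nonpositive and the claim is vacuous), and then for such $F'$ mixing the random-completion strategy above with a classical shared-randomness strategy $(u_A(x),u_B(y))$ that reweights the contributions of different inputs; the mixture is chosen to absorb the discrepancy $\sum_{xy}p(xy)f(xy)f'(xy)(\eta_{xy}-\eta)$ into the available $2\eps\eta$ slack. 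Once the bias bound is in place, the remaining step is a routine algebraic substitution back through Fact \ref{fc:gamma-dual} with the chosen $\alpha$.
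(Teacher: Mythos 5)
There is a fundamental mismatch here: the statement you were asked to prove is Fact \ref{fc:gamma-dual} itself — the duality relation $\gamma_2^\alpha(A) = \max_M\frac{(\alpha+1)\inner{A}{M} - (\alpha-1)\Vert M\Vert_1}{2\gamma_2^*(M)}$ and its reformulation over sign matrices $F'$ and distributions $p$ — but your proposal never argues this. Instead it invokes Fact \ref{fc:gamma-dual} as a black box ("a short calculation using Fact \ref{fc:gamma-dual} shows\ldots") and goes on to prove Theorem \ref{thm:gamma_2-eff}, i.e.\ $\eff_\eps(\sfV_f,p)\geq(1-2\eps)\gamma_2^\alpha(F,p)$. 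As a proof of the assigned statement this is circular: the entire content of the Fact — which in the paper is simply cited from Theorem 64 of \cite{LS08} and whose proof is a convex-duality computation, starting from $\gamma_2^\alpha(A)=\min\{\gamma_2(B): 1\leq A[x,y]B[x,y]\leq\alpha\ \forall x,y\}$ and dualizing against the dual norm $\gamma_2^*$, followed (for the second display) by writing an arbitrary nonzero witness $M$ as $\Vert M\Vert_1\,(F'\circ p)$ with $F'$ the sign pattern of $M$ and $p$ its normalized absolute values, using homogeneity of the ratio in $M$ — is nowhere addressed. Zero-communication protocols, XOR-game biases, and Tsirelson's theorem (Fact \ref{fc:gamma-xor}) play no role in establishing this norm duality.

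For what it is worth, read as an argument for Theorem \ref{thm:gamma_2-eff} your sketch closely parallels the paper's own proof of that theorem (replace aborts by uniform $\pm1$ outputs, bound the resulting bias in $G(p,\sfV_{f'})$ by $\gamma_2^*(F'\circ p)$ via Fact \ref{fc:gamma-xor}, then substitute into the duality formula); the paper absorbs the input-dependent abort probabilities via the pointwise inequality $\beta\theta\geq\alpha\beta+\alpha\theta-1$ rather than your proposed (and still vague) mixing with an auxiliary classical strategy. But none of that discharges Fact \ref{fc:gamma-dual}, which is what was to be proved.
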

Using this characterization, we give the following definition of $\gamma_2^\alpha(F,p)$.
\begin{definition}
For matrix $F$ with $\pm 1$ entries, $\gamma_2^\alpha(F,p)$ with respect to distribution $p$ is defined as
\[ \gamma_2^\alpha(F,p) = \max_{F'}\frac{(\alpha+1)\inner{F}{F'\circ p} - (\alpha-1)}{2\gamma_2^*(F'\circ p)}.\]\end{definition}

\begin{proof}[Proof of Theorem \ref{thm:gamma_2-eff}]
Our proof closely follows the lower bound for $\Q_\eps(f)$ in terms of $\log\gamma_2^\alpha$ as described in Section 5.3.2 of \cite{LS08}, which is credited to Harry Buhrman.

Suppose $\eff_\eps(\sfV_f,p)= \frac{1}{\eta}$ for some $\eta$. Let $\clP$ be a zero-communication protocol for $V_f$ with constraints as required in the definition of $\eff_\eps(\sfV_f,p)$. Let $\eta_{xy}$ denote the probability of the protocol aborts on input $(x,y)$. Let $O(x,y)$ denote the average (over internal randomness) output given by $\clP$ conditioned on not aborting on inputs $x,y$. Here we are calling $a\cdot b$ the output of the protocol, if Alice outputs $a$ and Bob outputs $b$, which means $O(x,y)$ is some number in $[-1,1]$. Note that $O(x,y)$ is defined conditioned on not aborting, so it is in fact normalized by the quantity $\eta$. From the definition of $\eff_\eps(\sfV_f,p)$, the following condition holds
\[
\sum_{x,y} p(x,y)f(x,y)O(x,y) \geq 1-2\eps.
\]
The above expression is actually the difference between the probability of computing $f$ correctly and the probability of computing it incorrectly, which is why we get $1-2\eps$.

Now let $f':\clX\times\clY\to\{-1,+1\}$ be an arbitrary boolean function, and define $\sfV_{f'}$ the same way as $\sfV_f$. We shall give a strategy $\clS$ for the game $G(p,\sfV_{f'})$ using the zero-communication protocol $\clP$. $\clS$ works as follows:
\begin{itemize}
\item On inputs $x,y$ for $G(p,\sfV_{f'})$, Alice and Bob run the protocol $\clP$ on $x,y$.
\item If $\clP$ gives output $\bot$ for either player, they output $\pm 1$ uniformly at random.
\item If $\clP$ does not abort, then Alice and Bob both output according to $\clP$.
\end{itemize}
Note that conditioned on $\clP$ not aborting, the average output produced by Alice and Bob on inputs $x,y$ is also $O(x,y)$. Strategy $\clS$ thus wins with probability $\frac{1}{2}(1+\delta)$ ($\delta$ may be negative), where
\[ \delta = \eta\sum_{x,y}p(x,y) f'(x,y)O(x,y).\]
$f(x,y), f'(x,y)$ are in $\{-1,+1\}$, and $O(x,y)$ is in $[-1,1]$. For three numbers $\alpha,\beta \in \{-1,+1\}$, $\theta \in [-1,1]$, the following condition is true, and can be checked by putting in the four possible values of $(\alpha,\beta)$:
\[ \beta \theta \geq \alpha\beta + \alpha\theta - 1.\]
Using the above on $f(x,y), f'(x,y), O(x,y)$ we get,
\begin{align*}
\sum_{x,y}p(x,y) f'(x,y)O(x,y) & \geq \sum_{x,y}p(x,y)\left(f(x,y)f'(x,y) + f(x,y)O(x,y) - 1\right) \\
& \geq \sum_{x,y}p(x,y)f(x,y)f'(x,y) + (1-2\eps) - 1 \\
&  = \inner{F}{F'\circ p} - 2\eps.
\end{align*}
By Fact \ref{fc:gamma-xor} we have,
\begin{align*}
\gamma_2^*(F'\circ p) & \geq \delta \geq \eta(\inner{F}{F'\circ p} - 2\eps)
\end{align*}
which gives us
\[ \frac{1}{\eta} \geq \max_F\frac{\inner{F}{F'\circ p} - 2\eps}{\gamma_2^*(F'\circ p)} = (1-2\eps)\gamma_2^\alpha(F,p)\]
with $\alpha = \frac{1+2\eps}{1-2\eps}$.
\end{proof}


\section{Substate Perturbation Lemma}\label{sec:perturb-D}

To prove the Substate Perturbation Lemma, we use the following result due to \cite{ABJT18}. This result is stated in terms of $\sfI_{\max}$ for general states in \cite{ABJT18}, where some of the states involved are optimized over. Moreover, the distance between $\sigma'_{XB}$ and $\sigma_{XB}$ is considered, rather than just the distance between $\sigma'_B$ and $\sigma_B$. However, for the purposes of the proof the optimal state does not matter, and only the distance between $\sigma'_B$ and $\sigma_B$ is relevant; so we state in the form below. Our proof of the Substate Perturbation Lemma is also heavily inspired by their proof of this result.
\begin{fact}[\cite{ABJT18}, Theorem 2]\label{fc:Imax-marginal}
Suppose there are states $\sigma'_{XB}, \sigma_B$ and $\psi_X$ satisfying $\Delta(\sigma_{B}, \sigma'_{B}) \leq \eps$ and
\[ \sigma'_{XB} \leq 2^c(\psi_X\otimes\sigma_B).\]
Then for any $\delta >0$, there exists a state $\sigma''_{XB}$ satisfying $\Delta(\sigma'_{XB}, \sigma''_{XB}) \leq \eps+\delta$, $\sigma''_B = \sigma_B$, and
\[ \sigma''_{XB} \leq 2^c\left(1 + \frac{8}{\delta^2}\right)\psi_X\otimes\sigma_B.\]
\end{fact}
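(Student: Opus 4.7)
The plan is to construct $\sigma''_{XB}$ by first applying a ``marginal correcting'' conjugation on the $B$ system to push the $B$-marginal from $\sigma'_B$ onto $\sigma_B$, and then truncating a dangerous spectral tail that would otherwise blow up the operator bound. Concretely, taking the partial trace of the hypothesis $\sigma'_{XB}\le 2^c(\psi_X\otimes\sigma_B)$ yields $\sigma'_B\le 2^c\sigma_B$, so $\supp(\sigma'_B)\subseteq\supp(\sigma_B)$, and monotonicity of purified distance under partial trace yields $\Delta(\sigma_B,\sigma'_B)\le\eps$. These two facts are the essential structural inputs: the first lets me define an operator $T_B=\sigma_B^{1/2}(\sigma'_B)^{-1/2}$ on $\supp(\sigma'_B)$, and the second says $T_B$ should be ``close to identity'' in most directions.

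Next I would observe that the conjugated state $\tau_{XB}=(\Id_X\otimes T_B)\,\sigma'_{XB}\,(\Id_X\otimes T_B^\dagger)$ has $B$-marginal exactly $\sigma_B^{1/2}\Pi_{\sigma'_B}\sigma_B^{1/2}$, which equals $\sigma_B$ up to a projector that can be absorbed into the $\delta$ slack. However, the operator upper bound becomes $\tau_{XB}\le 2^c\,\psi_X\otimes T_B\sigma_B T_B^\dagger$, and $T_B\sigma_B T_B^\dagger$ can have arbitrarily large eigenvalues on directions where $\sigma'_B\ll\sigma_B$. The cure is a spectral cutoff: let $P$ be the spectral projector (on the $B$ system) onto the subspace where the self-adjoint operator $(\sigma'_B)^{-1/2}\sigma_B(\sigma'_B)^{-1/2}$ has eigenvalue at most $K:=1+8/\delta^2$. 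Setting $\sigma''_{XB}$ to be the appropriate renormalization of $(\Id_X\otimes P T_B)\sigma'_{XB}(\Id_X\otimes T_B^\dagger P)$, together with a small marginal completion, enforces both $\sigma''_B=\sigma_B$ and the desired bound $\sigma''_{XB}\le 2^c K\,\psi_X\otimes\sigma_B$.

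The remaining and technically most delicate step is to show that the truncation is ``gentle'', so that $\Delta(\sigma'_{XB},\sigma''_{XB})\le\eps+\delta$, which combined with the hypothesis $\Delta(\sigma_{XB},\sigma'_{XB})\le\eps$ gives the claimed $\Delta(\sigma_{XB},\sigma''_{XB})\le 2\eps+\delta$. The plan here is to bound $\Tr[\sigma'_B(\Id-P)]$ by an operator Markov/Chebyshev argument: since $\Tr[\sigma'_B\cdot(\sigma'_B)^{-1/2}\sigma_B(\sigma'_B)^{-1/2}]=\Tr\sigma_B=1$, the spectral mass above $K$ is at most $1/K=O(\delta^2)$. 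Then Gentle Measurement translates this into a purified-distance perturbation of order $\sqrt{1/K}=O(\delta)$, giving the target bound after summing with the marginal-correction contribution coming from $\Delta(\sigma_B,\sigma'_B)\le\eps$.

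The main obstacle I expect is calibrating the cutoff so that the factor appearing in the operator bound is exactly $1+8/\delta^2$ rather than some looser function of $\delta$: this requires a careful version of the operator Markov inequality (using the specific choice of reference $\sigma_B^{1/2}\cdot\sigma_B^{1/2}$ rather than an arbitrary one) together with a tight application of Fuchs--van de Graaf to convert between trace-distance and purified-distance estimates. A secondary subtlety is the support mismatch between $\sigma_B$ and $\sigma'_B$, which requires padding the truncated state with a small amount of $\psi_X\otimes(\sigma_B-\sigma'_B)_+$-type correction; this correction must itself be checked to respect the operator bound $2^c(1+8/\delta^2)\psi_X\otimes\sigma_B$, which is where the constant $8$ (rather than a smaller one) ultimately comes from.
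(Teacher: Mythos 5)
This Fact is cited from \cite{ABJT18} (their Theorem 2) and is not proved in the paper, so there is no in-paper proof to compare against directly. However, the paper's own proof of the Substate Perturbation Lemma (Lemma~\ref{Imax-qu-close}) carries out a very similar construction, and is a useful reference point for evaluating your proposal.

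Your high-level plan --- a marginal-correcting conjugation on $B$, a spectral truncation to tame the operator bound, an operator Markov/Chebyshev estimate on the truncated mass, and a final marginal completion --- is the right circle of ideas. However, there are two concrete flaws in the execution, the first of which is fatal as written.

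\textbf{Projector placement.} You define $P$ as the spectral projector of $A := (\sigma'_B)^{-1/2}\sigma_B(\sigma'_B)^{-1/2}$ onto eigenvalues $\le K$, and then conjugate by $P\,T_B$ with $T_B = \sigma_B^{1/2}(\sigma'_B)^{-1/2}$, i.e.\ you place $P$ on the outside. Sandwiching the hypothesis $\sigma'_{XB}\le 2^c(\psi_X\otimes\sigma_B)$ then gives
$(\Id\otimes PT_B)\sigma'_{XB}(\Id\otimes T_B^\dagger P)\le 2^c\,\psi_X\otimes P\sigma_B^{1/2}A\sigma_B^{1/2}P$.
But $P$ is a spectral projector of $A$, which does \emph{not} commute with $\sigma_B^{1/2}$, and $P\sigma_B^{1/2}A\sigma_B^{1/2}P$ is not $\le K\sigma_B$: for $\ket v$ in the range of $P$, one has $\bra v P\sigma_B^{1/2}A\sigma_B^{1/2}P\ket v=\|A^{1/2}\sigma_B^{1/2}\ket v\|^2$, and $\sigma_B^{1/2}\ket v$ can escape into the high-$A$ subspace. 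The projector has to sit \emph{between} $\sigma_B^{1/2}$ and $(\sigma'_B)^{-1/2}$: with $S_B:=\sigma_B^{1/2}P(\sigma'_B)^{-1/2}$ one gets $S_B\sigma_BS_B^\dagger=\sigma_B^{1/2}PAP\sigma_B^{1/2}\le K\sigma_B^{1/2}P\sigma_B^{1/2}\le K\sigma_B$, which is what you need. (Note also that $\Tr[\sigma'_BA]=\Tr[\Pi_{\sigma'_B}\sigma_B]\le 1$, with equality only when $\supp\sigma_B\subseteq\supp\sigma'_B$; the inequality is all you use, so that part is fine.)

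\textbf{Missing Uhlmann unitary.} Your transition operator $T_B=\sigma_B^{1/2}(\sigma'_B)^{-1/2}$ lacks a unitary insertion. The overlap you get from a purification argument is (a ratio of) $\Tr[(\sigma'_B)^{1/2}\sigma_B^{1/2}]$, the so-called pretty-good fidelity, which satisfies $\Tr[(\sigma'_B)^{1/2}\sigma_B^{1/2}]\ge \sfF(\sigma'_B,\sigma_B)^2$ but can be strictly smaller than $\sfF(\sigma'_B,\sigma_B)$. Converting this to purified distance therefore loses a constant factor, and you will not land on the stated $2\eps+\delta$. The fix --- exactly as in the paper's Lemma~\ref{Imax-qu-close} --- is to conjugate by $\sigma_B^{1/2}UP(\sigma'_B)^{-1/2}$ where $U$ is the Uhlmann unitary achieving $\sfF(\sigma'_B,\sigma_B)=\Tr(U(\sigma'_B)^{1/2}\sigma_B^{1/2})$; then the overlap is exactly the fidelity $\sfF(\sigma'_B,\sigma_B)\ge\sqrt{1-\eps^2}$, $UPU^\dagger$ is still a projector so the operator bound survives, and the distance contribution from the marginal change is $\eps$ rather than $O(\eps)$.

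With those two corrections the outline works, and the bookkeeping of the $\delta$ budget (truncated mass $\le 1/K$, gentle-measurement-type conversion, plus the marginal completion staying $\le\psi_X\otimes\sigma_B$) is straightforward.
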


\begin{lemma}[Substate Perturbation Lemma]\label{Imax-qu-close}
Suppose there are states $\sigma'_{XB},\sigma_{B}$ and $\psi_X$ satisfying $\Delta(\sigma_{B}, \sigma'_{B}) \leq \eps$,
\[ \sigma'_{XB} \leq 2^c (\psi_X\otimes\sigma_B)\]
and a state $\rho_B$ satisfying $\Delta(\sigma_B,\rho_B) \leq \delta_1$. Then for any $\delta_0 > 0$, there exists state $\rho'_{XB}$ satisfying $\Delta(\rho'_{XB},\sigma'_{XB}) \leq \eps + \delta_0 +\delta_1$, and
\[ \rho'_{XB} \leq 2^{c+1}\left(1+\frac{4}{\delta_0^2}\right)\psi_X\otimes\rho_B.\]
\end{lemma}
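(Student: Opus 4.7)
The plan is to reduce to Fact~\ref{fc:Imax-marginal} by first handling the marginal mismatch between $\sigma_{XB}$ and $\sigma'_{XB}$, and then swapping the reference state from $\sigma_B$ to the nearby $\rho_B$ on the right-hand side of the substate inequality. First, I would apply Fact~\ref{fc:Imax-marginal} to the triple $(\sigma_{XB},\sigma'_{XB},\psi_X)$ with an auxiliary parameter $\delta'$ (to be absorbed at the end), producing an intermediate state $\sigma''_{XB}$ with $\sigma''_B=\sigma_B$, $\Delta(\sigma_{XB},\sigma''_{XB})\leq 2\eps+\delta'$, and $\sigma''_{XB}\leq 2^c\bigl(1+8/{\delta'}^2\bigr)\,\psi_X\otimes\sigma_B$. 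This already puts us in a setting where the reference marginal agrees with the ``true'' $B$-marginal of the nearby state.

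Next, I would swap $\sigma_B$ for $\rho_B$ on the right-hand side. Since $\sigma_B$ need not be dominated by any small multiple of $\rho_B$, the natural device is the convex combination $\tau_B=\tfrac12(\sigma_B+\rho_B)$: the elementary bound $\sigma_B\leq 2\tau_B$ yields $\sigma''_{XB}\leq 2^{c+1}\bigl(1+8/{\delta'}^2\bigr)\,\psi_X\otimes\tau_B$, which accounts for the factor $2^{c+1}$ in the conclusion, and joint concavity of squared fidelity (Fact~\ref{F-concave}) gives $\Delta(\tau_B,\rho_B)\leq \delta_1/\sqrt{2}$. The remaining task is to convert a substate bound with respect to $\tau_B$ into one with respect to $\rho_B$; for this I would mimic the Markov/projection technique that underlies Fact~\ref{fc:Imax-marginal}, projecting $\sigma''_{XB}$ onto the spectral subspace of the density-ratio operator $(\psi_X\otimes\rho_B)^{-1/2}\sigma''_{XB}(\psi_X\otimes\rho_B)^{-1/2}$ (with the inverses on supports) below a threshold $T=2^{c+1}(1+4/\delta_0^2)$. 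The resulting state $\rho'_{XB}$ satisfies $\rho'_{XB}\leq T\,\psi_X\otimes\rho_B$ by construction, and the closeness of $\tau_B$ to $\rho_B$ lets one bound the mass truncated off, via Markov's inequality, by $O(\delta_0)$ in purified distance.

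Combining the three sources of perturbation by the triangle inequality then gives $\Delta(\rho'_{XB},\sigma_{XB})\leq (2\eps+\delta')+\delta_1+\delta_0$, which matches the claim after taking $\delta'$ small enough to be absorbed into $\delta_0$. The step I expect to be the hardest is the Markov truncation: tracking constants carefully enough to obtain the tight multiplier $1+4/\delta_0^2$ (rather than a looser $1+8/\delta_0^2$) while keeping the purified-distance cost of the truncation at exactly $\delta_0$. This will require using the closeness of $\tau_B$ and $\rho_B$ twice---once in the expectation estimate that drives Markov and once in the purified-distance triangle inequality---in close analogy with the \cite{ABJT18} argument, but now with an additional marginal perturbation to absorb, which is what forces the extra factor of $2$ in the substate constant and the $\delta_1$ contribution to the smoothing parameter.
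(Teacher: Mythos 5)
Your first step---applying Fact~\ref{fc:Imax-marginal} to produce $\sigma''_{XB}$ with matched $B$-marginal---agrees with the paper. After that, your route diverges, and I believe the second half has a genuine gap.

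The paper does \emph{not} proceed by a convex-combination trick followed by Markov truncation. Instead, it takes the Uhlmann unitary $U$ achieving $\sfF(\rho_B,\sigma_B)=\Tr(U\rho_B^{1/2}\sigma_B^{1/2})$ and conjugates $\sigma''_{XB}$ by $\Id_X\otimes\rho_B^{1/2}U\sigma_B^{-1/2}$, which explicitly carries any operator bounded by $\psi_X\otimes\sigma_B$ to one bounded by $\psi_X\otimes\rho_B^{1/2}U\Pi U^\dagger\rho_B^{1/2}\le\psi_X\otimes\rho_B$ (here $\Pi=\supp(\sigma_B)$); a second ``garbage'' term supported on $\rho_B^{1/2}(\Id-U\Pi U^\dagger)\rho_B^{1/2}$ is added to make the two pieces add up below $\psi_X\otimes\rho_B$. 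The $\delta_1$ contribution to the purified distance then comes out of a canonical-purification fidelity computation: one shows directly that this map has fidelity cost exactly $\sfF(\rho_B,\sigma_B)$, giving $\Delta(\rho'_{XB},\sigma''_{XB})\le\delta_1$. No spectral truncation of the full $XB$ density ratio is needed.

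The step in your proposal that I do not think survives is the Markov truncation of $Z=(\psi_X\otimes\rho_B)^{-1/2}\sigma''_{XB}(\psi_X\otimes\rho_B)^{-1/2}$. First, $Z$ need not even be a well-defined bounded operator: $\supp(\sigma''_{XB})$ is contained in $\supp(\psi_X)\otimes\supp(\sigma_B)$, but nothing forces $\supp(\sigma_B)\subseteq\supp(\rho_B)$ (closeness of $\sigma_B$ and $\rho_B$ in $\Delta$ does not control density ratios). The convex-combination device actually pushes in the wrong direction here: you get $\sigma''_{XB}\le 2^{c+1}(1+8/\delta'^2)\psi_X\otimes\tau_B$, but $\supp(\tau_B)\supseteq\supp(\rho_B)$, so having a substate bound against $\tau_B$ in no way delivers one against $\rho_B$. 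Second, even when $Z$ is well-defined, the quantity whose tail you need to control is $\Tr(\bar P\sigma''_{XB})$ (where $\bar P$ projects onto eigenvalues of $Z$ above $T$), and the Markov bound you can actually run, $\Tr(\bar P(\psi_X\otimes\rho_B))\le 1/T$, is on the reference state, not on $\sigma''_{XB}$; since $\bar P$ does not commute with $(\psi_X\otimes\rho_B)^{1/2}$, you cannot pass from one to the other. In \cite{ABJT18} the truncation is applied to a density ratio on the $B$-marginal alone (where the expectation against $\sigma_B$ is identically $1$), which is a fundamentally different structure and does not transfer to replacing the reference state on the right of the full $XB$ substate inequality. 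Finally, the constants are not as you describe: the paper's $2^{c+1}(1+4/\delta_0^2)$ is actually a weakening of the intermediate bound $2^c(1+8/\delta_0^2)$, not a tightening, so there is no sharp multiplier to chase. I would encourage you to look for an explicit operator $A_B$ with $A_B\sigma_BA_B^\dagger\le\rho_B$ rather than trying to salvage the Markov argument.
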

\begin{proof}
First we use Fact \ref{fc:Imax-marginal} to get a state $\sigma''_{XB}$ satisfying
\[ \sigma''_{XB} \leq 2^c\left(1+\frac{8}{\delta_0^2}\right)\psi_X\otimes\sigma_B\]
such that $\Delta(\sigma'_{XB},\sigma''_{XB}) \leq \eps+\delta_0$ and $\sigma''_B = \sigma_B$.

Let $U$ be the unitary such that
\[ \sfF\left(\rho_B,\sigma_B\right) = \Tr\left(U\rho_B^{1/2}\sigma_B^{1/2}\right).\]
Define
\[ \rho'_{XB} = \underbrace{(\Id\otimes\rho_B^{1/2}U\sigma_B^{-1/2})\sigma''_{XB}(\Id\otimes\sigma_B^{-1/2}U^\dagger\rho_B^{1/2})}_{\tilde{\vph}_{XB}} + \underbrace{\sigma_X\otimes\rho_B^{1/2}(\Id - U\Pi U^\dagger)\rho_B^{1/2}}_{\tilde{\psi}_{XB}}\]
where all the inverses are generalized and $\Pi$ is the projector onto the support of $\sigma_B$. Note that
\[ (\Id\otimes\rho_B^{1/2}U\sigma_B^{-1/2})\sigma''_{XB}(\Id\otimes\sigma_B^{-1/2}U^\dagger\rho_B^{1/2}) \leq 2^c\left(1 + \frac{8}{\delta_0^2}\right)\psi_X\otimes\rho_B^{1/2}U\sigma_B^{-1/2}\sigma_B\sigma_B^{-1/2}U^\dagger\rho_B^{1/2},\]
and hence
\begin{align*}
\rho'_{XB} & \leq 2^c\left(1+\frac{8}{\delta_0^2}\right)\psi_X\otimes\rho_B^{1/2}U\Pi U^\dagger\rho_B^{1/2} + \psi_X\otimes\rho_B^{1/2}(\Id-U\Pi U^\dagger)\rho_B^{1/2} \\
 & \leq 2^{c+1}\left(1+\frac{4}{\delta_0^2}\right)\psi_X\otimes\rho_B.
\end{align*}

Now we only have to show that $\Delta(\rho'_{XB},\sigma_{XB}) \leq 2\eps+\delta_0+\delta_1$. In order to do this, we note that
\begin{equation}\label{eq:Imax-tri}
\Delta(\rho'_{XB},\sigma'_{XB}) \leq \Delta\left(\rho'_{XB}, \sigma''_{XB}\right) + \Delta(\sigma''_{XB}, \sigma'_{XB}).
\end{equation}
Using Fact \ref{F-concave},
\begin{align}
\sfF\left(\rho'_{XB},\sigma''_{XB}\right)^2 & \geq \Tr(\tilde{\vph}_{XB})\cdot\sfF\left(\frac{\tilde{\vph}_{XB}}{\Tr(\tilde{\vph}_{XB})}, \sigma''_{XB}\right)^2 + \Tr(\tilde{\psi}_{XB})\cdot\sfF\left(\frac{\tilde{\psi}_{XB}}{\Tr(\tilde{\psi}_{XB})},\sigma''_{XB}\right)^2 \nonumber \\
& \geq \Tr(\tilde{\vph}_{XB})\cdot\sfF\left(\frac{\tilde{\vph}_{XB}}{\Tr(\tilde{\vph}_{XB})},\sigma''_{XB}\right)^2 \nonumber \\
& \geq \Tr(\tilde{\vph}_{XB})\cdot\sfF\left(\state{\vph}_{XBC},\state{\sigma''}_{XBC}\right)^2 \label{eq:F-sigma}
\end{align}
where in the last step $\ket{\vph}_{XBC}$ and $\ket{\sigma''}_{XBC}$ are arbitrary purifications of $\vph_{XB} = \tilde{\vph}_{XB}/\Tr(\tilde{\vph}_{XB})$ and $\sigma''_{XB}$, and we have used Fact \ref{fc:chan-l1} with the tracing out operation. Note that $\vph_{BC}$ is obtained from $\sigma''_{BC}$ by doing an operation $\clO_B$ only on $B$, which is akin to applying a measurement and conditioning on success. In particular this operation preserves purity of states. We let $\ket{\vph}_{XBC}$ be the state we get by applying $\clO_B$ on $\ket{\sigma''}_{XBC}$. Now let $\ket{\sigma''_1}_{B\tB}$ be the canonical purification of $\sigma''_B$ and $\ket{\vph_1}_{B\tB}$ be the state we get by applying $\clO_B$ on $\ket{\sigma''_1}_{B\tB}$. These are given by
\begin{align*}
\ket{\sigma''_1}_{B\tB} & = ((\sigma''_B)^{1/2}\otimes\Id)\sum_i\ket{i}_B\ket{i}_{\tB} = (\sigma_B^{1/2}\otimes\Id)\sum_i\ket{i}_B\ket{i}_{\tB} \\
\ket{\vph_1}_{B\tB} & = \frac{\rho_B^{1/2}U\sigma_B^{-1/2}\otimes\Id}{\Tr(\tilde{\vph}_{XB})^{1/2}}\ket{\sigma''}_{B\tB} = \frac{\rho_B^{1/2}U\Pi\otimes\Id}{\Tr(\tilde{\vph}_{XB})^{1/2}}\sum_i\ket{i}_B\ket{i}_{\tB}.
\end{align*}
Since $\ket{\sigma''}_{XBC}$ is also a purification of $\sigma''_B$, there exists an isometry $V$ acting only on $\tB$ such that $\Id_B\otimes V\ket{\sigma''}_{XBC} = \ket{\sigma''_1}_{B\tB}$. Hence,
\begin{align*}
\sfF\left(\state{\vph}_{XBC},\state{\sigma''}_{XBC}\right) & = \sfF\left(\clO_B(\state{\sigma''}_{XBC}), \state{\sigma''}_{XBC}\right) \\
& = \sfF\left(\Id_B\otimes V\left(\clO_B(\state{\sigma''}_{XBC})\right)\Id_B\otimes V^\dagger, \Id_B\otimes V\state{\sigma''}_{XBC}\Id_B\otimes V^\dagger\right) \\
& = \sfF\left(\clO_B\left(\Id_B\otimes V\state{\sigma''}_{XBC})\Id_B\otimes V^\dagger\right), \Id_B\otimes V\state{\sigma''}_{XBC}\Id_B\otimes V^\dagger\right) \\
& = \sfF\left(\state{\sigma''_1}_{B\tB},\state{\vph_1}_{B\tB}\right).
\end{align*}
Putting this in \eqref{eq:F-sigma} gives us
\begin{align*}
\sfF\left(\rho'_{XB},\sigma''_{XB}\right)^2 & \geq \left|\sum_i\sum_j\left(\bra{ii}(\Pi U\rho_B^{1/2}\otimes\Id)\right)\left((\sigma_B^{1/2}\otimes\Id)\ket{jj}\right)\right|^2 \\
 & = \left|\sum_i\matel{i}{\Pi U\rho_B^{1/2}\sigma_B^{1/2}}{i}\right|^2 \\
 & = \left|\Tr(\Pi U\rho_B^{1/2}\sigma_B^{1/2})\right|^2 \\
 & = \left|\Tr(U\rho_B^{1/2}\sigma_B^{1/2})\right|^2 = \sfF\left(\rho_B,\sigma_B\right)^2
\end{align*}
where we have used the fact that $\sigma_B^{1/2}\Pi=\sigma_B^{1/2}$, and the definition of $U$. Putting this in \eqref{eq:Imax-tri} we get,
\[ \Delta(\rho'_{XB},\sigma'_{XB}) \leq \Delta\left(\rho_B,\sigma_B\right) + \Delta(\sigma''_{XB}, \sigma'_{XB}) \leq \delta_1 + \eps + \delta_0. \qedhere \]
\end{proof}

\section{Proof of the direct product theorem}\label{sec:dpt-proof}
In this section, we prove Theorem \ref{thm:dpt}, whose statement is recalled below.
\dpt*
\subsection{Setup}
We consider an interactive quantum protocol $\clP$ for $n$ copies of $\sfV$ with player $j$ having input registers $X^j = X^j_1\ldots X^j_n$, and communicating $c^jn$ bits. The total communication of the protocol is $cn$, where $c = \sum_{j=1}^lc^j$. In the case $c \geq 1$, we shall also assume each $c^j \geq 1$; if some $c^j$ is smaller than 1, we can pad extra bits to it, and this increases total communication by a factor of at most $l$. Hence we have, $\sum_{j=1}^lc^j \leq cl$.

We define the following pure state
\[ \ket{\psi}_{X^1\tX^1\ldots X^l\tX^lE^1\ldots E^lA^1\ldots A^l} = \sum_{xy}\sqrt{\sfP_{X^1\ldots X^l}(x^1\ldots x^l)}\ket{x^1x^1\ldots x^lx^l}_{X^1\tX^1 \ldots X^l\tX^l}\ket{\psi}_{E^1\ldots E^lA^1\ldots A^l|x^1\ldots x^l}\]
where $\sfP_{X^1\ldots X^l}$ is the distribution $p^n$ on $(\clX^1\times\ldots\times\clX^l)^n$, and $\ket{\psi}_{E^1\ldots E^lA^1\ldots A^l|x^1\ldots x^l}$ being the state at the end of the protocol on inputs $x^1,\ldots,x^l$. In $\ket{\psi}_{E^1\ldots E^lA^1\ldots A^l|x^1\ldots x^l}$, $A^j=A^j_1\ldots A^j_n$ are the output registers of player $j$, and $E^j$ is some quantum register they have that they don't measure. We use $\sfP_{X^1\ldots X^lA^1\ldots A^l}$ to denote the distribution of $X^1\ldots X^lA^1\ldots A^l$ in $\ket{\psi}$. We shall use $X$ to denote $X^1\ldots X^l$, $X_i$ to denote $X^1_i\ldots X^l_i$, $X^{-j}$ to denote $X^1\ldots X^{j-1}X^{j+1}\ldots X^l$, and $X^{\leq j}$ to denote $X^1\ldots X^j$. Similar notation will be used for $\tX^j, E^j, A^j$. Also for a subset $C \subseteq [n]$, we shall use use $X_C$ to denote $(X_i)_{i\in C}$.

We shall show the following lemma, which can be applied inductively to get Theorem \ref{thm:dpt}.
\begin{lemma}\label{lm:induct}
For $i\in[k]$, let $T_i = \sfV(A^1_i\ldots A^l_i,X^1_i\ldots X^l_i)$ in $\clP$, and let $\clE$ denote the event $\prod_{i\in C}T_i=1$ for some $C\subseteq [n]$ such that $|C| \leq n/2$, 
\begin{enumerate}[(i)]
\item If $c < 1$,
\[ \bbE_{i\in\bar{C}}\Pr[T_i=1|\clE] \leq \omega^*(G(p,\sfV)) + \sqrt{2lc} + l\sqrt{2\delta},\]
\item If $1 \leq c < \frac{\zeta^2}{270l^3}\eff_{\eps+\zeta}(\sfV,p)$, and if $\delta < 1$, there exists an $i\in\bar{C}$ such that
\[ \Pr[T_i=1|\clE] \leq 1 - \eps,\]
\end{enumerate}
where
\[ \delta = \frac{|C|\log(|\clA^1|\cdot\ldots\cdot|\clA^l|) + \log(1/\Pr[\clE])}{n}.\]
\end{lemma}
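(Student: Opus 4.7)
The plan is to follow the information-theoretic framework of Raz and Holenstein adapted to the quantum setting, as sketched in Section \ref{sec:overview}. I would define the pure state $\ket{\vph}$ obtained from $\ket{\psi}$ by performing the (classical) measurement corresponding to the event $\clE$ on the input and output registers for coordinates in $C$ and renormalizing; this represents the state at the end of $\clP$ conditioned on $\clE$, with the remaining inputs still in superposition over the marginal of $p^n$. The goal is to locate a ``good'' coordinate $i\in\bar{C}$ at which the conditional state $\ket{\vph}_{x^1_i\ldots x^l_i}$ (obtained by further conditioning on the $i$-th inputs) can be approximately prepared from $\ket{\vph}$ via strictly local operations, either exactly (case (i), yielding a non-local game strategy) or with a controlled abort (case (ii), yielding a zero-communication protocol).

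For case (i) I would adapt the JPY14-style parallel repetition argument. By Lemma \ref{int-holevo}, for each player $j$ the unconditional state $\psi$ satisfies $\sfD_\infty(\psi_{X^j X^{-j}\tX^{-j}A^{-j}} \Vert \psi_{X^j}\otimes\rho_{X^{-j}\tX^{-j}A^{-j}}) \leq 2c^j n$ for some state $\rho$. Conditioning on $\clE$ adds at most $\log(1/\Pr[\clE])$ to this quantity by Fact \ref{fc:event-prob} and Fact \ref{fc:Sinfty-tri}, and Fact \ref{fc:quantum-gibbs} converts it into a bound of order $c^j n + \log(1/\Pr[\clE])$ on a suitable mutual information of $\vph$. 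Applying the chain rule of mutual information over $i\in\bar{C}$, together with Pinsker (Fact \ref{fc:pinsker}) and Uhlmann's theorem (Fact \ref{fc:uhlmann}), produces local unitaries $U^j_{x^j_i}$ for each party that take $\ket{\vph}$ to within $O(\sqrt{c^j+\delta})$ of $\ket{\vph}_{x^j_i}$ on average over $i\in\bar C$ and $x^j_i$. Combining the $l$ unitaries via the measure-and-record trick from JPY14, one shows that the strategy where party $j$ on input $x^j_i$ applies $U^j_{x^j_i}$ to its share of $\ket{\vph}$ and measures its $i$-th output register wins the game $G(p,\sfV)$ with probability at least $\bbE_{i\in\bar{C}}\Pr[T_i=1|\clE] - O(\sqrt{lc}+l\sqrt{\delta})$, which rearranges into the stated bound.

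For case (ii) the same information-theoretic input is used but with measurement operators in place of unitaries. Using the Quantum Substate Theorem (Fact \ref{fc:substate}) to pass to smoothed relative min-entropy and then Fact \ref{fc:substate-proj}, one extracts, at a good coordinate $i$, projectors $\Pi^j_{x^j_i}$ on party $j$'s registers that succeed on $\ket{\vph}$ with probability $2^{-O(c^j/\zeta^2)}$ and on success produce a state close to $\ket{\vph}_{x^j_i}$. The main obstacle, and the reason for the Substate Perturbation Lemma, is composing these across parties: after party $1$ applies $\Pi^1_{x^1_i}$, the post-measurement state $\ket{\rho}$ is only close to $\ket{\vph}_{x^1_i}$, and directly applying Fact \ref{fc:substate-proj} to it for party $2$ would need a substate bound with respect to the marginal $\rho_{A^{-1}}$, which we do not have. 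Lemma \ref{Imax-qu-close} is exactly what transfers the substate bound from $\vph_{A^{-1}}$ to the perturbed marginal $\rho_{A^{-1}}$ with only a mild loss, so party $2$ obtains projectors $\Pi'^2_{x^2_i}$ succeeding with probability $2^{-O(c^2/\zeta^2)}$ on $\ket{\rho}$ whose post-measurement state is close to $\ket{\vph}_{x^1_i x^2_i}$. Iterating this device across all $l$ players yields projectors with joint success probability $2^{-O(lc/\zeta^2)}$ on $\ket{\vph}$ and whose post-measurement state is close to $\ket{\vph}_{x^1_i\ldots x^l_i}$.

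The zero-communication protocol $\clP'$ is then the natural one: parties pre-share $\ket{\vph}$ as entanglement, on input $(x^1_i,\ldots,x^l_i)$ apply their respective projectors, abort if any projector fails, and otherwise measure the $i$-th output registers to produce outputs. Its joint output distribution conditioned on not aborting is close (in $\ell_1$) to $\sfP_{A_i X_i|\clE}$ at coordinate $i$, so its success probability conditioned on not aborting is within $\zeta$ of $\Pr[T_i=1|\clE]$. If the conclusion of case (ii) failed for every $i\in\bar{C}$, averaging over $i$ produces such a protocol with average correctness at least $1-\eps-\zeta$ and not-abort probability $2^{-O(lc/\zeta^2)}$, which must be upper bounded by $1/\eff_{\eps+\zeta}(\sfV,p)$ by definition of the average quantum partition bound. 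This forces $c \geq \Omega(\zeta^2 l^{-3}\log\eff_{\eps+\zeta}(\sfV,p))$, contradicting the hypothesis for a suitable absorbing constant (the $l^3$ arising from summing over the $l$ parties and iterating Lemma \ref{Imax-qu-close}, together with the $(n-|C|) \geq n/2$ factor from the chain-rule averaging over $\bar{C}$).
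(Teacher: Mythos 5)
Your proposal follows essentially the same route as the paper: define $\ket{\vph}$ by conditioning on $\clE$, use Lemma~\ref{int-holevo} together with the quantum Gibbs inequality and the chain rule to isolate a good coordinate, then in case~(i) obtain Uhlmann unitaries via Pinsker and combine them with the JPY14 measure-and-record argument to yield a game strategy, and in case~(ii) obtain measurement operators via the Quantum Substate Theorem and Fact~\ref{fc:substate-proj}, composing them across the $l$ parties with the Substate Perturbation Lemma (exactly the role of Lemma~\ref{lem:proj-induct}) and contradicting the definition of $\eff_{\eps+\zeta}(\sfV,p)$. The only things glossed over are bookkeeping details the paper handles explicitly (sharing $R=X_CA_C$ as auxiliary randomness/entanglement, acting the measurement on the purified input register $X^j_i$ rather than indexing by $x^j_i$, and the precise $(3l-2)\sqrt{2\zeta'}=\zeta$ parameter choice that yields the $l^3$ factor), but you correctly identify where each of the key lemmas enters.
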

In order to get the statement of case (i) of Theorem \ref{thm:dpt} from case (i) of Lemma \ref{lm:induct}, we start with $C=\emptyset$, and find some $i\in[n]$ such that $\Pr[T_i=1] \leq 1-\nu + \sqrt{2lc} + \frac{\nu}{2}$. As long as $l\sqrt{2\delta}$ is at most $\frac{\nu}{2}$ we can do this. When we have built up a non-empty set $C$ this way, if either $|C| = \Omega\left(\frac{\nu^2n}{l^2\log(|\clA^1|\cdot\ldots\cdot|\clA^l|)}\right)$, or $\Pr[\Pi_{i\in C}T_i=1] \leq \exp\left(-\Omega\left(\frac{\nu^2n}{l^2\log(|\clA^1|\cdot\ldots\cdot|\clA^l|)}\right)\right)$, we are already done. Otherwise, $l\sqrt{2\delta} < \frac{\nu}{2}$, and we can continue the process.

The bound on $\Pr[T_i=1|\clE]$ in case (ii) of Lemma \ref{lm:induct} does not depend on $\delta$, but it requires $\delta<1$ as a precondition. Hence following the same process there, we can go up to $C$ of size $|C| = \Theta\left(\frac{n}{\log(|\clA^1|\cdot\ldots\cdot|\clA^l|)}\right)$, or $\Pr[\Pi_{i\in C}T_i=1] = \exp\left(-\frac{n}{\log(|\clA^1|\cdot\ldots\cdot|\clA^l|)}\right)$.

Since in case (i) Lemma \ref{lm:induct} gives us a bound on $\bbE_{i\in \bar{C}}\Pr[T_i=1|\clE]$ rather than showing just that there exists an $i$ for which $\Pr[T_i=1|\clE]$ is bounded, we can use it to show the following corollary, which we shall later use in our DIQKD application. See Appendix C of \cite{JMS17} for a proof of how this follows from the lemma.
\begin{cor}\label{cor:rand-V}
Let $\sfV^{t/n}_{\mathrm{rand}}$ be the randomized predicate which is satisfied if $\sfV$ is satisfied on a random subset of size $t$ of $[n]$. If the communication cost of $\clP$ is $cn<n$, then\footnote{Note that $\suc(p^n,\sfV^{t/n}_\mathrm{rand},\clP)$ accounts for the randomness inherent in $\sfV^{t/n}_\mathrm{rand}$ in addition to $p^n$ and the protocol.}
\[ \suc(p^n,\sfV^{t/n}_{\mathrm{rand}},\clP) \leq \left(\omega^*(G(p,\sfV)) + O\left(\sqrt{lc} + l\sqrt{\frac{t\cdot\log(|\clA^1|\cdot\ldots\cdot|\clA^l|)}{n}}\right)\right)^t.\]
\end{cor}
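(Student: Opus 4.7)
The plan is to prove Corollary~\ref{cor:rand-V} by inductively applying case~(i) of Lemma~\ref{lm:induct} along a sequential realization of the random size-$t$ subset. Sample $S$ as an ordered tuple $(i_1, \ldots, i_t)$ without replacement and, for each $k \leq t$, set
\[
q_k := \Pr\bigl[T_{i_1} = \cdots = T_{i_k} = 1\bigr], \qquad p_C := \Pr\nolimits_{\clP}\bigl[\clE_C\bigr],
\]
where $\clE_C = \{T_j = 1 \;\forall j \in C\}$ and the probability in $q_k$ is over both $\clP$ and $S$. Then $q_0 = 1$ and $\suc(p^n, \sfV^{t/n}_{\mathrm{rand}}, \clP) = q_t$. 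A short counting identity (each size-$k$ subset arises $k$ times from adjoining an element to a size-$(k-1)$ subset) gives
\[
q_k = \bbE_{|C|=k-1}\bigl[p_C \cdot \bbE_{i \notin C}\Pr[T_i = 1 \mid \clE_C]\bigr].
\]
Assuming $t \leq n/2 + 1$ (else the corollary is vacuous as its right-hand side is at most $1$), Lemma~\ref{lm:induct}(i) applied to each fixed $C$ bounds the inner expectation by $\omega^* + 4\sqrt{lc} + \tfrac{7l+1}{2}\sqrt{2\delta_C}$, where $\omega^* := \omega^*(G(p,\sfV))$, $M := \log(|\clA^1|\cdots|\clA^l|)$, and $\delta_C = ((k-1)M + \log(1/p_C))/n$.

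To handle the $\log(1/p_C)$ term I would partition the $(k-1)$-subsets by a threshold $\gamma > 0$, calling $C$ \emph{bad} if $p_C < \gamma$. On bad $C$ I use only the trivial bound $\Pr[T_i = 1 \mid \clE_C] \leq 1$, whose contribution to $q_k$ is at most $\sum_{C \text{ bad}}\Pr[C]\,p_C \leq \gamma$. On good $C$ one has $\log(1/p_C) \leq \log(1/\gamma)$, so using $\sqrt{a+b} \leq \sqrt{a} + \sqrt{b}$ gives $\sqrt{2\delta_C} \leq \sqrt{2(k-1)M/n} + \sqrt{2\log(1/\gamma)/n}$. Setting $\gamma = \beta^t/2$ with $\beta := \omega^* + O(\sqrt{lc} + l\sqrt{tM/n})$ the target decay rate gives $\log(1/\gamma) = t\log(1/\beta) + 1 = O(t)$ in the nontrivial regime where $\beta$ is bounded away from $0$, so that $\sqrt{\delta_C} = O(\sqrt{tM/n})$ uniformly over good $C$. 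Combining the good and bad contributions yields the clean recursion
\[
q_k \;\leq\; \beta \cdot q_{k-1} + \beta^t/2.
\]

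Finally, iterating $t$ times gives $q_t \leq \beta^t + (\beta^t/2)\sum_{k=0}^{t-1}\beta^k \leq \beta^t\bigl(1 + 1/(2(1-\beta))\bigr)$, a constant multiple of $\beta^t$ whenever $\beta$ is bounded away from $1$. This extra constant is absorbed into a slightly enlarged decay rate: since $(C_0\beta^t)^{1/t} = \beta(1 + O(\log C_0/t))$, replacing $\beta$ by $\beta + O(1/t)$ stays within the claimed $\omega^* + O(\sqrt{lc} + l\sqrt{tM/n})$ band in the parameter regimes of interest, yielding $q_t \leq \bigl(\omega^* + O(\sqrt{lc} + l\sqrt{tM/n})\bigr)^t$. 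The main obstacle is calibrating $\gamma$: it must be small enough that the additive $\gamma$ does not erode the exponential decay, yet large enough that the extra $\log(1/\gamma)$ inside $\delta_C$ still fits within the $l\sqrt{tM/n}$ error budget --- setting $\gamma$ proportional to $\beta^t$ resolves both simultaneously by exploiting $\log(1/\beta) = O(1)$.
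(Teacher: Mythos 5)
Your overall structure is the standard one the paper is pointing to via \cite{JMS17}: realize the random subset sequentially, use the counting identity
\[
q_k \;=\; \bbE_{|C|=k-1}\bigl[p_C\cdot \bbE_{i\notin C}\Pr[T_i=1\mid\clE_C]\bigr],
\]
threshold on $p_C$ so that Lemma~\ref{lm:induct}(i) can be applied with a controlled $\delta_C$, and iterate. The counting identity, the bound $\bbE_{|C|=k-1}[p_C]=q_{k-1}$, the bad-set contribution $\leq\gamma$, and the vacuity reduction to $t\leq n/2+1$ are all fine. This is the right approach.

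The gap is in the last step, absorbing the constant $C_0 = 1+\tfrac{1}{2(1-\beta)}$ into the base of the exponential. You get $q_t\leq C_0\beta^t$ and then argue that $(C_0\beta^t)^{1/t}=\beta\bigl(1+O(1/t)\bigr)$ is still of the claimed form. But $O(1/t)$ is \emph{not} in general $O(\sqrt{lc}+l\sqrt{tM/n})$: for example with $c=0$ and $t$ fixed while $n\to\infty$ the error budget $l\sqrt{tM/n}\to 0$ while $1/t$ stays constant, so the absorption overshoots. (Your own ``parameter regimes of interest'' caveat flags this; in the DIQKD application $t=\alpha n$ so $1/t\ll l\sqrt{tM/n}$ and the step goes through, but the corollary as stated is unconditional.) There is also a second implicit assumption hiding in ``$\log(1/\gamma)=O(t)$'': this needs $\log(1/\beta)=O(M)$, which is not automatic for every game. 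Both issues disappear if instead of a \emph{static} threshold $\gamma=\beta^t/2$ you either (a) argue by contradiction — assume $q_t>\beta^t$, use monotonicity of $q_k$ to get $q_{k-1}>\beta^t$ for all $k$, and threshold adaptively at $p_C<\eta q_{k-1}$ with $\eta$ chosen a constant fraction of the error budget, which gives the multiplicative recursion $q_k\leq(\beta_0+\eta)q_{k-1}$ with no leftover additive term; or (b) handle the $t=O(1)$ and $t=\Omega(1)$ ranges separately, noting that for small $t$ the $\gamma$-threshold is unnecessary because Lemma~\ref{lm:induct}(i) applied to $C=\emptyset$ and the early iterates already give the bound directly. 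As written, your proof proves a slightly weaker statement ($q_t\leq C_0\beta^t$ with an unabsorbed universal constant $C_0$) rather than the corollary itself.
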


\subsection{Proof of Lemma \ref{lm:induct}}
We define the following state which is $\ket{\psi}$ conditioned on success event $\clE$ in $C$:
\[ \ket{\vph}_{X\tX EA} = \frac{1}{\sqrt{\gamma}}\sum_{x_Cx_{\bar{C}}}\sqrt{\sfP_{X}(x_Cx_{\bar{C}})}\ket{x_Cx_{\bar{C}}x_Cx_{\bar{C}}}_{X\tX}\otimes\sum_{a_C:V^{|C|}(a_C,x_C)=1}\ket{a_C}_{A_C}\ket{\tilde{\vph}}_{EA_{\bar{C}}|x_Cx_{\bar{C}}a_C}\]
where $\ket{\tilde{\vph}}_{EA_{\bar{C}}|x_Cx_{\bar{C}}a_C}$ is a subnormalized state satisfying $\Vert\ket{\tilde{\vph}}_{EA_{\bar{C}}|x_Cx_{\bar{C}}a_C}\Vert_2^2 = \sfP_{A_C|x_Cx_{\bar{C}}}(a_C)$, and $\gamma = \Pr[\clE]$.

We shall use the following lemma, whose proof we give later.
\begin{lemma}\label{lem:i-conds}
Letting $R=X_CA_C$, the following conditions hold:
\begin{enumerate}[(i)]
\item If $c < 1$, for every $i\in\bar{C}$ and $j\in[l]$, there exist isometries $U^j_{i}$ taking registers $X^j_{\bar{C}}\tX^j_{\bar{C}}E^jA^j_{\bar{C}}$ to $\tX^j_{\bar{C}}E^jA^j_{\bar{C}}$ such that
\begin{align*}
& \bbE_{i\in\bar{C}}\bigg\Vert \Big(\bigotimes_{j\in[l]}U^j_{i}\Big)\left(\state{\psi}_{X'_iX_i}\otimes\state{\vph}_{X_{\bar{C}}\tX_{\bar{C}}EA_{\bar{C}}R}\right)\Big(\bigotimes_{j\in[l]}(U^j_{i})^\dagger\Big) - \state{\vph}_{X_{\bar{C}}\tX_{\bar{C}}EA_{\bar{C}}R}\bigg\Vert_1 \\
& \leq 2\sqrt{2lc} + 2l\sqrt{2\delta};
\end{align*}
\label{item:unit-i-close}
\item In case (ii): $1 \leq c < \frac{\zeta^2}{270l^3}\eff_{\eps+\zeta}(\sfV,p)$ and $\delta < 1$, there exists an $i\in\bar{C}$ such that for every $j\in[l]$, there exist measurement operators $M^j_i$ taking registers $X^j_i\tX^j_{\bar{C}}E^jA^j_{\bar{C}}$ to $\tX^j_{\bar{C}}E^jA^j_{\bar{C}}$ (with $M^j_i(M^j_i)^\dagger$ being the POVM element), such that each $\bigotimes_{j\in[l]} M^j_i$ succeeds on $\ket{\psi}_{X'_iX_i}\otimes\ket{\vph}_{\tX_{\bar{C}}EA_{\bar{C}}R}$ with probability $\alpha_i \geq 2^{-\frac{270l^3c}{\zeta^2}}$, and
\[
\bigg\Vert \frac{1}{\alpha_i}\Big(\bigotimes_{j\in[l]}M^j_i\Big)\left(\state{\psi}_{X'_iX_i}\otimes\state{\vph}_{\tX_{\bar{C}}EA_{\bar{C}}R}\right)\Big(\bigotimes_{j\in[l]}(M^j_i)^\dagger\Big) - \state{\vph}_{X'_i\tX_{\bar{C}}EA_{\bar{C}}R}\bigg\Vert_1 \leq 2\zeta
\]
where $\ket{\psi}_{X'_iX_i} = \sum_{x_i}\sqrt{\sfP_{X_i}(x_i)}\ket{x_ix_i}_{X'_iX_i}$, $\ket{\vph}_{X'_i\tX_{\bar{C}}EA_{\bar{C}}R}$ is the same state as $\ket{\vph}_{X'_i\tX_{\bar{C}}EA_{\bar{C}}R}$ with the $X_i$ register replaced by the $X'_i$ register. 
\label{state-i-close}
\end{enumerate}
\end{lemma}

In order to prove Lemma \ref{lm:induct} assuming Lemma \ref{lem:i-conds}, we shall view the inputs to a strategy for $G(p,\sfV)$ or a zero-communication protocol for $\sfV$ as being quantum, instead of coming from a classical distribution. If the players receive inputs from a distribution $\sfP_Y = \sfP_{Y^1\ldots Y^l}$, we can think of them as receiving registers $Y^1, \ldots, Y^l$ respectively of a pure state
\[ \ket{\sigma}_{Y'Y} = \sum_y\sqrt{\sfP_Y(y)}\ket{yy}_{Y'Y}\]
with say a referee holding the $Y'$ registers (which the players cannot touch). The players hold a shared entangled state $\ket{\rho}_{EA} = \ket{\rho}_{E^1\ldots E'^lA^1\ldots A^l}$, with player $j$ holding $E^jA^j$, $A^l$ being the answer register. Player $j$ now applies some isometry or measurement on registers $Y^jE^jA^j$ to determine their output. Strictly speaking, this isometry or measurement should only use $Y^j$ as a control register, since it is classical. But player $j$ can always copy over $Y^j$ to a different register $\tY^j$ and apply a general isometry or measurement on $\tY^jE^jA^j$ --- the effect of this will be the same as applying a general isometry or measurement on $Y^jE^jA^j$ that does not use $Y^j$ as a control register. So we shall assume that player $j$ can in fact apply a general isometry or measurement on $Y^jE^jA^j$. Note that in this case we are talking we are talking about isometries instead of unitaries because these need not preserve the $Y^j$ (or other) registers.

For zero-communication protocol with aborts, we shall also assume the players first apply a measurement to decide whether they will abort or not abort, and conditioned on not aborting, do another measurement to give outputs in $\clA^1\times\ldots\times\clA^l$ (in general they can do a single measurement to decide their output, which may be abort, or some element of $\clA^j$, but the protocol $\clP'$ we describe will have two measurements). In fact they do not need to actually do this last measurement in order for us to determine the average success probability: we can assume that the state conditioned on not aborting already has the correlations they want between the registers $Y'$ and $A$ (the $Y^j$ registers may have been modified by the measurement), and the average success probability is determined by computing $\sfV$ on $Y'A$ of the state conditioned on not aborting. That is, suppose the measurement operator corresponding to not abort for player $j$ is $M^j$. Then the average probability of not aborting in the protocol is the success probability $\alpha$ of $\bigotimes_{j\in[l]}M^j$ on $\ket{\sigma}_{Y'Y}\otimes\ket{\rho}_{EA}$. And the average success probability of the protocol conditioned on not aborting is determined by computing $\sfV$ on the $Y'A$ registers of $\frac{1}{\sqrt{\alpha}}\left(\bigotimes_{j\in[l]}M^j\right)\ket{\sigma}_{Y'Y}\otimes\ket{\rho}_{EA}$. Similarly, for games, the average success probability will be determined by computing $\sfV$ on the $Y'A$ registers of $\bigotimes_{j\in[l]}U^j\ket{\sigma}_{Y'Y}\otimes\ket{\rho}_{EA}$, where $U^j$ is the isometry applied by the $j$-th player.

\subsubsection{Case (i): $c < 1$}
In this case, we shall use the isometries given by Lemma \ref{lem:i-conds} to give a quantum strategy $\clS$ for $G(p,\sfV)$, whose winning probability is at least
\[ \bbE_{i\in\bar{C}}\Pr[T_i=1|\clE] - \sqrt{2lc} - l\sqrt{2\delta}.\]
By the definition of $\omega^*(G(p,\sfV))$, $\clS$ cannot have success probability more than $\omega^*(G(p,\sfV))$. This gives the required upper bound on $\bbE_{i\in\bar{C}}\Pr[T_i=1|\clE]$.

The strategy $\clS$ is as follows:
\begin{itemize}
\item The players share $\log|\bar{C}|$ uniformly random bits as randomness (which we shall treat as classical), and $\ket{\vph}_{\tX_{\bar{C}}EA_{\bar{C}}R}$ as shared entanglement, with player $j$ holding the registers $\tX^j_{\bar{C}}E^jA^j_{\bar{C}}$ (the extra $R$ register can go to any player, say the first, but they won't need to do anything on it).
\item The players receive inputs as the $X^j_i$ register of $\ket{\psi}_{X'_iX_i}$ (note that the distribution in this state is the correct one, $\mu$).
\item The players use their shared randomness to sample a uniformly random $i\in\bar{C}$.
\item Player $j$ applies the isometry $U^j_i$ on the registers $X^j_i\tX^j_{\bar{C}}E^jA^j_{\bar{C}}$ according to the sampled $i$.
\item Player $j$ provides $A^j_i$ of the resulting state as their answer register.
\end{itemize}
The state obtained by the players at the end of this strategy is obviously
\[ \bbE_{i\in\bar{C}}\Big(\bigotimes_{j\in[l]}U^j_{i}\Big)\left(\state{\psi}_{X'_iX_i}\otimes\state{\vph}_{X_{\bar{C}}\tX_{\bar{C}}EA_{\bar{C}}R}\right)\Big(\bigotimes_{j\in[l]}(U^j_{i})^\dagger\Big).\]
Note that if $i$ is sampled uniformly in $\bar{C}$, and $\sfV$ is computed on the $X'_iA_i$ registers of $\ket{\vph}_{X'_i\tX_{\bar{C}}EA_{\bar{C}}R}$, then the average success probability is $\bbE_{i\in\bar{C}}\Pr[T_i=1|\clE]$. By condition (ii) of Lemma \ref{lem:i-conds}, then the average success probability of $\clS$ is as claimed.

\subsubsection{Case (ii): $c \geq 1$}
In this case, we shall use the measurement operators given by Lemma \ref{lem:i-conds} to give a zero-communication protocol $\clP'$ for $\sfV$ whose average probability of not aborting is at least $2^{-\frac{270l^3 c}{\zeta^2}} > 1/\eff_{\eps+\zeta}(\sfV,p)$ (by the condition on $cn$), and conditioned on not aborting, is correct with probability at least
\[ \Pr[T_i=1|\clE] - \zeta \]
(with the $i$ provided by condition (ii)) averaged on inputs from $p$. By the definition of $\eff_{\eps+\zeta}(\sfV,p)$, $\clP'$ cannot be correct conditioned on not aborting with probability more than $1-(\eps+\zeta)$ when inputs come from $p$. This gives the required upper bound on $\Pr[T_i=1|\clE]$.

The protocol $\clP'$ is as follows:
\begin{itemize}
\item The players share $\ket{\vph}_{\tX_{\bar{C}}EA_{\bar{C}}R}$ as shared entanglement, with player $j$ holding the registers $\tX^j_{\bar{C}}E^jA^j_{\bar{C}}$ (the extra $R$ register can go to any player, say the first, but they won't need to do anything on it).
\item The players receive inputs as the $X^j_i$ register of $\ket{\psi}_{X'_iX_i}$ (note that the distribution in this state is the correct one, $p$).
\item Player $j$ applies measurements $\{M^j_i(M^j_i)^\dagger,\Id-M^j_i(M^j_i)^\dagger\}$ on the registers $X^j_i\tX^j_{\bar{C}}E^jA^j_{\bar{C}}$ and declares not abort if the $M^j_i$ measurement succeeds.
\item Conditioned on not aborting, player $j$ provides $A^j_i$ as their answer register.
\end{itemize}
By our description above, and condition \ref{state-i-close}, the average probability of not aborting in this protocol is $\alpha_i \geq 2^{-\frac{270l^3c}{\zeta^2}} > \frac{1}{\eff_{\eps+\zeta}(\sfV,p)}$ by the condition on $c$. Now note that if $\sfV$ is computed in the $X'_iA_i$ register of $\ket{\vph}_{X'_i\tX_{\bar{C}}EA_{\bar{C}}}$, the average success probability is by definition $\Pr[T_i=1|\clE]$. Since by condition \ref{state-i-close},
\[ \bigg\Vert \frac{1}{\alpha_i}\Big(\bigotimes_{j\in[l]}M^j_i\Big)\left(\state{\psi}_{X'_iX_i}\otimes\state{\vph}_{\tX_{\bar{C}}EA_{\bar{C}}R}\right)\Big(\bigotimes_{j\in[l]}(M^j_i)^\dagger\Big) - \state{\vph}_{X'_i\tX_{\bar{C}}EA_{\bar{C}}R}\bigg\Vert_1 \leq 2\zeta
\]
the average success probability on $\frac{1}{\sqrt{\alpha_i}}\Big(\bigotimes_{j\in[l]}M^j_i\Big)\ket{\psi}_{X'_iX_i}\otimes\ket{\vph}_{\tX_{\bar{C}}EA_{\bar{C}}R}$, that is, the average success probability of $\clP'$ conditioned on not aborting, is at least $\Pr[T_i=1|\clE] - \zeta$.

\subsection{Proof of Lemma \ref{lem:i-conds}}
The first part of the proof goes the same way for both cases (i) and (ii). We shall proceed with a common proof and then diverge when required.

Since player $j$'s communication in $\clP$ is $c^jn$ bits, by Lemma \ref{int-holevo} for the final state $\ket{\psi}$ of $\clP$, there exists a state $\rho^j_{X^{-j}\tX^{-j}E^{-j}A^{-j}}$ such that
\[ \sfD_\infty\left(\psi_{X^jX^{-j}\tX^{-j} E^{-j}A^{-j}}\middle\Vert\psi_{X^j}\otimes\rho^j_{X^{-j}\tX^{-j}E^{-j}A^{-j}}\right) \leq 2c^j n.\]
Using Facts \ref{fc:event-prob} and \ref{fc:Sinfty-tri}, this gives us
\begin{align}
& \bbE_{\sfP_{R|\clE}}\sfD\left(\vph_{X^j_{\bar{C}}X^{-j}_{\bar{C}}\tX^{-j}_{\bar{C}} E^{-j}A^{-j}_{\bar{C}}|r}\middle\Vert\psi_{X^j_{\bar{C}}}\otimes\rho^j_{X^{-j}_{\bar{C}}\tX^{-j}_{\bar{C}}E^{-j}A^{-j}_{\bar{C}}}\right) \nonumber \\
& = \bbE_{\sfP_{X_CA_C|\clE}}\sfD\left(\vph_{X^j_{\bar{C}}X^{-j}_{\bar{C}}\tX^{-j}_{\bar{C}} E^{-j}A^{-j}_{\bar{C}}|x_Ca_C}\middle\Vert\psi_{X^j_{\bar{C}}}\otimes\rho^j_{X^{-j}_{\bar{C}}\tX^{-j}_{\bar{C}}E^{-j}A^{-j}_{\bar{C}}}\right) \nonumber \\
& \leq \bbE_{\sfP_{A_C|\clE}}\sfD\left(\vph_{X^jX^{-j}\tX^{-j} E^{-j}A^{-j}|a_C}\middle\Vert\psi_{X^j}\otimes\rho^j_{X^{-j}\tX^{-j}E^{-j}A^{-j}}\right) \nonumber \\
& \leq \bbE_{\sfP_{A_C|\clE}}\sfD_\infty\left(\vph_{X^jX^{-j}\tX^{-j} E^{-j}A^{-j}|a_C}\middle\Vert\psi_{X^j}\otimes\rho^j_{X^{-j}\tX^{-j}E^{-j}A^{-j}}\right) \nonumber \\
& \leq \bbE_{\sfP_{X_CA_C|\clE}}\left[\sfD_\infty\left(\vph_{X^jX^{-j}\tX^{-j} E^{-j}A^{-j}|a_C}\middle\Vert\vph_{X^jX^{-j}\tX^{-j} E^{-j}A^{-j}}\right)\right. \nonumber \\
& \quad \left. + \sfD_\infty\left(\vph_{X^jX^{-j}\tX^{-j} E^{-j}A^{-j}}\middle\Vert\psi_{X^jX^{-j}\tX^{-j} E^{-j}A^{-j}}\right) \right. \nonumber \\
& \quad \left. + \sfD_\infty\left(\psi_{X^jX^{-j}\tX^{-j} E^{-j}A^{-j}}\middle\Vert\psi_{X^j}\otimes\rho^j_{X^{-j}\tX^{-j}E^{-j}A^{-j}}\right) \right] \nonumber \\
& \leq \bbE_{\sfP_{X_CA_C|\clE}}\left[\log(1/\sfP_{A_C|\clE}(a_C)) + \log(1/\Pr[\clE]) + 2c^jn\right] \nonumber \\
& \leq \bbE_{\sfP_{X_CA_C|\clE}}\left[|C|\cdot\log\left(|\clA^1|\cdot\ldots\cdot|\clA^l|\right) + \log(1/\Pr[\clE]) + 2c^jn\right] \nonumber \\
& = (\delta + 2c^j)n. \label{eq:X-Dinfty-1}
\end{align}

Similarly we also have,
\begin{equation}\label{eq:phi-XR}
\sfD\left(\vph_{X_{\bar{C}}R}\middle\Vert\psi_{X_{\bar{C}}}\otimes\vph_R\right) = \bbE_{\sfP_{R|\clE}}\sfD\left(\vph_{X_{\bar{C}}|r}\middle\Vert\psi_{X_{\bar{C}}}\right) \leq \delta n.
\end{equation}
Using the Quantum Gibb's inequality on \eqref{eq:X-Dinfty-1} we have,
\begin{align*}
\bbE_{\sfP_{X^j_{\bar{C}}R|\clE}}\sfD\left(\vph_{X^{-j}_{\bar{C}}\tX^{-j}_{\bar{C}} E^{-j}A^{-j}_{\bar{C}}|x^j_{\bar{C}}r}\middle\Vert\vph_{X^{-j}_{\bar{C}}\tX^{-j}_{\bar{C}}E^{-j}A^{-j}_{\bar{C}}|r}\right) & = \bbE_{\sfP_{R|\clE}}\sfD\left(\vph_{X^j_{\bar{C}}X^{-j}_{\bar{C}}\tX^{-j}_{\bar{C}} E^{-j}A^{-j}_{\bar{C}}|r}\middle\Vert\vph_{X^j_{\bar{C}}}\otimes\vph_{X^{-j}_{\bar{C}}\tX^{-j}_{\bar{C}}E^{-j}A^{-j}_{\bar{C}}|r}\right) \\
& \leq \bbE_{\sfP_{R|\clE}}\sfD\left(\vph_{X^j_{\bar{C}}X^{-j}_{\bar{C}}\tX^{-j}_{\bar{C}} E^{-j}A^{-j}_{\bar{C}}|r}\middle\Vert\psi_{X^j_{\bar{C}}}\otimes\vph_{X^{-j}_{\bar{C}}\tX^{-j}_{\bar{C}}E^{-j}A^{-j}_{\bar{C}}|r}\right) \\
& \leq (2c^j+\delta)n.
\end{align*}
Hence by the chain rule of relative entropy on this and \eqref{eq:phi-XR},
\[ \sfD\left(\vph_{X^j_{\bar{C}}X^{-j}_{\bar{C}}\tX^{-j}_{\bar{C}} E^{-j}A^{-j}_{\bar{C}}R}\middle\Vert\psi_{X^j_{\bar{C}}}\otimes\vph_{X^{-j}_{\bar{C}}\tX^{-j}_{\bar{C}}E^{-j}A^{-j}_{\bar{C}}R}\right)\leq 2(c^j+\delta)n.\]
Using $\bar{C}_{<i}$ to denote the set of coordinates if $\bar{C}$ less than $i$, we have by the chain rule of relative entropy again,
\begin{align}
4(c^j+\delta) & \geq \bbE_{i\in\bar{C}}\bbE_{\sfP_{X^j_{\bar{C}_{<i}}}}\sfD\left(\vph_{X^j_iX^{-j}_{\bar{C}}\tX^{-j}_{\bar{C}} E^{-j}A^{-j}_{\bar{C}}R|x_{\bar{C}_{<i}}}\middle\Vert\psi_{X^j_i}\otimes\vph_{X^{-j}_{\bar{C}}\tX^{-j}_{\bar{C}}E^{-j}A^{-j}_{\bar{C}}R}\right) \nonumber \\
 & \geq \bbE_{i\in\bar{C}}\sfD\left(\vph_{X^j_iX^{-j}_{\bar{C}}\tX^{-j}_{\bar{C}} E^{-j}A^{-j}_{\bar{C}}R}\middle\Vert\psi_{X^j_i}\otimes\vph_{X^{-j}_{\bar{C}}\tX^{-j}_{\bar{C}}E^{-j}A^{-j}_{\bar{C}}R}\right) \label{eq:XR-Dinfty}
\end{align}
where we have used Fact \ref{fc:cond-dec}.

\subsubsection{Case (i): $c < 1$}
We shall apply Pinsker's inequality (Fact \ref{fc:pinsker}) on \eqref{eq:XR-Dinfty} for all $j\in[l]$, after tracing out $X^{-j}_{\bar{C}}$. After also applying Jensen's inequality with respect to the expectation over $i$, this gives us
\[ \bbE_{i\in\bar{C}}\sfB\left(\vph_{X^j_i\tX^{-j}_{\bar{C}}E^{-j}A^{-j}_{\bar{C}}R}, \psi_{X^j_i}\otimes\vph_{\tX^{-j}_{\bar{C}}E^{-j}A^{-j}_{\bar{C}}R}\right) \leq 2\sqrt{c^j+\delta}.\]
Now since $X'^j_i$ as used in $\ket{\psi}_{X'^j_iX^j_i}$ and $\ket{\vph}_{X'^j_i\tX_{\bar{C}}EA_{\bar{C}}R}$ in Lemma \ref{lem:i-conds}, is identical to $X^j_i$, we also have,
\begin{equation}\label{eq:X-pinsker}
\bbE_{i\in\bar{C}}\sfB\left(\vph_{X'^j_i\tX^{-j}_{\bar{C}}E^{-j}A^{-j}_{\bar{C}}R}, \psi_{X'^j_i}\otimes\vph_{\tX^{-j}_{\bar{C}}E^{-j}A^{-j}_{\bar{C}}R}\right) \leq 2\sqrt{c^j+\delta}.
\end{equation}
Note that $\ket{\vph}_{X'^j_i\tX_{\bar{C}}EA_{\bar{C}}R}$ is a purification of the state in the first argument of the Bures distance in \eqref{eq:X-pinsker}, and $\ket{\psi}_{X'^j_iX^j_i}\otimes\ket{\vph}_{\tX_{\bar{C}}EA_{\bar{C}}R}$ is a purification of the state in the second argument (since $\sfP_{X^1_i\ldots X^l_i}$ is a product distribution, each $\psi_{X'^j_iX^j_i}$ is pure). Therefore, by Uhlmann's theorem, there exist isometries $\{U^j_i\}_i$ taking registers $X^j_i\tX^j_{\bar{C}}E^jA^j_{\bar{C}}$ to $\tX^j_{\bar{C}}E^jA^j_{\bar{C}}$ such that
\[ \bbE_{i\in\bar{C}}\sfB\left(U^j_i\otimes\Id\left(\state{\psi}_{X'^j_iX^j_i}\otimes\state{\vph}_{\tX_{\bar{C}}EA_{\bar{C}}R}\right)(U^j_i)^\dagger\otimes\Id, \state{\vph}_{X'^j_i\tX_{\bar{C}}EA_{\bar{C}}R}\right) \leq 2\sqrt{2c^j+\delta}.\]
Since $U^j_i$ do not act on the $X^{-j}_i$ registers, we in fact have,
\begin{align}
& \bbE_{i\in\bar{C}}\sfB\left(U^j_i\otimes\Id\left(\state{\psi}_{X'^{\geq j}_iX^{\geq j}_i}\otimes\state{\vph}_{X'^{<j}_i\tX_{\bar{C}}EA_{\bar{C}}R}\right)(U^j_i)^\dagger\otimes\Id, \state{\psi}_{X^{>j}_iX^{>j}_i}\otimes\state{\vph}_{X'^{\leq j}_i\tX_{\bar{C}}EA_{\bar{C}}R}\right) \nonumber \\
& \leq 2\sqrt{2c^j+\delta}. \label{eq:X-pinsker-2}
\end{align}
Now we need to find the action of $\bigotimes_{j\in[l]}U^j_i$ on $\ket{\psi}_{X'_iX_i}\otimes\ket{\vph}_{\tX_{\bar{C}}EA_{\bar{C}}R}$. By the triangle inequality, we see that
\begin{align*}
& \bbE_{i\in\bar{C}}\sfB\Bigg(\Big(\bigotimes_{j\in[l]}U^j_{i}\Big)\left(\state{\psi}_{X'_iX_i}\otimes\state{\vph}_{X_{\bar{C}}\tX_{\bar{C}}EA_{\bar{C}}R}\right)\Big(\bigotimes_{j\in[l]}(U^j_{i})^\dagger\Big), \state{\vph}_{X_{\bar{C}}\tX_{\bar{C}}EA_{\bar{C}}R}\Bigg) \\
& \leq \bbE_{i\in\bar{C}}\sum_{k=1}^l\sfB\Bigg(\Big(\bigotimes_{j>k}U^j_i\Big)\Big(U^k_i\otimes\Id\left(\state{\psi}_{X'^{\geq k}_iX^{\geq k}_i}\otimes\state{\vph}_{X'^{>k}_i\tX_{\bar{C}}EA_{\bar{C}}R}\right)(U^k_i)^\dagger\otimes\Id\Big)\Big(\bigotimes_{j>k}(U^j_i)^\dagger\Big), \\
& \qquad \qquad \quad \Big(\bigotimes_{j>k}U^j_i\Big)\left(\state{\psi}_{X^{>k}_iX^{>k}_i}\otimes\state{\vph}_{X'^{\leq k}_i\tX_{\bar{C}}EA_{\bar{C}}R}\right) \Big(\bigotimes_{j>k}(U^j_i)^\dagger\Big)\Bigg) \\
& = \sum_{k=1}^l\bbE_{i\in\bar{C}}\sfB\left(U^k_i\otimes\Id\left(\state{\psi}_{X'^{\geq k}_iX^{\geq k}_i}\otimes\state{\vph}_{X'^{<k}_i\tX_{\bar{C}}EA_{\bar{C}}R}\right)(U^k_i)^\dagger\otimes\Id, \right. \\
& \qquad \qquad \quad \left. \state{\psi}_{X^{>k}_iX^{>k}_i}\otimes\state{\vph}_{X'^{\leq k}_i\tX_{\bar{C}}EA_{\bar{C}}R}\right) \\
& \overset{{\color{red}a}}{\leq} \sum_{k=1}^l2\sqrt{c^k+\delta} \\
& \leq 2\sum_{k=1}^l\left(\sqrt{c^k} + \sqrt{\delta}\right) \\
& \overset{{\color{red}b}}{\leq} 2\sqrt{l\sum_{k=1}^lc^k} + 2l\sqrt{\delta} = 2\sqrt{lc} +2l\sqrt{\delta}
\end{align*}
where in ${\color{red}a}$ we have used \eqref{eq:X-pinsker-2} with $j=k$, and in ${\color{red}b}$ we have used the Cauchy-Schwarz inequality. Finally, by applying the Fuchs-van de Graaf inequality, we have,
\[ \bbE_{i\in\bar{C}}\norm{\Big(\bigotimes_{j\in[l]}U^j_{i}\Big)\left(\state{\psi}_{X'_iX_i}\otimes\state{\vph}_{X_{\bar{C}}\tX_{\bar{C}}EA_{\bar{C}}R}\right)\Big(\bigotimes_{j\in[l]}(U^j_{i})^\dagger\Big), \state{\vph}_{X_{\bar{C}}\tX_{\bar{C}}EA_{\bar{C}}R}}_1 \leq 2\sqrt{2lc} + 2l\sqrt{2\delta},\]
which proves condition (i) of Lemma \ref{lem:i-conds}.

\subsubsection{Case (ii): $c \geq 1$}
Using the Quantum Substate Theorem on \eqref{eq:XR-Dinfty} and tracing out $X^{-j}_{\bar{C}}$ we get for all $j\in[l]$,
\[
\bbE_{i\in\bar{C}}\sfD^{\sqrt{2\zeta'},\Delta}_\infty\left(\vph_{X^j_i\tX^{-j}_{\bar{C}}E^{-j}A^{-j}_{\bar{C}}R}\middle\Vert\psi_{X^j_i}\otimes\vph_{\tX^{-j}_{\bar{C}}E^{-j}A^{-j}_{\bar{C}}R}\right) \leq \frac{4c^j + 4\delta + 1}{\zeta'} + \log\left(\frac{1}{1-\zeta'}\right)
\]
for some $\zeta'$ to be fixed later. Now since $X'^j_i$ as used in $\ket{\psi}_{X'^j_iX^j_i}$ and $\ket{\vph}_{X'^j_i\tX_{\bar{C}}EA_{\bar{C}}R}$ in condition (ii) of Lemma \ref{lem:i-conds}, is identical to $X^j_i$, we also have,
\begin{equation}\label{eq:X-Dinfty}
\bbE_{i\in\bar{C}}\sfD^{\sqrt{2\zeta'},\Delta}_\infty\left(\vph_{X'^j_i\tX^{-j}_{\bar{C}}E^{-j}A^{-j}_{\bar{C}}R}\middle\Vert\psi_{X'^j_i}\otimes\vph_{\tX^{-j}_{\bar{C}}E^{-j}A^{-j}_{\bar{C}}R}\right) \leq \frac{4c^j + 4\delta + 1}{\zeta'} + \log\left(\frac{1}{1-\zeta'}\right)
\end{equation}

To find the measurement operators $M^j_i$, we shall do induction on the number of players. In particular we shall prove the following lemma. 
\begin{lemma}\label{lem:proj-induct}
Suppose we have measurement operators $\Big\{M^j_i\Big\}_i$ for $j\in[k], i \in \bar{C}$, $ 0 \leq k < l$, taking registers $X^j_i\tX^j_{\bar{C}}E^jA^j_{\bar{C}}$ to $\tX^j_{\bar{C}}E^jA^j_{\bar{C}}$ respectively, such that $\bigotimes_{j\in[k]}M^j_{i,r}$ succeeds on $\left(\bigotimes_{j\in[k]}\ket{\psi}_{X'^j_iX^j_i}\right)\otimes\ket{\vph}_{\tX_{\bar{C}}EA_{\bar{C}}R}$ with probability $\alpha^{\leq k}_i = 2^{-\sum_{j=1}^k\tilde{c}^j_i}$ where
\[ \bbE_{i\in\bar{C}}\tilde{c}^j_{i} \leq \frac{15c^j}{\zeta'},\]
and for all $i\in\bar{C}$,
\begin{align}\label{eq:Xk-dist}
& \Delta\Bigg(\frac{1}{\alpha^{\leq k}_{i}}\bigg(\bigotimes_{j\in[k]}M^j_{i}\otimes\Id\bigg)\bigg(\bigotimes_{j\in[k]}\state{\psi}_{X'^j_iX^j_i}\otimes\state{\vph}_{\tX_{\bar{C}}EA_{\bar{C}}R}\bigg)\bigg(\bigotimes_{j\in[k]}(M^j_{i})^\dagger\otimes\Id\bigg), \nonumber \\
& \state{\psi}_{X'^{>k}_iX^{>k}_i}\otimes\state{\vph}_{X'^{\leq k}_i\tX_{\bar{C}}EA_{\bar{C}}R}\Bigg) \leq (3k - 2)\sqrt{2\zeta'}.
\end{align}
Then there are measurement operators $\Big\{M^{k+1}_{i}\Big\}_{i}$ taking registers $X^{k+1}_i\tX^{k+1}_{\bar{C}}E^{k+1}A^{k+1}_{\bar{C}}$ to $\tX^{k+1}_{\bar{C}}E^{k+1}A^{k+1}_{\bar{C}}$, such that $\bigotimes_{j\in[k+1]}M^j_{i}$ succeeds on $\left(\bigotimes_{j\in[k+1]}\ket{\psi}_{X'^j_iX^j_i}\right)\otimes\ket{\vph}_{\tX_{\bar{C}}EA_{\bar{C}}R}$ with probability $\alpha^{\leq (k+1)}_{i} = \alpha^{k+1}_{i}\alpha^{\leq k}_{i}$ where $\alpha^{k+1}_{i} = 2^{-\tilde{c}^{k+1}_{i}}$, with
\[ \bbE_{i\in\bar{C}}\tilde{c}^{k+1}_{i} \leq \frac{15c^{k+1}}{\zeta'},\]
and for all $i\in\bar{C}$
\begin{align*}
& \Delta\Bigg(\frac{1}{\alpha^{\leq (k+1)}_{i}}\bigg(\bigotimes_{j\in[k+1]}M^j_{i}\otimes\Id\bigg)\bigg(\bigotimes_{j\in[k+1]}\state{\psi}_{X'^j_iX^j_i}\otimes\state{\vph}_{\tX_{\bar{C}}EA_{\bar{C}}R}\bigg)\bigg(\bigotimes_{j\in[k+1]}(M^j_{i})^\dagger\otimes\Id\bigg), \nonumber \\
& \state{\psi}_{X'^{>(k+1)}_iX^{>(k+1)}_i}\otimes\state{\vph}_{X'^{\leq (k+1)}_i\tX_{\bar{C}}EA_{\bar{C}}R}\Bigg) \leq (3k+1)\sqrt{2\zeta'}.
\end{align*}
\end{lemma}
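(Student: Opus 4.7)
My plan is to prove Lemma \ref{lem:proj-induct} by combining \eqref{eq:X-Dinfty} (specialized to $j=k+1$) with the inductive closeness of the post-measurement state through the Substate Perturbation Lemma, and then extracting $M^{k+1}_i$ via Fact \ref{fc:substate-proj}. First I would append the register $X'^{\leq k}_i$ to the left-hand side of \eqref{eq:X-Dinfty}: since $X'^{\leq k}_i$ is a classical function of $\tX^{\leq k}_{\bar{C}}$ (both carry the value $x^{\leq k}_i$), the copy channel producing $X'^{\leq k}_i$ from $\tX^{\leq k}_{\bar{C}}$ acts identically on the two sides, and data processing for $\sfD_\infty$ gives, on average over $i \in \bar{C}$,
\[
\sfD^{\sqrt{2\zeta'},\Delta}_\infty\bigl(\vph_{X'^{\leq(k+1)}_i\tX^{-(k+1)}_{\bar{C}}E^{-(k+1)}A^{-(k+1)}_{\bar{C}}R} \bigm\Vert \psi_{X'^{k+1}_i} \otimes \vph_{X'^{\leq k}_i\tX^{-(k+1)}_{\bar{C}}E^{-(k+1)}A^{-(k+1)}_{\bar{C}}R}\bigr) \leq c'_i,
\]
with $\bbE_{i\in\bar{C}} c'_i \leq \frac{4c^{k+1}+4\delta+1}{\zeta'} + \log\frac{1}{1-\zeta'}$ by \eqref{eq:X-Dinfty}.

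Next I would exploit two features of the state $\ket{\xi_k}$ obtained by applying $\bigotimes_{j\in[k]} M^j_i$ (conditioned on success, normalized) to the step-$(k+1)$ starting state. Since the first $k$ measurements act on registers disjoint from $\ket{\psi}_{X'^{k+1}_iX^{k+1}_i}$, the post-measurement state factors as $\ket{\xi_k} = \ket{\psi}_{X'^{k+1}_iX^{k+1}_i} \otimes \ket{\eta_k}$, where $\ket{\eta_k}$ is the outcome of those same measurements on the step-$k$ starting state. By the inductive hypothesis, $\Delta(\state{\eta_k}, \state{\vph}_{X'^{\leq k}_i\tX_{\bar{C}}EA_{\bar{C}}R}) \leq (3k-2)\sqrt{2\zeta'}$, so by data processing the marginal $\rho_B := \xi_k|_{X'^{\leq k}_i\tX^{-(k+1)}_{\bar{C}}E^{-(k+1)}A^{-(k+1)}_{\bar{C}}R}$ is $(3k-2)\sqrt{2\zeta'}$-close in $\Delta$ to $\vph_{X'^{\leq k}_i\tX^{-(k+1)}_{\bar{C}}E^{-(k+1)}A^{-(k+1)}_{\bar{C}}R}$. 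Invoking the Substate Perturbation Lemma (Lemma \ref{Imax-qu-close}) with $\eps=\sqrt{2\zeta'}$, $\delta_1=(3k-2)\sqrt{2\zeta'}$, and $\delta_0=\sqrt{2\zeta'}$ replaces the base state of the substate relation by $\psi_{X'^{k+1}_i}\otimes\rho_B$, inflating the smoothing parameter to exactly $(3k+1)\sqrt{2\zeta'}$ and adding $1+\log(1+2/\zeta') = O(\log(1/\zeta'))$ to the exponent; tuning $\zeta'$ so that this additive overhead is absorbed yields a new exponent $\tilde{c}^{k+1}_i$ with $\bbE_{i\in\bar{C}}\tilde{c}^{k+1}_i \leq 15 c^{k+1}/\zeta'$.

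Finally, I would apply Fact \ref{fc:substate-proj} to the perturbed substate bound using the natural purifications: $\ket{\vph}_{X'^{\leq(k+1)}_i\tX_{\bar{C}}EA_{\bar{C}}R}$ purifies the target with purifying register $\tX^{k+1}_{\bar{C}}E^{k+1}A^{k+1}_{\bar{C}}$, while $\ket{\xi_k}$ purifies $\psi_{X'^{k+1}_i}\otimes\rho_B$ with purifying register $X^{k+1}_i\tX^{k+1}_{\bar{C}}E^{k+1}A^{k+1}_{\bar{C}}$ (the $X^{k+1}_i$ factor coming from $\ket{\psi}_{X'^{k+1}_iX^{k+1}_i}$ and the $\tX^{k+1}_{\bar{C}}E^{k+1}A^{k+1}_{\bar{C}}$ factor coming from $\ket{\eta_k}$, which by purity already carries $\rho_B$ on the complement). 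The fact then supplies an $M^{k+1}_i$ acting precisely on $X^{k+1}_i\tX^{k+1}_{\bar{C}}E^{k+1}A^{k+1}_{\bar{C}}$ with output on $\tX^{k+1}_{\bar{C}}E^{k+1}A^{k+1}_{\bar{C}}$, which succeeds on $\ket{\xi_k}$ with probability $\alpha^{k+1}_i \geq 2^{-\tilde{c}^{k+1}_i}$ and yields a post-measurement state $(3k+1)\sqrt{2\zeta'}$-close in $\Delta$ to $\state{\vph}_{X'^{\leq(k+1)}_i\tX_{\bar{C}}EA_{\bar{C}}R}$, as required (with the inert $\state{\psi}_{X'^{>(k+1)}_iX^{>(k+1)}_i}$ factor in the statement read as the empty tensor product under the natural indexing of the starting state). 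The main obstacle will be chaining the Substate Perturbation Lemma with Fact \ref{fc:substate-proj} cleanly, and in particular verifying that the distance inflation $2\eps+\delta_0+\delta_1$ telescopes exactly to $(3k+1)\sqrt{2\zeta'}$ while the constants combine to the stated $15 c^{k+1}/\zeta'$ bound on the exponent.
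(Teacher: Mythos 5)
Your proof is correct and follows essentially the same skeleton as the paper's: use data processing from \eqref{eq:Xk-dist} to show the post-measurement marginal on $B$ is $(3k-2)\sqrt{2\zeta'}$-close to $\vph_B$, apply the Substate Perturbation Lemma with $\eps=\delta_0=\sqrt{2\zeta'}$ and $\delta_1=(3k-2)\sqrt{2\zeta'}$ to transfer the $\sfD^\eps_\infty$ bound of \eqref{eq:X-Dinfty} to a reference state of the form $\psi_{X'^{k+1}_i}\otimes\rho_B$, and then extract $M^{k+1}_i$ from Fact \ref{fc:substate-proj}. The one substantive variation is that you fold $X'^{\leq k}_i$ into the register $B$, which requires the extra preliminary step of appending $X'^{\leq k}_i$ to both sides of \eqref{eq:X-Dinfty} via a copying channel acting on $\tX^{\leq k}_{\bar{C}}$ (valid because $X^{\leq k}_i$ and $\tX^{\leq k}_i$ are perfectly classically correlated in $\ket{\vph}$, so $\sfD^{\eps,\Delta}_\infty$ is preserved), whereas the paper takes $B=\tX^{-(k+1)}_{\bar{C}}E^{-(k+1)}A^{-(k+1)}_{\bar{C}}R$ without $X'^{\leq k}_i$. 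Your choice is arguably the cleaner one: with $X'^{\leq k}_i$ inside $B$, the purifying register of $\ket{\xi_k}$ over $\psi_{X'^{k+1}_i}\otimes\rho_B$ is exactly $X^{k+1}_i\tX^{k+1}_{\bar{C}}E^{k+1}A^{k+1}_{\bar{C}}$, so Fact \ref{fc:substate-proj} hands you a measurement on precisely those registers, as the lemma demands. The paper's choice leaves $X'^{\leq k}_i$ (and the inert $X'^{>(k+1)}_iX^{>(k+1)}_i$ factors) in the purifying register and implicitly assumes the measurement from the fact can be taken to act trivially on them, a point it does not spell out; your route sidesteps that entirely. The remaining arithmetic --- the telescoping of the smoothing parameter to $(3k+1)\sqrt{2\zeta'}$ and the absorption of the additive $1 + \log(1+2/\zeta')$ overhead into $15c^{k+1}/\zeta'$ using $c^{k+1}\geq1$, $\delta<1$, $\zeta'<1$ --- matches the paper's.
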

We also clarify that if the distance in \eqref{eq:Xk-dist} is $\Delta[k]$, then the way we pick our parameters in the proof of the lemma gives us $\Delta[k+1] = \Delta[k] + 3\sqrt{2\zeta'}$. The expression $(3k-2)\sqrt{2\zeta'}$ is obtained by setting $\Delta[1] = \sqrt{2\zeta'}$.

\begin{proof}[Proof of Lemma \ref{lem:proj-induct}]
Let
\[ \ket{\rho}_{X'_iX^{>k}_i\tX_{\bar{C}}EA_{\bar{C}}R} = \frac{1}{\sqrt{\alpha^{\leq k}_{i}}}\bigg(\bigotimes_{j\in[k]}M^j_{i}\otimes\Id\bigg)\bigg(\bigotimes_{j\in[k]}\ket{\psi}_{X^j_iX'^j_i}\otimes\ket{\vph}_{\tX_{\bar{C}}EA_{\bar{C}}|r}\bigg).\]
Note that $\ket{\rho}$ has an $i$ dependence, but we are not writing it explicitly. By \eqref{eq:Xk-dist},
\[
\bbE_{i\in\bar{C}}\Delta\left(\rho_{\tX^{-(k+1)}_{\bar{C}}E^{-(k+1)}A^{-(k+1)}_{\bar{C}}R}, \vph_{\tX^{-(k+1)}_{\bar{C}}E^{-(k+1)}A^{-(k+1)}_{\bar{C}}R}\right) \leq \Delta[k].
\]
Moreover, since none of the operators $M^j_i$ for $j\in[k]$ act on the $X^{k+1}_i$ register,
\[ \rho_{X'^{k+1}_i\tX^{-(k+1)}_{\bar{C}}E^{-(k+1)}A^{-(k+1)}_{\bar{C}}R} = \rho_{X'^{k+1}_i}\otimes\rho_{\tX^{-(k+1)}_{\bar{C}}E^{-(k+1)}A^{-(k+1)}_{\bar{C}}R} = \psi_{X'^{k+1}_i}\otimes\rho_{\tX^{-(k+1)}_{\bar{C}}E^{-(k+1)}A^{-(k+1)}_{\bar{C}}R}.\]
Using the Substate Perturbation Lemma on the above and \eqref{eq:X-Dinfty} with $j=k+1$, picking parameters $\eps = \delta_0 = \sqrt{2\zeta'}, \delta_1= \Delta[k]$ we get,
\begin{align*}
& \bbE_{i\in\bar{C}}\sfD^{\Delta[k+1],\Delta}_\infty\left(\vph_{X'^{k+1}_i\tX^{-(k+1)}_{\bar{C}}E^{-(k+1)}A^{-(k+1)}_{\bar{C}}R}\middle\Vert\rho_{X'^{k+1}_i\tX^{-(k+1)}_{\bar{C}}E^{-(k+1)}A^{-(k+1)}_{\bar{C}}R}\right) \\
& = \bbE_{i\in\bar{C}}\sfD^{3\sqrt{2\zeta'}+\Delta[k],\Delta}_\infty\left(\vph_{X'^{k+1}_i\tX^{-(k+1)}_{\bar{C}}E^{-(k+1)}A^{-(k+1)}_{\bar{C}}R}\middle\Vert\psi_{X'^{k+1}_i}\otimes\rho_{X'^{k+1}_i\tX^{-(k+1)}_{\bar{C}}E^{-(k+1)}A^{-(k+1)}_{\bar{C}}R}\right) \\
& \leq \frac{4c^{k+1} + 4\delta + 1}{\zeta'} + \log\left(\frac{1}{1-\zeta'}\right) + 1 + \log\left(1 + \frac{2}{\zeta'}\right) \\
& \leq \frac{4c^{k+1} + 4\delta + 1}{\zeta'} + 3\zeta' + 1 + \frac{2}{\zeta'} \leq \frac{15c^{k+1}}{\cdot\zeta'}.
\end{align*}
Now note that $\ket{\psi}_{X'^{>(k+1)}_iX^{>(k+1)}_i}\otimes\ket{\vph}_{X'^{\leq (k+1)}_i\tX_{\bar{C}}EA_{\bar{C}}R}$ is a purification of the state in the first argument in the above smoothed entropy, and $\ket{\rho}_{X'_iX_i\tX_{\bar{C}}EA_{\bar{C}}R}$ is obviously a purification of the state in the second. Therefore, by Fact \ref{fc:substate-proj}, there exist measurement operators $\Big\{M^{k+1}_{i}\Big\}_{i}$ taking registers $X^{k+1}_i\tX^{k+1}_{\bar{C}}E^{k+1}A^{k+1}_{\bar{C}}$ to $\tX^{k+1}_{\bar{C}}E^{k+1}A^{k+1}_{\bar{C}}$, that succeed on $\ket{\rho}_{X'_iX^{>k}_i\tX_{\bar{C}}EA_{\bar{C}}R}$ with probability $\alpha^{k+1}_{i} = 2^{-\tilde{c}^{k+1}_i}$, where
\[ \bbE_{i\in\bar{C}}\tilde{c}^{k+1}_{i} \leq \frac{15c^{k+1}}{\zeta'},\]
and for all $i$,
\begin{align*}
& \Delta\Bigg(\frac{1}{\alpha^{k+1}_{i}\alpha^{\leq k}_{i}}\bigg(\bigotimes_{j\in[k+1]}M^j_{i}\otimes\Id\bigg)\bigg(\bigotimes_{j\in[k+1]}\state{\psi}_{X'^j_iX^j_i}\otimes\state{\vph}_{\tX_{\bar{C}}EA_{\bar{C}}R}\bigg)\bigg(\bigotimes_{j\in[k+1]}(M^j_{i})^\dagger\otimes\Id\bigg), \nonumber \\
& \quad \state{\psi}_{X'^{>(k+1)}_iX^{>(k+1)}_i}\otimes\state{\vph}_{X'^{\leq (k+1)}_i\tX_{\bar{C}}EA_{\bar{C}}R}\Bigg) \\
& = \Delta\Bigg(\frac{1}{\alpha^{k+1}_{i}}M^{k+1}_{i}\otimes\Id\bigg(\state{\rho}_{X'_iX^{>k}_i\tX_{\bar{C}}EA_{\bar{C}}R}\bigg)(M^{k+1}_{i})^\dagger\otimes\Id, \state{\psi}_{X'^{>(k+1)}_iX^{>(k+1)}_i}\otimes\state{\vph}_{X'^{\leq (k+1)}_i\tX_{\bar{C}}EA_{\bar{C}}R}\Bigg) \\
& \leq \Delta[k+1].
\end{align*}
This proves the lemma.
\end{proof}

After the induction process, we have measurement operators $\Big\{M^{j}_{i}\Big\}_{i}$ for $j\in[l]$ and the conditions in the statement of Lemma \ref{lem:proj-induct} hold with $k=l$. Therefore, by the Fuchs-van de Graaf inequality,
\begin{align*}
& \bigg\Vert \frac{1}{\alpha_i}\Big(\bigotimes_{j\in[l]}M^j_i\Big)\left(\state{\psi}_{X'_iX_i}\otimes\state{\vph}_{\tX_{\bar{C}}EA_{\bar{C}}R}\right)\Big(\bigotimes_{j\in[l]}(M^j_i)^\dagger\Big) - \state{\vph}_{X'_i\tX_{\bar{C}}EA_{\bar{C}}R}\bigg\Vert_1 \\
& \leq 2(3l-2)\sqrt{2\zeta'}.
\end{align*}
Setting $(3l-2)\sqrt{2\zeta'} = \zeta$ we get, $\zeta' \geq \frac{\zeta^2}{18l^2}$. This gives us
\[ \bbE_{i\in\bar{C}}\sum_{j=1}^l\tilde{c}^j_{i} \leq \frac{270l^2}{\zeta^2}\sum_{j=1}^lc^j \leq \frac{270l^3 c}{\zeta^2}.\]
Since $2^{-x}$ is a convex function, by Jensen's inequality we have,
\[
\bbE_{i\in\bar{C}}\alpha_{i} = \bbE_{i\in\bar{C}}2^{-\sum_{j=1}^l\tilde{c}^j_{i}}\geq 2^{-\bbE_{i\in\bar{C}}\sum_{j=1}^l\tilde{c}^j_{i}} \geq 2^{-270l^3c/\zeta^2}.
\]
Therefore there exists an $i\in\bar{C}$ such that $\alpha_i \geq 2^{-270l^3c/\zeta^2}$. This proves condition \ref{state-i-close} in Lemma \ref{lem:i-conds}.


\section{Device-independent cryptography with leakage}\label{sec:QKD}
In this section, we prove Theorems \ref{thm:leaky-qkd}, \ref{thm:leaky-re} and \ref{thm:leaky-quantum}. The protocols for all three theorems will involve the Magic Square non-local game, so we first note some properties of this game.
\subsection{Properties of the Magic Square game}
\begin{definition}
The 2-player Magic Square game, denoted by $\MS$, is as follows:
\begin{itemize}
\item Alice and Bob receive respective inputs $x \in \{0,1,2\}$ and $y \in \{0,1,2\}$ independently and uniformly at random.
\item Alice outputs $a \in \{0,1\}^3$ such that $a[0]\oplus a[1]\oplus a[2] = 0$ and Bob outputs $b \in \{0,1\}^3$ such that $b[0]\oplus b[1]\oplus b[2] = 1$.
\item Alice and Bob win the game iff $a[y] = b[x]$.
\end{itemize}
\end{definition}
The classical value of the magic square game is $\omega(\MS) = {8}/{9}$, whereas the quantum value is $\omega^*(\MS)=1$.

Rao \cite{Rao10} proved a threshold version of parallel repetition for the classical value of 2-player non-local games, which upper bounds the probability of winning $(\omega(G)+\eta)n$ copies out of $n$ parallel copies of a game $G$ by classical players. The following fact is Rao's result applied to the Magic Square game.
\begin{fact}\label{fc:MS-thres}
The probability of classical players winning $(\frac{8}{9}+\eta)n$ many games out of $n$ parallel copies of $\MS$ is ${2^{-\Omega(\eta^3n)}}$.
\end{fact}
The threshold version of the parallel repetition can be proven from a statement similar to Lemma \ref{lm:induct} in the classical case, which is an intermediate step in the classical proof as well. In principle, it is possible to prove a classical version of this lemma with leakage between the players, although we will not be doing that here. Instead we can use a simpler argument like in the proof of Lemma \ref{lem:eff-lb} to get a threshold parallel repetition theorem for the classical value of $\MS$ with leakage, using Fact \ref{fc:MS-thres}. The idea is as in Lemma \ref{lem:eff-lb} to get a protocol without leakage from a protocol with leakage. The argument is even simpler in this case because the players are classical: the leaked messages are all classical functions of their inputs and previous messages. We shall not describe the argument in too much detail, since as stated earlier it is similar to the proof of Lemma \ref{lem:eff-lb}, and we also do a similar argument in the sequential quantum case in Lemma \ref{lem:re-Hmin} later. The idea is as before for the players to share randomness and use it to guess the value of the messages round by round; if the randomness is inconsistent at any point, they record a "failure" and output a random answer; otherwise the output according to the protocol with leakage. The probability that both players output according to the protocol with leakage is $2^{-cn}$ if $cn$ bits are leaked: this means that the threshold winning probability with leakage is at most $2^{cn}$ times the threshold winning probability without leakage. We thus have the following corollary of Fact \ref{fc:MS-thres}.
\begin{cor}\label{cor:MS-cl-leaky}
The probability of classical players, between whom $cn$ bits are interactively leaked, winning $(\frac{8}{9}+\eta)n$ many games out of $n$ parallel copies of $\MS$ is ${2^{-\Omega(\eta^3n)+cn}}$.
\end{cor}

We shall use a 3-player version of the Magic Square game, defined below, in order to prove security for DIQKD and DIRE.
\begin{definition}
The 3-player variant of the Magic Square game, denoted by $\MSe$, is as follows:
\begin{itemize}
\item Alice receives inputs $x\in\{0,1,2\}, z \in \{0,1\}$ and Bob receives input $y\in\{0,1,2\}$ independently and uniformly at random; Eve receives $x', y' \in \{0,1,2\}$ and $z'\in\{0,1\}$, independently and uniformly at random.
\item Alice outputs $a \in \{0,1\}^3$ such that $a[0]\oplus a[1]\oplus a[2] = 0$, Bob outputs $b\in\{0,1\}^3$ such that $b[0]\oplus b[1]\oplus b[2] = 1$, and Eve outputs $c\in\{0,1\}$.
\item Alice, Bob and Eve win the game iff
\[ (x=x')\land(y=y')\land(a[y]=c)\land((a[y]=b[x])\lor(z=z')).\]
\end{itemize}
\end{definition}

\begin{fact}[\cite{JMS17}]\label{fc:MSE-w}
There is a constant $0 < \nu < 1$ such that $\omega^*(\MSe) = \frac{1}{9}(1-\nu)$.
\end{fact}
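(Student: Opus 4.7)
First, I would establish the trivial upper bound $\omega^*(\MSe) \leq \tfrac{1}{9}$. Since Eve receives no input, her POVM outcomes $(x',y',z',c)$ are obtained by a measurement on her share of the tripartite entangled state, and her marginal output distribution is independent of the uniformly random $(x,y,z)$. In particular $\Pr[x=x' \wedge y=y'] = \tfrac{1}{9}$ in any strategy. Factoring over the remaining winning conditions and using the fact that $z$ is a uniform bit independent of Eve's system, so that $\Pr[z=z' \mid \neg W_{AB}] = \tfrac{1}{2}$ for any Alice--Bob failure event $W_{AB}$, the winning probability of any tripartite strategy can be written as
\[ \frac{1}{9} \cdot p_2 \cdot \frac{1+q}{2}, \]
where $p_2 = \Pr[a[y]=c \mid x=x', y=y']$ is Eve's conditional prediction probability for $a[y]$ and $q = \Pr[a[y]=b[x] \mid x=x',y=y',a[y]=c]$ is the conditional $\MS$ winning probability in the strategy post-selected on Eve's location and bit guesses being correct. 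Proving $\omega^*(\MSe) < \tfrac{1}{9}$ therefore reduces to showing $p_2(1+q)/2$ is bounded strictly below $1$ uniformly over strategies.

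The core of the argument is a monogamy-type rigidity tradeoff that rules out the boundary point $p_2 = q = 1$. Suppose $q = 1$: then in the post-selected strategy Alice and Bob win $\MS$ with certainty. By the rigidity (self-testing) of $\MS$, their post-selected reduced state is two EPR pairs and their measurements are the canonical Pauli observables of the Magic Square solution, up to local isometries. Monogamy of entanglement then forces the full Eve--Alice--Bob pure state to factorize as a tensor product of Eve's purification with the two EPR pairs on the Alice--Bob sector. Hence Eve's outcome $c$ is statistically independent of Alice's output bit $a[y]$, which is uniformly distributed in $\{0,1\}$ by the structure of the ideal Magic Square solution, giving $p_2 = \tfrac{1}{2}$ and so $p_2(1+q)/2 = \tfrac{1}{2}$. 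This shows that the supremum of $p_2(1+q)/2$ is not attained at $(1,1)$, which is the only point in $[0,1]^2$ where the product equals $1$.

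To upgrade from a strict inequality at the extremal point to a uniform bound below $1$, I would invoke a robust version of the rigidity of $\MS$: if the post-selected $\MS$ winning probability is $1-\varepsilon$, then the post-selected Alice--Bob state is $O(\sqrt{\varepsilon})$-close in trace distance to two EPR pairs, and monogamy promotes this to $p_2 \leq \tfrac{1}{2} + O(\sqrt{\varepsilon})$. Substituting $q = 1-\varepsilon$ and this bound on $p_2$ into $p_2(1+q)/2$ and optimizing over $\varepsilon \in [0,1]$ yields a uniform upper bound strictly less than $1$, producing the desired constant $\nu$. Alternatively, if only existence of $\nu > 0$ is needed, one can rely on compactness: the winning probability is continuous on the space of bounded-dimension tripartite strategies, so the supremum is attained or approached, and the extremal argument above rules out the limit point. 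The main technical obstacle is the quantitative robust rigidity of $\MS$ with a sufficiently sharp trace-distance dependence to close the tradeoff; this quantitative analysis is precisely what is carried out in \cite{JMS17}, from which the explicit value of $\nu$ can be extracted.
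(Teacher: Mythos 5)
Your route is genuinely different from the paper's: the paper does not prove this fact from first principles at all, but imports it from Proposition 4.1 of \cite{JMS17} (equivalently Lemma 2 of \cite{Vid17}), adding only a short reduction — merge the guessing players into Eve, charge a factor $\frac{1}{9}$ for the $x,y$ guesses and a no-signalling argument for the $z$ guess. You instead sketch a self-contained proof of the hard part via self-testing of $\MS$ plus monogamy. Unfortunately the sketch breaks down, at one secondary point and one fatal one.

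The secondary issue is the exact factor $\frac{1+q}{2}$: it presupposes $\Pr[z=z'\mid \lnot W_{AB},\,x=x',y=y',a[y]=c]=\frac12$. No-signalling only gives that $z'$ is independent of $z$ \emph{a priori}; the conditioning events involve $a$ (which may depend on Alice's input $z$) and Eve's outputs, so under the conditioning $z$ and $z'$ can be correlated and this probability can exceed $\frac12$. The fatal issue is the rigidity step. You apply $\MS$ rigidity to the strategy post-selected on $x=x'$, $y=y'$ (and $a[y]=c$). That post-selection replaces the uniform input distribution by Eve's guess distribution, which can be a point mass, and self-testing of $\MS$ certifies nothing when near-perfect winning is only witnessed on a single cell. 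Concretely, let Alice always output $(0,0,0)$, Bob always output $(0,0,1)$, and Eve deterministically output $x'=y'=0$, $c=0$ (and $z'$ arbitrary). Conditioned on $x=x'=0$, $y=y'=0$ we have $a[y]=0=c$ and $a[y]=b[x]$, so in your notation $q=1$ and $p_2=1$, i.e.\ $p_2(1+q)/2=1$: your claimed tradeoff $p_2\le \tfrac12+O(\sqrt{\eps})$ fails already at $\eps=0$, so the optimization (and the compactness fallback, which in any case does not control unbounded dimension) cannot close. Indeed this strategy wins the game exactly as transcribed above with probability exactly $\tfrac19$, which shows that the strict gap $\nu>0$ must come from the precise formulation analyzed in \cite{JMS17,Vid17} (in particular from constraints tying Eve's prediction to Alice–Bob correlations that are required to nearly win $\MS$ on the \emph{actual} input distribution, as enforced in the protocol by the test), and cannot be obtained by decoupling Eve's $(x,y)$-guess as an independent $\tfrac19$ factor multiplying a conditional value argued to be bounded away from $1$ via post-selected rigidity. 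A from-scratch proof has to engage with that structure; as set up, yours does not.
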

The above fact is a consequence of Proposition 4.1 in \cite{JMS17}. The game considered in the statement of this proposition in \cite{JMS17} is different: they consider a 6-player game between Alice, Bob, $\text{Alice}'$, $\text{Bob}'$, Charlie and $\text{Charlie}'$. Here we have given Charlie's role to Alice, and merged $\text{Alice}'$, $\text{Bob}'$ and $\text{Charlie}'$ into Eve (this is later done in the analysis in \cite{JMS17} anyway). Doing this makes no difference in the proof of the game's winning probability as given in \cite{JMS17}. Alternatively, the fact can be seen as a consequence of Lemma 2 in \cite{Vid17}. The game considered in \cite{Vid17} does not include Eve having inputs $x', y',z'$, which are required to be equal to $x, y, z$. Suppose the probability of winning Vidick's game is $(1-\nu')$. Since the probability of $z=z'$ is $\frac{1}{2}$, the probability of winning the version of the game where the $z=z'$ condition is present but not $x=x',y=y'$, is $(1-\frac{\nu'}{2})$. Further, since the probability of $x=x'$ and $y=y'$ is $\frac{1}{9}$, probability of winning $\MSe$ including the $x=x', y=y'$ condition is then $\frac{1}{9}(1-\frac{\nu'}{2})$.

Now Corollary \ref{cor:rand-V} has the following consequence for the parallel-repeated $\MSe$ game in the interactive leakage model.
\begin{cor}\label{cor:MSE-dpt}
There exists a constant $\beta>0$ such that if the total communication in the interactive leakage model is at most $cn$ for some $c < 1$, with $\nu$ being the constant from Fact \ref{fc:MSE-w}, then the probability of winning $\MSe$ in a random subset of size $t$ out of $n$ instances is at most
\[ \left(\frac{1-\nu + \beta(\sqrt{c} + \sqrt{t/n})}{9}\right)^t.\]
\end{cor}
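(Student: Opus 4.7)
The plan is to derive this corollary directly from Corollary \ref{cor:rand-V} applied to the $\MSe$ game. First I would set up the parameters: $\MSe$ is a three-player non-local game, so $l=3$, with input distribution $p$ uniform on $\{0,1,2\}\times\{0,1,2\}\times\{0,1\}\times\{\star\}$ (the last coordinate being Eve's empty input). This is a product distribution, so Corollary \ref{cor:rand-V} applies. The output alphabets have constant size: Alice's output set has $|\clA^\A|=8$, Bob's has $|\clA^\B|=8$, and Eve's has $|\clA^\E|=3\cdot 3\cdot 2\cdot 2=36$, so $\log(|\clA^\A|\cdot|\clA^\B|\cdot|\clA^\E|)$ is a universal constant.

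Next I would invoke Fact \ref{fc:MSE-w} to get $\omega^*(G(p,\MSe))=\frac{1}{9}(1-\nu)$, and plug everything into Corollary \ref{cor:rand-V} with $cn$ being the total communication bound provided by the interactive leakage model. This yields an upper bound on the success probability of the randomized predicate $\MSe^{t/n}_{\mathrm{rand}}$ of the form
\[\left(\frac{1-\nu}{9}+O\!\left(\sqrt{3c}+3\sqrt{\tfrac{t\cdot\log(8\cdot 8\cdot 36)}{n}}\right)\right)^{t}.\]
Since $l=3$ and the logarithm of the product of output-set sizes are both universal constants, the hidden constants in the big-$O$ can be absorbed, together with a factor of $9$, into a single constant $\beta>0$, yielding
\[\left(\frac{1-\nu+\beta\bigl(\sqrt{c}+\sqrt{t/n}\bigr)}{9}\right)^{t},\]
which is exactly the claimed bound.

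The only subtle point is reconciling the communication model in Corollary \ref{cor:rand-V} (an $l$-party interactive quantum communication protocol) with the ``interactive leakage model'' referenced in the corollary. I would briefly remark that an $\MSe$-strategy in the interactive leakage model, in which Alice's, Bob's and Eve's boxes can exchange up to $cn$ bits or qubits of classical/quantum messages in any number of rounds after receiving their inputs and before producing outputs, is exactly an $l=3$ party entanglement-assisted interactive quantum communication protocol with total communication $cn$ on input distribution $p^n$. Since everything else reduces to plugging values into Corollary \ref{cor:rand-V}, I expect no real obstacles; the only bookkeeping is tracking that all of $l$, the output-alphabet logarithms, and the constants absorbed into the $O(\cdot)$ of Corollary \ref{cor:rand-V} are universal, so that a single $\beta$ works.
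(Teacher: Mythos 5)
Your proposal is correct and follows exactly the route the paper intends: the paper states Corollary~\ref{cor:MSE-dpt} as a direct consequence of Corollary~\ref{cor:rand-V}, which is precisely what you do by substituting $l=3$, $\omega^*(G(p,\MSe))=\frac{1}{9}(1-\nu)$ from Fact~\ref{fc:MSE-w}, noting the output alphabets have constant-size logarithm, and folding the resulting constants (together with a factor of $9$) into $\beta$. Your remark identifying the interactive leakage model with a three-party entanglement-assisted interactive quantum communication protocol on input distribution $p^n$ is the right bridge and is also what the paper relies on.
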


\subsection{Parallel DIQKD with leakage}
We recall Theorem \ref{thm:leaky-qkd}.
\qkd*

First we formally define security for a QKD protocol, following \cite{AFRV19}. For simplicity, here we are using the same parameter for correctness, completeness and soundness, but in principle different parameters could be used.
\begin{definition}\label{def:qkd}
An interactive protocol between two honest parties Alice and Bob that results in either the parties aborting, or in Alice and Bob outputting $l$-bit secret keys $\tK^\A$ and $\tK^\B$ respectively is said to be $\lambda$-correct and $\lambda$-secure iff
\begin{enumerate}
\item (Correctness) $\Pr[\tK^\A\neq \tK^\B] \leq \lambda$;
\item (Soundness) If $\clE$ is the event that the protocol does not abort, and $\rho_{\tK^\A E}$ is the quantum state comprising of Alice's final key and the side information of the adversary Eve (which includes all classical communication between Alice and Bob) conditioned on not aborting, then
\[ \Pr[\clE]\norm{\rho_{\tK^\A E} - \frac{\Id}{2^l}\otimes\rho_{E}}_1 \leq \lambda;\]
\item (Completeness) The honest implementation of the protocol aborts with probability at most $\lambda$.
\end{enumerate}
\end{definition}

Protocol \ref{prot:QKD} realizing Theorem \ref{thm:leaky-qkd} is given below. It is a parallel protocol and makes use of the following equipment:
\begin{enumerate}[(i)]
\item Boxes $(\clB^\A,\clB^\B)$ with Alice and Bob respectively, whose honest behaviour is to play $n$ i.i.d. instances of $\MS$ $\delta$-noisily, i.e., each copy of $\MS$ is won with probability $1-\delta$;
\item Private sources of randomness for both Alice and Bob;
\item A public authenticated channel between Alice and Bob.
\end{enumerate}
The $\delta$ in the description of the protocol is the $\delta$ from item (i) above, i.e., the noise level in the honest behaviour of the boxes. We shall specify how the parameters $\alpha, \gamma$ in the protocol description are picked later. The description given here is not a full QKD protocol; after the steps described, error correction and privacy amplification need to be performed on the raw keys, giving the final keys. However, these are very standard, and we refer the interested reader to e.g. \cite{PR14, AFRV19} for details. The error correction step essentially ensures that the correctness condition (condition 1) in Definition \ref{def:qkd} is satisfied, as long as $n=\Omega\left(\frac{1}{\delta^2\alpha\gamma}\log(1/\lambda)\right)$.

\begin{algorithm}[!h]
\caption{Parallel DIQKD protocol (with parameters $\alpha, \gamma, \delta$)}
\label{prot:QKD}
\begin{algorithmic}[1]
\State Alice chooses $x_1\ldots x_n \in \{0,1,2\}^n$ uniformly at random from private randomness, inputs it into her box $\clB^\A$, and records the output $a_1\ldots a_n$ \;
\State Bob chooses $y_1\ldots y_n \in \{0,1,2\}^n$ uniformly at random from private randomness, inputs it into his box $\clB^\B$, and records the output $b_1\ldots b_n$ \;
\State Alice chooses $S \subseteq [n]$ of size $\alpha n$, $T \subseteq S$ of size $\gamma |S|$  uniformly at random from private randomness \; \label{step:bef-comm}
\State Alice sends $(S,T,x_S,a_T)$ to Bob using the public channel \;
\State Bob sends $y_S$ to Alice using the public channel \;
\State Bob tests if $(a_i[0]\oplus a_i[1]\oplus a_i[2]=0)\land(b_i[0]\oplus b_i[1]\oplus b_i[2]=1)\land(a_i[y_i] = b_i[x_i])$ for at least $(1-2\delta)|T|$ many $i$-s in $T$ \;
\If{the test fails}
\State Bob aborts the protocol \;
\Else
\State Alice sets $(K^\A)_{i\in S} = a_i[y_i]$ and Bob sets $(K^\B)_{i\in S} = b_i[x_i]$ as their respective raw keys
\EndIf
\end{algorithmic}
\end{algorithm}

We need to prove Protocol \ref{prot:QKD} satisfies conditions 2 and 3 of Definition \ref{def:qkd} in the presence of leakage. Note that the honest implementation of the devices does not involve leakage, so the proof of condition 3 is the same as in the no-leakage case, which was shown in \cite{JMS17, Vid17}. In order to show condition 2, it is enough to lower bound the (smoothed) conditional Renyi-2 entropy of $K^\A$ conditioned on everything held by Eve, which includes her quantum side information, as well as copies of whatever Alice and Bob communicate through the classical channel. In fact, by the Leftover Hashing Lemma, if condition 2 of Definition \ref{def:qkd} is to be satisfied with parameter $\lambda$, then the number of bits of secret key extracted from the protocol is given by this smoothed Renyi-2 entropy, with a factor of $(\log(1/\lambda) - \log(1/\Pr[\clE]) + O(1))$ subtracted \cite{Ren-th, PR14, TL17}.\footnote{The factor of $\log(1/\Pr[\clE])$, where $\clE$ is the event that the protocol does not abort, is thus added to the entropy bound. This is because condition 2 of Definition \ref{def:qkd} can tolerate a multiplicative factor of $\Pr[\clE]$ in front of the trace distance. For details, we refer the reader to \cite{PR14, AFRV19}.}

Most security proofs for QKD work with the conditional min-entropy (which is smaller) instead of the conditional Renyi-2 entropy. However, the Leftover Hashing Lemma used to prove condition 2 for QKD works fine with the Renyi-2 entropy as well; see e.g. Theorem 5.5.1 in \cite{Ren-th}. In our case, we shall indeed lower bound the conditional min-entropy for Protocol \ref{prot:QKD}, but later for our sequential protocols, we shall use Renyi-2 entropy instead. However, since we allow for smoothing, whether we use $\sfH_2$ or $\sfH_\infty$ does not make a lot of difference, due to Fact \ref{fc:H2-Hmin}.

To prove the security of Protocol \ref{prot:QKD}, we shall prove the following lemma.
\begin{lemma}\label{thm:qkd-Hmin}
Let $\rho_{K^\A K^\B X_SY_SA_TST\tE}$ be the state of Alice's and Bob's raw keys and Eve's side information conditioned on not aborting in Protocol \ref{prot:QKD} (where $\tE$ is Eve's quantum register and $X_SY_SA_TST$ is the communication through the public channel, which Eve also has access to). If the total communication in the interactive leakage model is $cn$ for some $c < 1$ and $\Pr[\clE] \geq 2\cdot 2^{-8\delta^2 \alpha \gamma n}$, then the state $\rho$ satisfies
\[ \sfH_\infty^\eps(K^\A|X_SY_SA_TST\tE)_\rho - \sfH^\eps_0(K^\A|K^\B)_\rho \geq  \alpha\left(\nu - \beta(\sqrt{c}+ \sqrt{\alpha}) - 4\delta - h(4\delta) - \gamma\right)n -1 - \log(1/\Pr[\clE]),\]
where $\clE$ is the event that the protocol does not abort, $\eps = \frac{2\cdot 2^{-8\delta^2\alpha \gamma n}}{\Pr[\clE]}$, $\beta, \nu$ are constants in $(0,1)$ (given by Fact \ref{fc:MSE-w} and Corollary \ref{cor:MSE-dpt}), and $h$ is the binary entropy function. Moreover, when $(\clB^\A,\clB^\B)$ have their honest $\delta$-noisy behaviour, then $\Pr[\clE] \geq 1 - 2^{-2\delta^2\gamma \alpha n}$.
\end{lemma}

Before proving the lemma, we shall briefly explain the significance of the quantity we lower bound in it. Firstly, if $\Pr[\clE] \leq 2\cdot 2^{-8\delta^2\alpha \gamma n}$, condition 2 of Definition \ref{def:qkd} is saitsfied automatically, for $n = \Omega\left(\frac{1}{\delta^2\alpha\gamma}\log(1/\lambda)\right)$. Thus we only need to lower bound the Renyi-2 entropy in the case that $\Pr[\clE]$ is larger than this bound. Alice and Bob need to do the error correction step on the raw keys $K^\A$ and $K^\B$ that they produce at the end of the protocol as described. The (one-way) error correction procedure will go like this: Alice will send a message $M$ of length $\sfH_0^\eps(K^\A|K^\B)_\rho + \log(1/\lambda) + O(1)$ to Bob, and based on thismessage and Bob's raw key $K^\B$, Bob will produce a guess $\tK^\A$, which will be equal to $K^\A$ with probability at least $1-\lambda$, as long as $n=\Omega\left(\frac{1}{\delta^2\alpha\gamma}\log(1/\lambda)\right)$ \cite{Ren-th}. Throughout the whole procedure, all the information communicated via the classical channel is $STX_SA_T$, $Y_S$ and $M$.
Therefore, using Fact \ref{fc:Hinf-ch-rule}, the Renyi-2 entropy of $K^\A$ conditioned on everything held by Eve after error correction is given by,
\begin{align*}
\sfH^\eps_2(K^\A|X_SY_SA_TSTM\tE)_\rho & \geq  \sfH^\eps_2(K^\A|X_SY_SA_TST\tE)_\rho - \log|M| \\
 & \geq \sfH^\eps_\infty(K^\A|X_SY_SA_TST\tE)_\rho - \sfH_0(K^\A|K^\B)_\rho - \log(1/\lambda) - O(1).
\end{align*}
As already stated, the number of bits of key that we can extract from this protocol after error correction, with $\lambda$ security, is $\sfH^\eps_2(K^\A|X_SY_SA_TSTM\tE)_\rho - \log(1/\lambda) + \log(1/\Pr[\clE]) - O(1)$. Since Lemma \ref{thm:qkd-Hmin} lower bounds $\sfH^\eps_\infty(K^\A|X_SY_SA_TST\tE)_\rho - \sfH_0(K^\A|K^\B)_\rho$, it thus gives us the quantity $r^{\mathrm{QKD}}_{\mathrm{par}}$ defined in Theorem \ref{thm:leaky-qkd}:
\begin{equation}\label{eq:r_par}
r^{\mathrm{QKD}}_{\mathrm{par}}(\delta, c) = \alpha\left(\nu - \beta(\sqrt{c}+ \sqrt{\alpha}) -4\delta -h(4\delta) - \gamma\right).
\end{equation}

The parameters $\alpha$ and $\gamma$ in Protocol \ref{prot:QKD} and thus \eqref{eq:r_par} are as yet unspecified; we describe how to pick them now. For $c, \delta$ such that $\nu > \beta\sqrt{c} + 2h(4\delta)$, there exist choices of $\alpha, \gamma$ and values of $\Pr[\clE]$ for which the key rate given by Lemma \ref{thm:qkd-Hmin} is positive. We pick such $\alpha, \gamma$, and then the key rate achieved by the protocol with $cn$ leakage and $\delta$ noise is positive; $\alpha = O(\nu^2)$ is a valid choice.

In order to prove Lemma \ref{thm:qkd-Hmin}, we introduce some notation for states. Note that we have defined $\clE$ to be the abort event, but we can equivalently define it to be the event that $(a_i[0]\oplus a_i[1]\oplus a_i[2]=0)\land(b_i[0]\oplus b_i[1]\oplus b_i[2]=1)\land(a_i[y_i] = b_i[x_i])$ for at least $(1-2\delta)|T|$ many $i$-s in $T$. This way we can condition states of the protocol before Alice and Bob have communicated on $\clE$ as well, even though they cannot abort at this point. We use:
\nopagebreak

\begin{tabular}{lcp{10cm}}
$\rho_{K^\A K^\B X_SY_SA_TST\tE}$ & : & state conditioned on $\clE$ at the end of Protocol \ref{prot:QKD} \\
$\sigma_{K^\A K^\B X_SY_SA_TST\tE}$ & : & state after step \ref{step:bef-comm} in Protocol \ref{prot:QKD} \\
$\vph_{K^\A K^\B X_SY_SA_TST\tE}$ & : & state after step \ref{step:bef-comm} in Protocol \ref{prot:QKD} conditioned on $\clE$.
\end{tabular}

First we shall prove some lemmas about the states $\sigma$ and $\vph$, and then use them to get the final min-entropy bound on $\rho$.

\begin{lemma}\label{lem:virt-Hmin-2}
If the total communication in the interactive leakage model is at most $cn$ for some $c < 1$, and $\Pr[\clE] \geq 2\cdot 2^{-8\delta^2\alpha \gamma n}$, then
\[ \sfH^\eps_\infty(K^\A|X_SY_SS\tE)_\vph \geq \alpha\left(\nu - \beta(\sqrt{c}+\sqrt{\alpha}) - 4\delta\right)n - 1 - \log(1/\Pr[\clE])\]
for $\eps = 2\cdot 2^{-8\delta^2\alpha \gamma n}/\Pr[\clE]$.
\end{lemma}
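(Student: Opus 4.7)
The plan is to bootstrap Lemma \ref{lm:virt-Hmin-1}, which supplies a min-entropy bound on $\sigma$ that includes the auxiliary register $K^\V$, via two reductions: first, condition on $\clE$ to move from $\sigma$ to $\vph$, paying only $\log(1/\Pr[\clE])$; and second, smooth $\vph$ to a nearby state on which $K^\V$ has small classical support, so that the chain rule (Fact \ref{fc:Hinf-ch-rule}) can cheaply strip $K^\V$ away.

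The first ingredient I will need is a Serfling-based estimate identifying a high-probability subevent of $\clE$ on which $K^\V$ has small Hamming weight. Let $Z_i = \mathbbm{1}[a_i[y_i] = b_i[x_i]]$ for $i \in S$; by construction $K^\V_i$ is nonzero only when $Z_i = 0$. The abort test passes iff $\sum_{i \in T} Z_i \geq (1-2\delta)\gamma\alpha n$, and I define $\clE'$ to be the event $|K^\V| \leq 4\delta\alpha n$, equivalently $\sum_{i \in S} Z_i \geq (1-4\delta)\alpha n$. Applying Fact \ref{fc:serfling} to $(Z_i)_{i \in S}$ with Serfling parameter $2\delta$, base set $S$ of size $\alpha n$, and test subset $T$ of size $\gamma\alpha n$ then gives $\Pr_\sigma[\clE \land \lnot \clE'] \leq 2^{-8\delta^2\gamma\alpha n}$, and hence $\Pr_\vph[\lnot\clE'] \leq 2^{-8\delta^2\gamma\alpha n}/\Pr[\clE]$.

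Next, I transfer Lemma \ref{lm:virt-Hmin-1} from $\sigma$ to $\vph$: since $\vph$ is obtained from $\sigma$ by conditioning on the classical event $\clE$, the operator inequality $\vph \leq \sigma/\Pr[\clE]$ on the CQ extension gives $\sfH_\infty(K^\A K^\V | X_SY_SS\tE)_\vph \geq \alpha(\nu - \beta(\sqrt{c}+\sqrt{\alpha}))n - \log(1/\Pr[\clE])$. I then define $\vph' := \vph|_{\clE'}$, which satisfies $\Vert\vph - \vph'\Vert_1 \leq 2\Pr_\vph[\lnot \clE'] \leq \eps$; on $\vph'$ the classical register $K^\V$ is supported on strings of Hamming weight at most $4\delta\alpha n$, a set of size at most $2^{h_2(4\delta)\alpha n}$ (valid for $\delta \leq 1/8$).

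The final step removes $K^\V$ from the conditioning, and is where the main subtlety lies. A naive application of Fact \ref{fc:Hinf-ch-rule} would cost $\log|K^\V| = \alpha n$, which is prohibitively large. The fix is to exploit the bounded effective support of $K^\V$ on $\vph'$: a local isometry on $K^\V$ relabels it onto a register of dimension $2^{h_2(4\delta)\alpha n}$, under which the relevant conditional min-entropies are invariant by Fact \ref{fc:local-Hmin}, so Fact \ref{fc:Hinf-ch-rule} applied to the relabeled state yields $\sfH_\infty(K^\A|X_SY_SS\tE)_{\vph'} \geq \sfH_\infty(K^\A K^\V|X_SY_SS\tE)_{\vph'} - h_2(4\delta)\alpha n$. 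Combining this with the operator inequality $\vph' \leq \vph/\Pr_\vph[\clE']$ (which costs only $O(1)$ in min-entropy since $\Pr_\vph[\clE'] \geq 1/2$ in the parameter regime of interest) and with $\sfH^\eps_\infty(K^\A|\cdot)_\vph \geq \sfH_\infty(K^\A|\cdot)_{\vph'}$ from the definition of smoothed min-entropy, I obtain the claimed lower bound.
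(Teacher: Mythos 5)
Your proposal follows essentially the same route as the paper: condition $\sigma \to \vph$ via Fact~\ref{fc:event-prob}, introduce $\clE'$ (test passed and at most $4\delta|S|$ global disagreements), bound $\|\vph-\vph'\|_1$ via Serfling, and strip $K^\V$ with the chain rule. You are actually \emph{more} careful than the paper on two details. First, on the Serfling step: the paper states ``$\Pr[\clE']\geq 1-2^{-8\delta^2\alpha\gamma n}$'', which is not what Fact~\ref{fc:serfling} gives (it bounds the conjunction ``test passes $\land$ count small''); your version, $\Pr_\sigma[\clE\land\lnot\clE']\leq 2^{-8\delta^2\gamma\alpha n}$ hence $\Pr_\vph[\lnot\clE']\leq 2^{-8\delta^2\gamma\alpha n}/\Pr[\clE]$, is the correct inference and reaches the same $\ell_1$ bound. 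Second, on the chain-rule cost: conditioned on $\clE'$ the support of $K^\V$ is the set of $\alpha n$-bit strings of Hamming weight $\leq 4\delta\alpha n$, of size roughly $2^{h_2(4\delta)\alpha n}$; the positions of the nontrivial bits are not determined by the conditioning side $X_SY_SS\tE$, so the effective alphabet size is $2^{h_2(4\delta)\alpha n}$, not $2^{4\delta\alpha n}$. Your $-h_2(4\delta)\alpha n$ is the bound that Fact~\ref{fc:Hinf-ch-rule} actually supports; the paper's $-4\delta\alpha n$ appears to be a slight under-count.

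This means your argument proves a marginally weaker inequality than the lemma literally claims (with $h_2(4\delta)$ in place of $4\delta$), so you should not assert you ``obtain the claimed lower bound'' verbatim. This is immaterial downstream: Theorem~\ref{thm:qkd-Hmin} already upper-bounds $4\delta + h_2(4\delta) \leq 2h_2(4\delta)$, and replacing that by $2h_2(4\delta) \leq 3h_2(4\delta)$, say, leaves the qualitative conclusion and the constants $\delta_0, c_0$ intact. Two small further notes: Fact~\ref{fc:local-Hmin} as stated covers isometries on the conditioning register, not on $K^\V$ itself; the invariance you need (relabeling a classical register $K^\V$ onto its support) is true and easy, but is better justified directly from the definition of $\sfH_\infty$ than by citing that fact. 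And the extra $O(1)$ (in fact $o(1)$) loss from $\vph'\leq\vph/\Pr_\vph[\clE']$ that you flag is also silently present in the paper (which uses $\log(1/\Pr[\clE])$ where $\log(1/\Pr[\clE'])$ is the exact quantity); both are negligible.
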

\begin{proof}
Consider the $\MSe_{\text{rand}}^{\alpha n/n}$ game being played on the state shared by Alice, Bob and Eve (with $S$ being the random subset of size $\alpha n$, and $\MSe_{\text{rand}}^{\alpha n/n}$ being won if the instances in the random subset $S$ are won) in Protocol \ref{prot:QKD}. The variables $X_i, Y_i, A_i, B_i$ in the protocol are straightforwardly Alice and Bob's inputs and outputs in the game, and $C_i$ is Eve's guess for Alice's raw key bit. The variables $X'_i, Y'_i, Z_i, Z'_i$ do not actually exist in the protocol. As far as $X'_i, Y'_i$ are concerned, they are uncorrelated with $X_i, Y_i$, which are Alice and Bob's inputs, so we can treat them as inputs to Eve, since Eve can always have private randomness that is uncorrelated with Alice and Bob. Conditioning on $X'_i=X_i$ and $Y'_i=Y_i$ then corresponds to conditioning on Eve actually knowing Alice and Bob's inputs at those locations. The bits $Z_i$ and $Z'_i$ are only relevant at locations $i$ where $A_i[Y_i] \neq B_i[X_i]$. We use these variables in order to be able to apply Corollary \ref{cor:MSE-dpt} in our security proof. In order to apply the corollary, it is important that $\MSe$ be won on all the coordinates in $S$, even if $A_i[Y_i]\neq B_i[Z_i]$ at those locations; $Z_i, Z'_i$ are introduced so that it is still possible for the game to be won at these locations with $Z'_i=Z_i$.

Let $U_i$ be the indicator variable of the event that $X'_iY'_iC_i = X_iY_iA_i[Y_i]$, $V_i$ be the indicator variable for the event that $Z'_i = Z_i$ and $W_i$ be the indicator variable for the event that $(A_i[0]\oplus A_i[1]\oplus A_i[2]=0)\land(B_i[0]\oplus B_i[1]\oplus B_i[2]=1)\land(A_i[Y_i]=B_i[X_i])$ for $i\in S$. From Fact \ref{fc:guess-prob},
\begin{align*}
\Pr_\sigma\left[\prod_{i \in S}U_i \land (\lnot W_i \implies V_i)\right] = \Pr\left[\text{Win } \MSe^{\alpha n/n}_{\text{rand}}\right] \leq \left(\frac{1-\nu+\beta(\sqrt{c}+\sqrt{\alpha})}{9}\right)^{\alpha n}
\end{align*}
where we have used Corollary \ref{cor:MSE-dpt} along with the upper bound on communication in the last line.

Since $X_iY_i$ and $X'_iY'_i$ are uniformly random on a set of support size 9, we have that $X'_S=X_S$ and $Y'_S=Y_S$ with probability $(\frac{1}{9})^{|S|}$. Since this event is included in $\prod_{i\in S}U_i$, we have that,
\begin{align*}
\Pr_\sigma\left[\prod_{i \in S}U_i \land (\lnot W_i \implies V_i)\middle|X'_S=X_S\land Y'_S=Y_S\right] & \leq \left(\frac{1-\nu+\beta(\sqrt{c}+\sqrt{\alpha})}{9}\right)^{\alpha n}\cdot 9^{\alpha n} \\
& = \left(1-\nu+\beta(\sqrt{c}+\sqrt{\alpha})\right)^{\alpha n}.
\end{align*}

Let $\vph'$ denote $\sigma$ conditioned on the following event, which we call $\clE'$:
\[ \left(\sum_{i \in T}W_i \geq (1-2\delta)|T|\right) \land\left(\sum_{i\in S}W_i \geq (1-4\delta)|S|\right).\]
That is, $\clE'$ is a conjunction of $\clE$ and another event, and $\vph'$ is $\vph$ conditioned on this further event (since $\vph$ is $\sigma$ conditioned on $\clE$). We have that,
\[ \Pr_{\vph'}\left[\prod_{i \in S}U_i \land (\lnot W_i \implies V_i)\middle|X'_S=X_S\land Y'_S=Y_S\right] \leq \frac{\left(1-\nu+\beta(\sqrt{c}+\sqrt{\alpha})\right)^{\alpha n}}{\Pr[\clE']}. \]
By Fact \ref{fc:serfling}, $\Pr[\clE'] \geq \Pr[\clE] - 2^{-8\delta^2 \alpha\gamma n}$, which gives us $\Vert\vph - \vph'\Vert_1 \leq \frac{2\cdot 2^{-8\delta^2\alpha\gamma n}}{\Pr[\clE]} = \eps$. In $\vph'$, $A_i[Y_i]$ and $B_i[X_i]$ differ in at most $4\delta|S|$ many places in $S$; let us call this set $S'$. We note that $\Pr[\prod_{i\in S\setminus S'}W_i]=1$ in $\vph'$. Therefore,
\begin{align*}
\Pr_{\vph'}\left[\prod_{i\in S}U_i\land\prod_{i\in S'}V_i\middle|X'_S=X_S\land Y'_S=Y_S\right] & = \Pr_{\vph'}\left[\prod_{i\in S}U_i\land\prod_{i\in S'}V_i\land \prod_{i\in S\setminus S'}W_i\middle|X'_S=X_S\land Y'_S=Y_S\right] \\
 & = \Pr_{\vph'}\left[\prod_{i \in S}U_i \land (\lnot W_i \implies V_i)\middle|X'_S=X_S\land Y'_S=Y_S\right].
\end{align*}
Since $Z_i, Z'_i$ are uniformly random bits that are independent of every other variable, we have that $\Pr[\prod_{i\in S'}V_i]=\left(\frac{1}{2}\right)^{|S'|} \geq \left(\frac{1}{2}\right)^{4\delta \alpha n}$. This gives us
\begin{align*}
\Pr_{\vph'}\left[\prod_{i\in S}U_i\middle|X'_S=X_S\land Y'_S=Y_S\right] & = \Pr_{\vph'}\left[\prod_{i\in S}U_i\middle|X'_S=X_S\land Y'_S=Y_S\land\prod_{i\in S'}V_i\right] \\
& \leq \frac{\left(1-\nu+\beta(\sqrt{c}+\sqrt{\alpha})\right)^{\alpha n}}{\Pr[\clE']}\cdot 2^{4\delta \alpha n}.
\end{align*}
The above probability is Eve's guessing probability for $A_i[Y_i]$ in $\vph'$ conditioned on knowing $X_S, Y_S$ (and also having access to $S$, which Eve always knows, and Eve's quantum side information $\tE$). Thus by Fact \ref{fc:guess-prob} we have,
\begin{align*}
\sfH_\infty(K^\A |X_SY_SS\tE)_{\vph'} & \geq \alpha n\cdot \log\left(\frac{1}{1 - \nu + \beta(\sqrt{c}+\sqrt{\alpha})}\right) - \log(1/\Pr[\clE']) - 4\delta \alpha n \\
 & \geq \alpha\left(\nu - \beta(\sqrt{c}+\sqrt{\alpha})\right) - \log(1/\Pr[\clE']) - 4\delta \alpha n
\end{align*}
Using the lower bound on $\Pr[\clE]$ from the lemma statement, and the lower bound for $\Pr[\clE']$ in terms of $\Pr[\clE]$, we have that $\Pr[\clE'] \geq \Pr[\clE]/2$. Finally, since $\vph$ and $\vph'$ are $\eps$ apart, we have,
\[ \sfH^\eps_\infty(K^\A |X_SY_SS\tE)_{\vph} \geq \alpha\left(\nu - \beta(\sqrt{c}+\sqrt{\alpha}) - 4\delta\right)n - 1 - \log(1/\Pr[\clE]). \qedhere \]
\end{proof}

\begin{proof}[Proof of Lemma \ref{thm:qkd-Hmin}]
First we shall condition the conditional min-entropy bound from Lemma \ref{lem:virt-Hmin-2} further on $(T,A_T)$. Among these, $T$ is independent of $K^\A$, so conditioning on it makes no difference. $A_T$ is contained in $K^\A$, and uniformly random in $\{0,1,2\}^{|T|}$, with $T$ being of size $\alpha\gamma n$. Hence,
\[ \sfH^\eps_\infty(K^\A|X_SY_SA_TST\tE)_\vph \geq \alpha\left(\nu - \beta(\sqrt{c}+\sqrt{\alpha}) - 4\delta\right)n - 1- \log(1/\Pr[\clE]) - \alpha\gamma n.\]
Now notice that in $\rho$, $X_SY_SSTA_T$ is revealed to Eve, so she may do some operations on her side depending on these. $\rho$ is thus related to $\vph$ by some local operations on the registers $X_SY_SSTA_T\tE$. Hence by Fact \ref{fc:local-Hmin},
\[ \sfH^\eps_\infty(K^\A|X_SY_SA_TST\tE)_\rho \geq \alpha\left(\nu - \beta(\sqrt{c}+\sqrt{\alpha}) - 4\delta - \gamma\right)n - 1- \log(1/\Pr[\clE]).\]

Finally, to bound $\sfH^\eps_0(K^\A|K^\B)_\rho$, we consider the state $\rho'$, which is conditioned on the event $\clE'$ as defined in the proof of Lemma \ref{lem:virt-Hmin-2} instead of $\clE$ like $\rho$. They satisfy $\Vert\rho - \rho'\Vert_1 \leq \frac{2\cdot 2^{-8\delta^2\alpha\gamma n}}{\Pr[\clE]}$. The number of strings $K^\B$ of length that can differ from a given value of $K^\A$ in at most $4\delta|S|$ places is at most $2^{h(4\delta)|S|}$, which gives us $\sfH_0(K^\B|K^\A)_{\rho'} \leq h(4\delta)\alpha n$. Putting everything together we get,
\begin{align*}
\sfH^\eps_\infty(K^\A|X_SY_SA_TST\tE)_\rho - \sfH^\eps_0(K^\B|K^\A)_\rho & \geq \alpha\left(\nu - \beta(\sqrt{c}+\sqrt{\alpha}) - 4\delta - \gamma - h(4\delta)\right)n - 1- \log(1/\Pr[\clE]).
\end{align*}

To lower bound $\Pr[\clE]$ in the honest case when each instance of $\MS$ is won with probability $1-\delta$, we use the Chernoff bound. Letting $W_i$ denote the indicator variable for $A_i[Y_i] = B_i[X_i]$, the $W_i$-s are i.i.d. in this case, and the expected value of each $W_i$ is $1-\delta$. Hence
\[ \Pr[\lnot\clE] = \Pr\left[\sum_{i\in T} W_i < (1-2\eps)|T|\right] \leq 2^{-2\delta^2\gamma \alpha n}. \qedhere\]
\end{proof}

\subsection{Sequential DIRE and DIQKD with leakage}
We recall Theorem \ref{thm:leaky-re}.
\re*

First, we formally define  $\lambda$-secure $m\to l$ randomness expansion.
\begin{definition}\label{def:RE}
A protocol that takes as input a uniformly random $m$-bit string $R$ is called a $m\to l$ randomness expansion protocol if the protocol either aborts or returns a random $l$-bit string $Z$. The protocol is $\lambda$-secure iff
\begin{enumerate}
\item  (Soundness) If $\clE$ is the event that the protocol does not abort, and $\rho_{ZR\tE}$ is the quantum state comprising of the input and output of the protocol and the adversary Eve's side information conditioned on not aborting, then
\[ \Pr[\clE]\norm{\rho_{ZR\tE} - \frac{\Id}{2^l}\otimes\rho_{R\tE}}_1 \leq \lambda;\]
\item (Completeness) The honest implementation of the protocol aborts with probability at most $\lambda$.
\end{enumerate}
\end{definition}

Protocols \ref{prot:seq-QKD} and \ref{prot:RE} referred to in Theorem \ref{thm:leaky-re} are given below. They are sequential protocols, which make use of the following equipment:
\begin{itemize}[(i)]
\item Two sets of boxes $\clB^\A_1\ldots \clB^\A_n$ and $\clB^\B_1\ldots \clB^\B_n$, which are capable of taking inputs and providing outputs sequentially. The honest behaviour of boxes $\clB^\A_i$ and $\clB^\B_i$ is to play a copy of MS $\delta$-noisly, and independent of the other boxes;
\item A uniform source of input randomness.
\end{itemize}
Additionally, Protocol \ref{prot:seq-QKD} makes use of an authenticated classical channel between Alice and Bob just like Protocol \ref{prot:QKD}.

We shall describe Protocol \ref{prot:RE} as Alice doing things with boxes $\clB^\A_1\ldots \clB^\A_n$, and Bob doing things with $\clB^\B_1\ldots \clB^\B_n$, although in the actual DIRE setup, it is only one party playing the roles of both Alice and Bob.
In both protocol descriptions, it should be understood that the $(i+1)$-th round's input to Bob's box is entered after the $i$-th round's outputs are obtained from Alice's box, and vice versa. As before, the protocols as described here are not complete --- error correction and privacy amplification need to be performed in the QKD protocol, and a step analogous to privacy amplification (using randomness extractors) needs to be performed in the RE protocol. 
\begin{algorithm}[!h]
\caption{Sequential DIQKD protocol (with parameters $\gamma, \delta$)}
\label{prot:seq-QKD}
\begin{algorithmic}[1]
\For{$i=1$ to $n$ sequentially}
\State Alice chooses $x_i \in \{0,1,2\}$ uniformly at random from private randomness, inputs it into her box $\clB^\A_i$, and records the output $a_i$ \;
\State Bob chooses $y_i \in \{0,1,2\}$ uniformly at random from private randomness, inputs it into his box $\clB^\B_i$, and records the output $b_i$ \;
\EndFor
\State Alice chooses $T \subseteq [n]$ of size $\gamma n$  uniformly at random from private randomness \;
\State Alice sends $(T,x,a_T)$ to Bob using the public channel \;
\State Bob sends $y$ to Alice using the public channel \;
\State Bob tests if $(a_i[0]\oplus a_i[1]\oplus a_i[2]=0)\land(b_i[0]\oplus b_i[1]\oplus b_i[2]=1)\land(a_i[y_i] = b_i[x_i]=1)$ for at least $(1-2\delta)|T|$ many $i$-s in $T$ \; \label{step:test-re}
\If{the test fails}
\State Bob aborts the protocol \;
\Else
\State Alice sets $K^\A = (a_i[y_i])_{i=1}^n$ and Bob sets $K^\B = (b_i[y_i])_{i=1}^n$ as their raw outputs.
\EndIf
\end{algorithmic}
\end{algorithm}

\begin{algorithm}[!h]
\caption{Sequential DIRE protocol (with parameter $\delta$)}
\label{prot:RE}
\begin{algorithmic}[1]
\For{$i=1$ to $n$ sequentially}
\State Alice chooses $x_i \in \{0,1,2\}$ uniformly at random from seed randomness, inputs it into her box $\clB^\A_i$, and records the output $a_i$ \;
\State Bob chooses $y_i \in \{0,1,2\}$ uniformly at random from seed randomness, inputs it into his box $\clB^\B_i$, and records the output $b_i$ \;
\EndFor
\State Alice and Bob test if $(a_i[0]\oplus a_i[1]\oplus a_i[2]=0)\land(b_i[0]\oplus b_i[1]\oplus b_i[2]=1)\land(a_i[y_i] = b_i[x_i]=1)$ for at least $(1-2\delta)n$ many $i$-s in $[n]$ \; \label{step:test-re}
\If{the test fails}
\State The protocol is aborted \;
\Else
\State $K = (a_i[y_i])_{i=1}^n$ is set as the raw output.
\EndIf
\end{algorithmic}
\end{algorithm}
Note that for randomness expansion, Alice and Bob's winning condition for Magic Square is tested on the whole of $[n]$ instead of a subset $T$. This is because the testing data does not need to be communicated over the public channel in this case, and thus does not need to be subtracted from the key rate. Using randomness extractors, in order to show condition 1 of Definition \ref{def:RE}, it is enough to bound the min-entropy of $K$ given the randomness used in the protocol, which is $XY$ (these can be generated using $n\log 9$ uniformly random bits), as well as Eve's side information $\tE$. The length of the final uniformly random string extracted is given by this min-entropy (again excluding the $\log(1/\Pr[\clE])$ factor).

Note that it is in principle possible to use the Leftover Hashing Lemma for randomness expansion as well, in which case it would have been fine to lower bound the Renyi-2 entropy rather than the min-entropy. However, the Leftover Hashing Lemma uses a lot of extra random bits. This is not a problem in QKD, where private randomness is a free resource, but in randomness expansion we want to minimize the amount of initial randomness as much as possible. The most efficient quantum-proof randomness extractors which use $\polylog(n)$-length random seeds require a lower bound on the min-entropy rather than the Renyi-2 entropy \cite{DPVR12}, and so we shall be working with that. The overall randomness used in the RE protocol are the $n\log 9$ bits for $XY$, and the $\polylog(n)$ bits of seed randomness for the extractor.

We prove the following lemmas about Protocols \ref{prot:seq-QKD} and \ref{prot:RE}. Lemma \ref{lem:seq-QKD-H2} straightforwardly gives $r^{\mathrm{QKD}}_{\mathrm{seq}}(\delta, c) = \log(1/(1-\nu)) -4\delta -h(4\delta) - \gamma - c$ in Theorem \ref{thm:leaky-re}. Picking  $\eps=O(\lambda)$ (where $\lambda$ is the security parameter desired for Definition \ref{def:RE})  in Lemma \ref{lem:re-Hmin} gives $r^{\mathrm{RE}}_{\mathrm{seq}}(\delta, c) = \log(1/(1-\nu)) - 2\delta - c$  in Theorem \ref{thm:leaky-re}. This extra factor of $\log(1/\eps) = \log(1/\lambda) + O(1)$ is what leads the final key rate for RE in Theorem \ref{thm:leaky-re} to have an additive $2\log(1/\lambda)$ instead of $\log(1/\lambda)$.
\begin{lemma}\label{lem:seq-QKD-H2}
Let $\rho_{K^\A K^\B XYA_TTE}$ be the state of Alice's and Bob's raw keys and Eve's side information conditioned on not aborting in Protocol \ref{prot:seq-QKD}. If the total communication in the sequential interactive model is $cn$ for some $c<1$, and $\Pr[\clE] \geq 2\cdot 2^{8\delta^2\gamma n}$, then the state $\rho$ satisfies
\[ \sfH_2^\eps(K^\A|XYT\tE)_\rho - \sfH_0^\eps(K^\A|K^\B)_\rho \geq \log\left(\frac{1}{1-\nu}\right) n - (4\delta + h(4\delta)+c)n - 1 - \log(1/\Pr[\clE]), \]
where $\clE$ is the event that the protocol does not abort, and $\eps = \frac{2\cdot 2^{-8\delta^2\alpha\gamma n}}{\Pr[\clE]}$. Moreover, when $\clB^\A_1\ldots \clB^\A_n$ and $\clB^\B_1\ldots \clB^\B_n$ have their $\delta$-noisy behaviour, then $\Pr[\clE] \geq 1 - 2^{-2\delta^2 \gamma n}$.
\end{lemma}
\begin{lemma}\label{lem:re-Hmin}
Let $\rho_{KXY\tE}$ be the state of the raw output, seed randomness and Eve's side information conditioned on not aborting in Protocol \ref{prot:RE}. If the total communication in the sequential interactive leakage model is $cn$ for some $c < 1$, then for any $\eps \in (0,1)$, the state $\rho$ satisfies
\[ \sfH^\eps_\infty(K|XY\tE)_\rho \geq \log\left(\frac{1}{1-\nu}\right) n - (2\delta +c)n - \log(1/\Pr[\clE]) -\log(2/\eps),\]
where $\clE$ is the event that the protocol does not abort. Moreover, when $\clB^\A_1\ldots \clB^\A_n$ and $\clB^\B_1\ldots \clB^\B_n$ have their $\delta$-noisy behaviour, then $\Pr[\clE] \geq 1 - 2^{-2\delta^2 n}$.
\end{lemma}
We shall only give the proof of Lemma \ref{lem:re-Hmin} because the only parts of Protocol \ref{prot:seq-QKD} that are different from Protocol \ref{prot:RE} are similar to Protocol \ref{prot:QKD} instead. However, the following point is worth highlighting: in Lemma \ref{lem:seq-QKD-H2}, the smoothing parameter $\eps$ in $\sfH_2^\eps$ is due to the fact that we test the winning condition on a subset and then generalize to the whole of $[n]$: this comes from the Serfling bound and we cannot freely pick this parameter. In Lemma \ref{lem:re-Hmin}, we first lower bound the unsmoothed $\sfH_2$ --- smoothing is not required because we test on the whole of $[n]$ and do not need to apply the Serfling bound --- and then we convert from $\sfH_2$ to $\sfH^\eps_\infty$. We can freely pick the $\eps$ we use for this smoothing, and we shall pick it to be $O(\lambda)$ in order to get $\lambda$ soundness in Definition \ref{def:RE}.

\begin{proof}[Proof of Lemma \ref{lem:re-Hmin}]
To prove this lemma, we are going to lower bound $\sfH_2(K|XY\tE)_\rho$ by upper bounding the winning probability of $n$ copies of $\MSe$ being won when the games are played sequentially, with $cn$ sequential leakage. First, note that we can assume Eve holds a purification of Alice and Bob's quantum state, and since $\sfH_2$ is invariant under local isometries by Fact \ref{fc:local-Hmin}, for the purposes of computing $\sfH_2$, we can assume Eve holds the canonical purification. If we let $\sigma_{XYK\tE}$ denote the state at the end of the protocol without conditioning on $\clE$, then the $K$ register of this state is obtained by doing a measurement on Alice and Bob's part of the initial shared state, and then recording the outcome in $K$ (there were actually separate measurements on Alice and Bob's parts of the state, and $K$ is actually only the outcome of Alice's measurement --- but we can think of this as the coarse-grained outcome of a joint measurement on Alice and Bob's parts). Moreover, we can also think of the $XY$ registers as being obtained by Alice and Bob by doing a measurement in the computational basis on a state of the form $\sum_{xy}\sqrt{\sfP_{XY}(xy)}\ket{xy}$; this state is already pure, and its canonical purification is just another copy of it, which we shall assume is included in $\tE$. By Fact \ref{fc:H2-prob}, $\sfH_2(KXY|\tE)_\sigma$ is then the log of the inverse of Eve's probability of guessing $XYK$ by doing the same measurements that Alice and Bob did to obtain them, which means sequential measurements.

Now we would like to interpret the probability of winning $n$ sequential copies of $\MSe$ (witn $cn$ leakage) as a guessing probability for Eve, doing sequential measurements. The event $C_i=A_i[Y_i]$ in the game winning condition is obviously the event that Eve guesses $K_i$ --- this is not necessarily by performing the same measurement as Alice, but since we are interested in lower bounding $\sfH_2$, considering the event that Eve learns $K$ by doing arbitrary sequential measurements is fine. The event $X_i=X_i', Y_i=Y_i'$ (for each $i$) in the winning condition of $\MSe$ is actually the same event as Eve doing the same measurements as Alice and Bob on her copy of the state $\sum_{xy}\sqrt{\sfP_{XY}(xy)}\ket{xy}$ in her purification and getting the same outcomes as Alice and Bob (where we identify $X'$ and $Y'$ with Eve's outcomes). We deal with the $(A_i[Y_i]=B_i[X_i])\lor(Z_i=Z'_i)$ condition in winning the game in a similar way as we did in Lemma \ref{lem:virt-Hmin-2}; Alice and Eve will obtain $Z$ and $Z'$ by measuring, and we shall count Eve's guessing probability at locations where $A_i[Y_i]\neq B_i[X_i]$. We can also go from the guessing probability in $\sigma$ to the guessing probability in $\rho$ with a factor of $\Pr[\clE]$ as before. Since $A_i[Y_i]\neq B_i[X_i]$ in at most $2\delta n$ many locations in $\rho$, we have as before,
\[ \sfH_2(KXY|\tE)_\rho \geq -\log\left(\frac{\Pr[\clE]\cdot 2^{2\delta n}}{\Pr_\sigma[\text{Win $n$ sequential copies of } \MSe]}\right).\]
What remains for us to show is that the sequential winning probability of $n$ copies of $\MSe$ with $cn$ bits of leakage is $\left(\frac{1-\nu}{9}\right)^n\cdot 2^{cn}$. If we can show this, then the above expression gives us the lower bound on $\sfH_2$ we need, after removing the $XY$ in the first argument of $\sfH_2$ by multiplying the probability with $9^n$ as before (we have $9^n$ here instead of $9^{\alpha n}$ since there is no subset $S$ of size $\alpha n$). Applying Fact \ref{fc:H2-Hmin}, we then get the lemma.

First we shall upper bound the winning probability of $n$ copies of $\MSe$ being won sequentially without any leakage. Note that when the games are played sequentially, the outputs (and inputs) of the 1st to $i$-th games are uncorrelated with the inputs of the $(i+1)$-th game, although the shared state during the $(i+1)$-th game depends on the inputs and outputs of the previous games. We can consider the shared state during the $(i+1)$-th game conditioned on any values of the inputs and outputs of the previous games. Regardless of the values we conditioned on, a fresh copy of $\MSe$ is played in the $(i+1)$-th round, and the winning probability of this is at most $\frac{1-\nu}{9}$.\footnote{We could not make this argument in the parallel case because the outputs of games 1 to $i$ are correlated with the inputs of the $(i+1)$-th game, so the shared state for the $(i+1)$-th game would depend on the inputs of the $(i+1)$-th game if we condition on some values of the inputs and outputs of the previous games.} Inducting from the 1st to the $n$-th game we thus have that the probability of winning $n$ copies of $\MSe$ sequentially without leakage of at most $\left(\frac{1-\nu}{9}\right)^n$.

Now to deal with leakage, we shall provide an argument similar to that of Lemma \ref{lem:eff-lb}. We shall take a sequential protocol $n$ copies of $\MSe$ with $cn$ leakage and convert it to a protocol without leakage, which has a winning probability that is smaller by a factor of $2^{-cn}$; the extra thing we need to make sure of here is that the protocol without leakage is also sequential. This is automatically true if we do a simulation similar to that in the proof of Lemma \ref{lem:eff-lb}, where the players share randomness and use it to guess the messages leaked between the inputs of the $i$-th game being entered and the outputs being produced by the devices, for each $i$. There is no option of outputting $\bot$ in this setting, so if at any point the players notice that the shared randomness is not consistent with their input and measurement outcomes, then they produce random outputs. Suppose we have done the simulation up to the $i$-th round, and we condition on particular values of the inputs and outputs for the rounds up to $i$, as well as the event that the shared randomness matches the leaked messages up to the $i$-th round. The input distribution for the $(i+1)$-th game as well as the block of shared randomness that is supposed to be used for the $(i+1)$-th game are unaffected by this conditioning, so effectively a fresh copy of $\MSe$ is being played as the $(i+1)$-th game. If $c^{i+1}$ bits are leaked in the $(i+1)$-th game and its winning probability with the leakage if $p_{i+1}$, then the winning probability without leakage is at least $p_{i+1}\cdot 2^{-c^{i+1}}$ (here we are not counting the probability of the game being won when the shared randomness do not match the leaked bits and the players output randomly). Thus, if the overall winning probability with leakage is $p$, then the winning probability without leakage is at least $p\cdot 2^{-cn}$. Since we know that the winning probability without leakage is at most $\left(\frac{1-\nu}{9}\right)^n$, this means that $p$ is at most $\left(\frac{1-\nu}{9}\right)^n\cdot 2^{cn}$.

We have $\Pr[\clE] \geq 1-2^{-2\delta^2n}$ by the Chernoff bound in the honest case here; there is no factor of $\gamma$ in the exponent since the testing is done on the whole set instead of $T$.
\end{proof}
We note that in the above proof, we upper bounded the guessing probability of Eve for a single $A_i[Y_i]$ while doing the same measurement as Alice, by her overall best guessing probability for $A_i[Y_i]$. If we do not do this, we could potentially get the improvement mentioned in Section \ref{sec:QKDRE-results}, by upper bounding the probability of winning $n$ sequential games under this constraint.  

\subsection{Proof of quantumness with two players and leakage}
We recall Theorem \ref{thm:leaky-quantum}.
\quantum*

We first provide a security definition for proof of quantumness with two provers who are allowed to communicate a bounded amount.
\begin{definition}
A proof of quantumness with two provers is an interactive protocol between a verifier and two provers; the amount of interaction between the verifier and the provers separately is not bounded, but the amount of interaction between the two provers is. At the end of the protocol, the verifier outputs either $\top$ (indicating acceptance) or $\bot$ (indicating rejection). We say the protocol has correctness and soundness parameter $\lambda$ against $C(\lambda) = O(\log(1/\lambda))$ leakage iff
\begin{enumerate}
\item (Correctness) There exists a $\polylog(1/\lambda)$-time strategy that two quantum provers who share entanglement but do not interact can implement so that the verifier outputs $\top$ with probability at least $1-\lambda$.
\item (Soundness) For any strategy that two classical provers who do not share entanglement but communicate at most $C(\lambda)$ bits can implement, the verifier outputs $\top$ with probability at most $\lambda$.
\end{enumerate}
\end{definition}
Protocol \ref{prot:quantum} that realizes Theorem \ref{thm:leaky-quantum} is described below. The protocol involves one round of communication from the verifier to each prover, and one round of communication from each prover to the verifier. The verifier can communicate to the two provers simultaneously or in any order, and the two provers can communicate back simultaneously or in any order, which is why we call this a 2-round protocol. The protocol requires the prover to have access to private randomness. The parameter $\delta$ used in the protocol is a noise parameter --- it allows quantum provers who implement a noisy version of the ideal strategy to be accepted by the verifier.
\begin{algorithm}[!h]
\caption{Proof of quantumness protocol with two provers (with parameter $\delta$)}
\label{prot:quantum}
\begin{algorithmic}[1]
\State The verifier chooses $x_1\ldots x_n \in \{0,1,2\}^n$ and $y_1\ldots y_n \in \{0,1,2,\}^n$ uniformly at random \;
\State The verifier sends $x_1\ldots x_n$ to Prover 1, and $y_1\ldots y_n$ to Prover 2 \;
\State Prover 1 and Prover 2 send $a_1\ldots a_n$ and $b_1\ldots b_n$ respectively to the verifier
\State The verifier outputs $\top$ if $(a_i[0]\oplus a_i[1]\oplus a_i[2]=0)\land(b_i[0]\oplus b_i[1]\oplus b_i[2]=1)\land(a_i[y_i] = b_i[x_i])$ for at least $(1-2\delta)n$ many $i$-s in $[n]$, and outputs $\bot$ otherwise.
\end{algorithmic}
\end{algorithm}
\begin{lemma}
For two quantum provers who share an entangled state that plays $n$ independent copies of the Magic Square game with at most $\delta$ noise, and provide $a_1\ldots a_n$ and $b_1\ldots b_n$ as the outputs of the Magic Square game on inputs $x_1\ldots x_n$ and $y_1\ldots y_n$ respectively, the verifier outputs $\top$ with probability at least $1-2^{-2\delta^2n}$. On the other hand, for two classical provers who do not share entanglement but communicate at most $cn$ bits, the verifier outputs $\top$ with probability at most $2^{-\Omega(1/9 - 2\delta)^3n + cn}$.
\end{lemma}
\begin{proof}
The correctness property for two quantum provers with an entangled state that plays $n$ copies of the Magic Square game $\delta$-noisily follows from the Chernoff bound. The soundness property for classical players follows from Corollary \ref{cor:MS-cl-leaky}.
\end{proof}


\section*{Acknowledgements}
We thank Ernest Tan and Tony Metger for helpful discussions on the security of DIQKD with leakage, in particular on the difficulties of applying known sequential proof techniques, and for pointing out references \cite{SPM13,TZB+20,TZWP20}.

This work was done in part while S.K. was at the Centre for Quantum Technologies (CQT), National University of Singapore. Research at CQT is supported by the National Research Foundation, including under NRF RF Award No. NRF-NRFF2013-13, the Prime Minister's Office, Singapore and the Ministry of Education, Singapore, under the Research Centres of Excellence program and by Grant No. MOE2012-T3-1-009 and in part by the NRF2017-NRF-ANR004 {\em VanQuTe} Grant. S. K. is currently funded by the NSERC Canada Discovery Grants Program and Fujitsu Labs America; research at the Institute for Quantum Computing (IQC) is supported by Innovation, Science and Economic Development (ISED) Canada.


\bibliographystyle{alpha}
\bibliography{two-way-DIQKD}

\appendix

\section{Proof of Yao's lemma}\label{ap:yao}
In order to prove this lemma, we shall use Sion's minimax theorem \cite{Sion58}. Sion's minimax theorem is true for quasisaddle and semicontinuous functions in general. However, we shall only need to apply it to continuous and saddle functions, so we shall state it for such below. A function $f:\clX\times\clY\to\bbR$ is called \emph{saddle} if $f(\cdot,y)$ is convex as a function of $x$ for each fixed $y\in\clY$, and $f(x,\cdot)$ is concave as a function of $y$ for each fixed $x\in\clX$.
\begin{fact}[\cite{Sion58}]
Suppose $\clX$ and $\clY$ are convex spaces, one of which is compact, and $f:\clX\times\clY\to\bbR$ is a continuous (in both arguments) and saddle function. Then,
\[ \inf_{x\in\clX}\sup_{y\in\clY}f(x,y) = \sup_{y\in\clY}\inf_{x\in\clX}f(x,y).\]
\end{fact}

The set $\clY$ that we shall be considering a supremum over in our application of Sion's minimax lemma is going to be the set of probability distributions over the inputs to a communication protocol. Since the set of inputs is finite, this set is compact. It is also obviously convex.

A quantum protocol $\clP$ running on input $x=x^1\ldots x^l$ gives rise to a distribution $\sfP_{A|\clP,x}$ on the outputs. Let $\sfP_{A|\clP}$ denote the tuple $(\sfP_{A|\clP,x})_{x\in\clX^1\times\ldots\times\clX^l}$, which is the entire output distribution of $\clP$. Let $\bbP^\Q_t$ denote the set of $\sfP_{A|\clP}$ for quantum protocols $\clP$ with at most $t$ communication. Given two protocols $\clP_1$ and $\clP_2$ between $l$ players, with at most $r$ rounds and $t$ communication, their convex combination can be implemented by the players sharing an appropriate superposition of $\ket{0^l}$ and $\ket{1^l}$ as entanglement, and implementing the steps of $\clP_1$ or $\clP_2$ controlled on their part of the shared entanglement being $\ket{0}$ or $\ket{1}$. The resultant protocol $\clP$ has at most $r$ rounds and $t$ communication. Therefore, the set $\bbP^\Q_t$ is convex. This will be the set $\clX$ we take infimum over in Sion's minimax theorem.

For a protocol $\clP$ and input $x$, let $\err_\sfV(\clP,x)$ denote the error that $\clP$ makes on input $x$ for predicate $\sfV$. With some abuse of notation, we shall also use $\err_\sfV(\clP,\mu)$ to denote the average error of $\clP$ over an input distribution $\mu$. Note that $\err_\sfV(\clP,x)$ and $\err_\sfV(\clP,\mu)$ can be defined using just $\sfP_{A|\clP}$, and with abuse of notation we shall use $\err_\sfV(\sfP_{A|\clP},x)$ and $\err_\sfV(\sfP_{A|\clP},\mu)$ to denote these. The set of all quantum protocols which communicate at most $t$ qubits has a somewhat complicated structure, but the infimum of $\err_\sfV(\clP,\mu)$ over this set is the same as the infimum of $\err_\sfV(\sfP_{A|\clP},\mu)$ over $\bbP^\Q_t$.

For a fixed $\clP$, $\err_\sfV(\clP,\cdot)$ (or $\err_\sfV(\sfP_{A|\clP},\cdot)$) is linear\footnote{Here by `linear' we mean a function that is both convex and concave.} in the second argument, because the error for a convex combination of distributions is just the convex combination of the errors for the individual distributions. Similarly, the error for a convex combination of algorithms is the convex combination of errors for the individual algorithms, i.e., $\err_\sfV(\cdot,\mu)$ is linear in the first argument. Therefore $\err_f$ is saddle, and is also continuous in both arguments.

The function $\err_\sfV$, the set $\bbP^\Q_t$ and the set of probability distributions satisfy the conditions of Sion's minimax theorem. Therefore we can say,
\[
\inf_{\sfP_{A|\clP}\in\bbP^\Q_t}\sup_\mu\, \err_\sfV(\sfP_{A|\clP},\mu) = \sup_\mu\inf_{\sfP_{A|\clP}\in\bbP^\Q_t}\,\err_\sfV(\sfP_{A|\clP},\mu).
\]
It is clear that the supremum over $\mu$ on the left-hand side of the above equation is the same as the supremum over inputs $x$. Now consider $t=\Q_\eps(\sfV)-1$. By the definition of $\Q_\eps(\sfV)$, the left-hand sides of the equation must then be strictly greater than $\eps$. Moreover, the sup on the right-hand side is a max, since the set of distributions is compact. Hence, there is a distribution $\mu$ such that all algorithms making less than $\Q_\eps(\sfV)$ queries must make more than $\eps$ error over inputs from $\mu$, i.e.,
\[ \max_\mu\Q_\eps(\sfV,\mu) \geq \Q_\eps(\sfV). \]
Moreover $\Q_\eps(\sfV) \geq \max_\mu\Q_\eps(\sfV,\mu)$ also holds, since an algorithm that makes $\eps$ error in the worst case also makes at most $\eps$ error averaged over $\mu$. This completes the proof of the lemma.

\section{Proof of Lemma \ref{int-holevo}}\label{ap:int-holevo}
We shall do induction on the number of rounds. Let $c_i$ be the communication in the $i$-th round and $\clR^j = \{j, j+l, \ldots \}$ denote the set of rounds in which the $j$-th player communicates, so that $\sum_{i \in \clR^j}c_i = c^j$. Let $M_i$ be the message register of the $i$-th round, $E_i$ be the memory register the party who communicates in the $i$-th round holds after sending their message. For $i \in \clR^j$, the registers held by the $j$-th party at the beginning of the $i$-th round are messages $M^j_{i-l+1}\ldots M^j_{i-1}$ from other parties in the $(i-l+1)$-th to $(i-1)$-th rounds, which we shall jointly denote by $N^j_{i-1}$, and their memory register $E_{i-l}$ which they have retained from the $(i-l)$-th round. We shall denote all other (non-input) registers held by parties other than the $j$-th party at the beginning of the $I$-th round by $F^{-j}_{i-1}$. Since $i \in \clR^j$, clearly $F^{-j}_i = F^{-j}_{i-1}M_i$. Using $X$ to denote $X^1\ldots X^l$ and similar notation for $\tX$, we shall call the shared state including the input purifications at the beginning of the the $i$-th round 
\[ \ket{\sigma^i}_{X\tX N^j_{i-1}E_{i-l}F^{-j}_{i-1}} = \sum_x\sqrt{\sfP_X(x)}\ket{xx}_{X\tX}\ket{\sigma^i}_{N^j_{i-1}E_{i-l}F^{-j}_{i-1}}.\]

For the base case $i=1$, communication is zero. Since $\sfP_{X^1\ldots X^l}$ is a product distribution, $\sigma^1_{X^jX^{-j}\tX^{-j} F^{-j}_0}$ is product between $X^j$ and the other registers, $F^{-j}_0$ being simply the other parties' parts of the initial shared entangled state, which is independent of the inputs. So the condition trivially holds. For the induction step, we shall assume the condition
\[ \sfD_\infty\left(\sigma^i_{X^jX^{-j}\tX^{-j}F^{-j}_{i-1}}\middle\Vert\sigma^i_{X^j}\otimes\rho^i_{X^{-j}\tX^{-j}F^{-j}_{i-1}}\right) \leq 2\sum_{\substack{i' \in \clR^j, \\ i' < i}}c_{i'}\]
holds at the beginning of the $i$-th round, where $i \in \clR^j$, for some state $\rho^i_{X^{-j}\tX^{-j}F^{-j}_{i-1}}$, and see how it changes in the $i$-th to $(i+l-1)$-th rounds.

In the $i$-th round, the $j$-th party applies a unitary on the $X^jN^j_{i-1}E_{i-l}$ registers, getting registers $X^jM_iE_i$. By Fact \ref{dim-ub}, there exists a state $\tilde{\rho}^{i+1}_{M_i}$ such that
\[ \sfD_\infty\left(\sigma^{i+1}_{X^jX^{-j}\tX^{-j}F^{-j}_{i-1}M_i}\middle\Vert\sigma^{i+1}_{X^jX^{-j}\tX^{-j} F^{-j}_{i-1}}\otimes\tilde{\rho}^{i+1}_{M_i}\right) \leq 2c_i.\]
Now note that the marginal states $\sigma^i_{X^jX^{-j}\tX^{-j}F^{-j}_{i-1}}$ and $\sigma^{i+1}_{X^jX^{-j}\tX^{-j} F^{-j}_{i-1}}$ are exactly the same, since the unitary relating $\ket{\sigma^i}$ and $\ket{\sigma^{i+1}}$ does not act on $X^{-j}\tX^{-j} F^{-j}_{i-1}$ at all, and only uses $X^j$ as a control register. Hence we have,
\begin{align*}
& \sfD_\infty\left(\sigma^{i+1}_{X^jX^{-j}\tX^{-j} F^{-j}_{i-1}}\otimes\tilde{\rho}^{i+1}_{M_i}\middle\Vert\sigma^{i+1}_{X^j}\otimes\rho^i_{X^{-j}\tX^{-j} F^{-j}_{i-1}}\otimes\tilde{\rho}^{i+1}_{M_i}\right) \\
& = \sfD_\infty\left(\sigma^i_{X^jX^{-j}\tX^{-j} F^{-j}_{i-1}}\otimes\tilde{\rho}^{i+1}_{M_i}\middle\Vert\sigma^i_{X^j}\otimes\rho^i_{X^{-j}\tX^{-j} F^{-j}_{i-1}}\otimes\tilde{\rho}^{i+1}_{M_i}\right) \\
& = \sfD_\infty\left(\sigma^i_{X^jX^{-j}\tX^{-j} F^{-j}_{i-1}}\middle\Vert\sigma^i_{X^j}\otimes\rho^i_{X^{-j}\tX^{-j} F^{-j}_{i-1}}\right) \\
& \leq 2\sum_{\substack{i' \in \clR^j, \\ i' < i}}c_{i'}.
\end{align*}
Now using Fact \ref{fc:Sinfty-tri} we can say,
\begin{align*}
& \sfD_\infty\left(\sigma^{i+1}_{X^jX^{-j}\tX^{-j}F^{-j}_{i-1}M_i}\middle\Vert\sigma^{i+1}_{X^j}\otimes\rho^i_{X^{-j}\tX^{-j} F^{-j}_{i-1}}\otimes\tilde{\rho}^{i+1}_{M_i}\right) \\
& \leq \sfD_\infty\left(\sigma^{i+1}_{X^jX^{-j}\tX^{-j}F^{-j}_{i-1}M_i}\middle\Vert\sigma^{i+1}_{X^jX^{-j}\tX^{-j} F^{-j}_{i-1}}\otimes\tilde{\rho}^{i+1}_{M_i}\right) \\
& \quad + \sfD_\infty\left(\sigma^{i+1}_{X^jX^{-j}\tX^{-j} F^{-j}_{i-1}}\otimes\tilde{\rho}^{i+1}_{M_i}\middle\Vert\sigma^{i+1}_{X^j}\otimes\rho^i_{X^{-j}\tX^{-j} F^{-j}_{i-1}}\otimes\tilde{\rho}^{i+1}_{M_i}\right) \\
& \leq 2c_i + 2\sum_{\substack{i' \in \clR^j, \\ i' < i}}c_{i'} \\
& = 2\sum_{\substack{i' \in \clR^j, \\ i' \leq i}}c_{i'}.
\end{align*}
Hence the condition holds at the beginning of the $(i+1)$-th round with $\rho^{i+1}_{X^{-j}\tX^{-j} F^{-j}_{i-1}M_i} = \rho^i_{X^{-j}\tX^{-j} F^{-j}_{i-1}}\otimes\tilde{\rho}^{i+1}_{M_i}$.

In the $(i+1)$-th round, the $(j+1)$-th player applies a unitary on the $X^{j+1}N^{j+1}_iE_{i-l+1}$ registers, getting registers $X^jM^1_{i+1}\ldots M^j_{i+1}\ldots M^l_{i+1}E_{i+1}$, of which they send $M^j_{i+1}$ to the $j$-th player. So after this round, the registers held by the $j$-th player are $E_iM^j_{i+1}$, and $F^{-j}_{i+1}$ does not include $M^j_{i+1}$. By Fact \ref{fc:u-inv} we have that,
\begin{align*}
\sfD_\infty\left(\sigma^{i+2}_{X^jX^{-j}\tX^{-j} M^j_{i+1}F^{-j}_i}\middle\Vert \sigma^{i+2}_{X^j}\otimes \rho^{i+2}_{X^{-j}\tX^{-j} M^j_{i+1}F^{-j}_i}\right) & = \sfD_\infty\left(\sigma^{i+1}_{X^jX^{-j}\tX^{-j}F^{-j}_{i-1}M_i}\middle\Vert\sigma^{i+1}_{X^j}\otimes \rho^{i+1}_{X^{-j}\tX^{-j} F^{-j}_{i-1}M_i}\right)\\
& \leq 2\sum_{\substack{i' \in \clR^j, \\ i' \leq i}}c_{i'}
\end{align*}
where $\rho^{i+2}$ is the state obtained by applying the $(j+1)$-th player's unitary in the $(i+1)$-th round to $\rho^{i+1}$. From this we can trace out the $M^j_{i+1}$-th register to show that
\[ \sfD_\infty\left(\sigma^{i+2}_{X^jX^{-j}\tX^{-j} F^{-j}_{i+1}}\middle\Vert \sigma^{i+2}_{X^j}\otimes \rho^{i+2}_{X^{-j}\tX^{-j} F^{-j}_{i+1}}\right)\leq 2\sum_{\substack{i' \in \clR^j, \\ i' \leq i}}c_{i'}.\]

The bound is similarly unchanged in the rounds $i+2, \ldots, i+l-1$. Hence we can say that at the beginning of the next round $i+l$ in which the $j$-th party communicates, it holds that
\[ \sfD_\infty\left(\sigma^{i+l}_{X^jX^{-j}\tX^{-j}F^{-j}_{i+l-1}}\middle\Vert\sigma^{i+l}_{X^j}\otimes\rho^{i+l}_{X^{-j}\tX^{-j}F^{-j}_{i+l-1}}\right) \leq 2\sum_{\substack{i' \in \clR^j, \\ i' < i+l}}c_{i'}.\]


\end{document}